\documentclass{article}

\usepackage{arxiv}

\usepackage[utf8]{inputenc} % allow utf-8 input
\usepackage[T1]{fontenc}    % use 8-bit T1 fonts
\usepackage{hyperref}       % hyperlinks
\usepackage{url}            % simple URL typesetting
\usepackage{booktabs}       % professional-quality tables
\usepackage{amsfonts}       % blackboard math symbols
\usepackage{nicefrac}       % compact symbols for 1/2, etc.
\usepackage{microtype}      % microtypography
\usepackage{lipsum}		% Can be removed after putting your text content
\usepackage{graphicx}
\usepackage{natbib}
\usepackage{doi}

\usepackage {amsmath}
\usepackage {amssymb}
\usepackage{amsthm}
\usepackage{xcolor}
\usepackage{tikz}
\usepackage{hyperref}       % hyperlinks
\usepackage{subfig}
\usepackage{mathtools, nccmath}
\usepackage{upgreek}
\usepackage[bb=boondox]{mathalfa}

\usepackage{graphicx}

\newcommand{\olp}{\mathsf{osLip}_{2,P^{1/2}}}
\newcommand{\lp}{\mathsf{Lip}_{\cU \to {2,P^{1/2}}}}

\newcommand{\I}{\mathbb{I}}

\newcommand{\real}{\mathbb{R}}

\renewcommand{\real}{\mathbb{R}}

\newcommand{\E}{\operatorname{\mathbb{E}}}

\newcommand{\bs}{\mathbf{B}}
\newcommand{\e}{\mathrm{e}}
\newcommand{\cU}{\mathcal{U}}

\newcommand{\0}{\mathbb{0}}

\DeclareSymbolFont{bbold}{U}{bbold}{m}{n}
\DeclareSymbolFontAlphabet{\mathbbold}{bbold}

\definecolor{gnblue1}{RGB}{0,36,71}   % 002447  darker
\definecolor{gnblue2}{RGB}{0,60,118}  % 003c76
\definecolor{gnblue3}{RGB}{0,85,164}  % 0055A4
\definecolor{gnblue4}{RGB}{0,108,212} % 006CD4
\definecolor{gnblue5}{RGB}{0,133,255}  % 0085ff
\definecolor{gnblue6}{RGB}{35,156,255} % 239cff
\definecolor{gnblue7}{RGB}{88,177,255} % 58b1ff

\newtheorem{theorem}{Theorem}
\newtheorem{lemma}[theorem]{Lemma}
\newtheorem{corollary}[theorem]{Corollary}
\newtheorem{proposition}[theorem]{Proposition}
\newtheorem{definition}[theorem]{Definition}
\newtheorem{assumption}[theorem]{Assumption}

\newtheorem{remark}{Remark}
\def\red{\hfill $\lhd$}

\definecolor{gnblue4}{RGB}{0,108,212} % 006CD4
\hypersetup{colorlinks=true, linkcolor=gnblue4, breaklinks=true, urlcolor=gnblue4, citecolor=gnblue4}

\title{Incremental Input-to-State Stability and Equilibrium
Tracking for Stochastic Contracting Dynamics}

\author{ \href{https://orcid.org/0000-0001-5066-4700}{\includegraphics[scale=0.06]{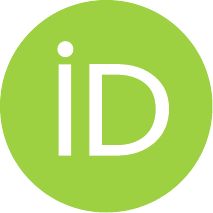}\hspace{1mm}Yu Kawano} \\
	Graduate School of Advanced Science\\ and Engineering,\\
	Hiroshima University,\\
	Higashi-Hiroshima, 739-8527, JP, \\
	\texttt{ykawano[at]hiroshima-u.ac.jp}\thanks{YK is supported in part by JST FOREST Program Grant Number JPMJFR222E.}\\
	\And
	\href{https://orcid.org/0009-0000-3444-0838}{\includegraphics[scale=0.06]{orcid.pdf}\hspace{1mm}Simone Betteti} \\
	Robust and Intelligent Autonomous Systems Lab,\\
	The Italian Institute of Artificial Intelligence\\ for Industry (AI4I),\\
	Torino, 10129, IT, \\
	\texttt{simone.betteti[at]ai4i.it} \\ \\
	\AND
    \href{https://orcid.org/0000-0001-5629-2565}{\includegraphics[scale=0.06]{orcid.pdf}\hspace{1mm}Alexander Davydov} \\
	Department of Mechanical Engineering\\ and Ken Kennedy Institute,\\
	Rice University,\\
	Houston, TX, 77005, US, \\
	\texttt{davydov[at]rice.edu} \\
    \And
    \href{https://orcid.org/0000-0002-4785-2118}{\includegraphics[scale=0.06]{orcid.pdf}\hspace{1mm}Francesco Bullo} \\
	Center for Control, Dynamical Systems and Computation,\\
	University of California at Santa Barbara,\\
	Santa Barbara, CA, 93106 US, \\
	\texttt{bullo[at]ucsb.edu}\thanks{FB is supported in part by
    AFOSR grant FA9550-22-1-0059. FB would like to thank Anand Gokhale for
    insightful conversations.}
}

\date{}

\renewcommand{\undertitle}{Technical Report}
\renewcommand{\shorttitle}{Incremental ISS and Equilibrium
Tracking for Stochastic Contracting Dynamics}

\hypersetup{
pdftitle={Incremental Input-to-State Stability and Equilibrium
Tracking for Stochastic Contracting Dynamics},
pdfauthor={Yu Kawano, Simone Betteti, Alexander Davydov, Francesco Bullo},
pdfkeywords={Stochastic Processes, Noise-input-to-state stability, Equilibrium Tracking},
}

\begin{document}
\maketitle

\begin{abstract}
	In this paper, we study the contractivity of nonlinear stochastic
  differential equations (SDEs) driven by deterministic inputs and Brownian
  motions.  Given a weighted~$\ell_2$-norm for the state space, we show
  that an SDE is incrementally noise- and input-to-state stable if its
  vector field is uniformly contracting in the state and uniformly
  Lipschitz in the input. This result is applied to error estimation for
  time-varying equilibrium tracking in the presence of noise affecting both
  the system dynamics and the input signals. We consider both
  Ornstein–Uhlenbeck processes modeling unbounded noise and Jacobi
  diffusion processes modeling bounded noise. Finally, we turn our
  attention to the associated Fokker–Planck equation of an SDE. For this
  context, we prove incremental input-to-state stability with respect to an
  arbitrary~$p$-Wasserstein metric when the drift vector field is uniformly
  contracting in the state and uniformly Lipschitz in the input with
  respect to an arbitrary norm.
\end{abstract}

% keywords can be removed
\keywords{Stochastic processes \and Equilibrium tracking \and Input-to-state stability \and Contraction theory}

\section{Introduction}
Equilibrium tracking is the ability of an input-driven
dynamical system to track an instantaneous input-dependent equilibrium
point as the input varies with time. This property is critically important
for dynamical systems solving optimization problems, where the equilibrium
point corresponds to the optimizer of a time-varying objective
function. Recent work~\citep{AD-VC-AG-GR-FB:23f} has established
equilibrium tracking guarantees for deterministic contracting dynamical
systems. The key result is that deterministic contracting systems can
maintain bounded tracking errors that depend explicitly on the rate of
change of the input signal. These results provide a rigorous foundation for
understanding how optimization algorithms perform in dynamic environments.

In this work, we extend equilibrium tracking theory to stochastic
differential equations, characterizing the effect of noise both in the
system dynamics and in the input signal. While deterministic equilibrium
tracking has been successfully applied to robotic motion planning and
energy management systems, these and other applications inevitably face
stochastic disturbances arising from environmental fluctuations,
measurement noise, and model uncertainties. Understanding how noise
degrades tracking performance is essential for designing robust
optimization algorithms and verifying safe behavior in practical
settings. Our analysis for SDEs quantifies this degradation through
explicit bounds on the expected tracking error as function of both the
intensity of stochastic perturbations and the geometric properties of the
equilibrium manifold.

\textit{Literature review:} Contraction theory has recently established
itself as a computationally-friendly modular robust stability notion for
dynamical systems and algorithms; e.g., see~\citep{WL-JJES:98,FB:26-CTDS}.
An important class of contracting dynamics is given by continuous-time
solvers of convex optimization problems: gradient descent~\citep{RIK:60},
primal-dual dynamics~\citep{GQ-NL:19}, distributed optimization and Nash
seeking dynamics~\citep{AG-AD-FB:23o}, and safe monotone
flows~\citep{AA-JC:25}.  Asymptotic and exponential stability of dynamical
systems solving convex optimization problems (not necessarily contracting)
has long been studied, e.g.,
see~\citep{JW-NE:11,AC-BG-JC:17,JC-SKN:19,NKD-SZK-MRJ:19} among many
others.

A main reference on stochastic contracting dynamical systems is
\citep{QCP-NT-JJES:09}; Theorem~2 and Corollary~1 therein establish
noise-to-state-stability (NSS) for systems without input; see also the
review in~\citep[Section~2.2.2]{HT-SJC-JJES:21}.
Extensions include: \citep{NT-JJS-QCP:10} on how synchronization protects
from noise, \citep[Theorem~1]{NMB-JJES:20} on continuous-time distributed
stochastic gradient, \citep{PHAN:21} on exponential contractivity in mean
square, and \citep{PHAN-LTH:26} on general decay rates.  In a distinct line
of investigation, \cite{SSA-SR:10} define the stochastic logarithmic norm
and \cite{ZA:22} the stochastic logarithmic Lipschitz constant to study
It\^o SDEs with multiplicative noise; see also~\citep{ZA-VS:22}.

Another related set of results concerns the contractivity of the
Fokker-Planck equation with respect to Wasserstein metrics.  When the drift
is the gradient descent of a strongly convex function, the Gibbs
distribution is a fixed point of the Fokker-Planck equation and that the
measure dynamics is contracting with respect to the Wasserstein metric,
e.g., see \cite[Chapter~2]{CV:09} and~\cite[Introduction]{FB-IG-AG:12}.
Recent related works on contractivity in the Wasserstein metric
include~\citep{JB-JJES:19,LC-FH-EM-LJR:25}; specifically,
\cite{JB-JJES:19} investigates the effect of distinct Brownian motions in
Wasserstein space.

Time-varying convex optimization problems have been extensively studied;
see the authoritative reviews \citep{AS-EDA-SP-GL-GBG:20,
  EDA-AS-SB-LM:20}. The work \citep{AD-VC-AG-GR-FB:23f} offers a
comprehensive treatment of equilibrium tracking and continuous-time solvers
for time-varying convex optimization, demonstrating applications to the
design and analysis of safety filters and control barrier functions.
Theoretical advances in equilibrium tracking include: \citep{ZM-FB-AGA:23r}
on the design of control barrier proximal dynamics (CBPD), a contracting
dynamics with a particularly favorable computational structure, and
\citep{ZM-FB-AGA:25ab} on the effects of discretization and quantization of
CBPD in safety filters.
Practical applications include: robot collision avoidance
\citep{MF-SP-VMP-AR:18,AD-VC-AG-GR-FB:23f}, and battery management problems with
electro-thermal constraints \citep{ZM-FB-AGA:23r, ZM-FB-AGA:25p}.

\textit{Contributions:} In Section~\ref{sec:incremental-NISS}, we establish
an incremental noise- and input-to-state stability (NISS) property for
SDEs. Our treatment generalizes known earlier results on (i) the
incremental ISS property of contracting dynamics with
input~\cite[Chapter~3]{FB:26-CTDS}, (ii) NSS of contracting
dynamics~\citep{QCP-NT-JJES:09}, and (iii) the mean-square response of
Ornstein-Uhlenbeck processes.  Given a weighted $\ell_2$ norm on the state
space, we consider vector fields that are uniformly contracting with
respect to the state and uniformly Lipschitz with respect to the input
signal.  For the mean-square distance between trajectories, we provide a
bound valid for all times and we simplify it in the limit of large
times. Additionally, we consider both the case of two stochastic
trajectories and the case of one stochastic and one deterministic
trajectory.

Section~\ref{sec:stoch-eq-tracking} contains the main result of this
paper. This section extends equilibrium tracking theory from deterministic
to stochastic settings, establishing rigorous bounds for tracking errors
when both system dynamics and input signals are subject to noise. We
consider noisy contracting dynamics driven by two distinct noise processes:
Ornstein-Uhlenbeck (OU) processes for unbounded noise and Jacobi diffusion
(JD) processes for bounded noise scenarios. For each noise model, the
analysis covers three increasingly more complex scenarios: (1)
deterministic inputs tracking deterministic equilibrium curves, (2)
stochastic inputs tracking deterministic curves, and (3) stochastic inputs
tracking stochastic curves. This hierarchy provides a complete
characterization of equilibrium tracking under uncertainty.  Our theorems
establish explicit, computable upper bounds on the expected squared
tracking error in terms of fundamental system parameters, including the
contraction rate, the input Lipschitz constant, the multiple relevant noise
intensities, and the rate of change of the reference trajectory. These
bounds reveal how tracking performance degrades with increasing noise and
input variation.

Finally, Section~\ref{sec:incremental-ISS-Wasserstein} considers the
Fokker-Planck equation in Wasserstein metric space. We extend the classic
results on contractivity of the Fokker-Planck equation when the drift is
the gradient descent of a strongly convex function.  Given an arbitrary
norm in the state space, we prove incremental input-to-state stability with
respect to arbitrary $p$-Wasserstein metrics for drift vector fields that
are uniformly contracting in the state and uniformly Lipschitz in the input
signal.

\textit{Paper organization}
We review preliminary concepts about contractivity, equilibrium tracking
and stochastic differential equations in Section~\ref{sec:preliminaries}.
Section~\ref{sec:incremental-NISS} presents our results on incremental
noise- and input-to-state stability. Section~\ref{sec:stoch-eq-tracking}
discusses stochastic equilibrium
tracking. Section~\ref{sec:incremental-ISS-Wasserstein} discusses
incremental input-to-state stability in Wasserstein space.  We provide some
concluding remarks in Section~\ref{sec:conclusions} and proofs of main
theorems in the Appendices.

%%%%%%%%%%%%%%%%%%%%%%%%%%%%%%%%%%%%%%%%%%%%%%%%%%%%%%%%%%%%%%%%%%%%%%%%%%%%%%%%%%%%%%%%%%%%%%%%%%%%%%%%%%%%%%%%%%%%%%%%%%%%%%%%%%%%
%%%%%%%%%%%%%%%%%%%%%%%%%%%%%%%%%%%%%%%%%%%%%%%%%%%%%%%%%%%%%%%%%%%%%%%%%%%%%%%%%%%%%%%%%%%%%%%%%%%%%%%%%%%%%%%%%%%%%%%%%%%%%%%%%%%%

\section{Preliminaries}
\label{sec:preliminaries}
\subsection{Notation}
The field of real numbers is denoted by~$\real$.  The~$n$-component vector
and~$n \times m$-matrix whose all components are~$0$ are denoted by~$\0_n$
and~$\0_{n \times m}$, respectively. The~$n \times n$ identity matrix is
denoted by~$\I_n$.  The Hadamard product (i.e. element-wise product) of two
vectors~$x,y \in \real^n$ is denoted by $x \odot y$.  For vectors~$x,y \in
\real^n$, $x < y$ means the element-wise inequality, i.e., $x_i < y_i$ for
all~$i= 1,\dots, n$. For symmetric matrices~$P, Q \in \real^{n \times n}$, $P \succ
Q$ means that~$P-Q$ is positive definite. For~$P \succ \0_{n
  \times n}$, $\|\cdot\|_{2,P^{1/2}}$ denote the $P$-weighted
$\ell_2$-norm, i.e., $\|x\|_{2,P^{1/2}} := \sqrt{x^\top P x}$ for $x \in
\real^n$. When~$P=\I_n$, the $\ell_2$-norm is denoted by $\| \cdot
\|_2$. Also, the induced matrix $\ell_2$-norm is denoted by $\| \cdot
\|_2$.  The one-sided Lipschitz constant of a mapping~$F:\real^n \to
\real^n$ with respect to norm~$\|\cdot\|_{2,P^{1/2}}$, denoted by~$\olp(F)$
is the minimum constant~$b \in \real$ satisfying~$(F(y)-F(x))^\top P (y-x)
\le b \| y -x \|_{2,P^{1/2}}^2$ for all~$(x,y) \in \real^n \times
\real^n$. The Lipschitz constant of a map~$F:\cU \to \real^n$ from normed
space~$(\| \cdot \|_\cU,\cU)$ to normed space~$(\| \cdot
\|_{2,P^{1/2}},\real^n)$, denoted by~$\lp (F)$ is the minimum
constant~$\ell \in \real$ satisfying~$\|F(v) - F(u)\|_{2,P^{1/2}} \le \ell
\| v- u\|_\cU$ for all~$(u,v) \in \cU \times \cU$.

%%%%%%%%%%%%%%%%%%%%%%%%%%%%%%%%%%%%%%%%%%%%%%%%%%%%%%%%%%%%%%%%%%%%%%%%%%%%%%%%%%%%%%%%%%%%%%%%%%%%%%%%%%%%%%%%%%%%%%%%%%%%%%%%%%%%
%%%%%%%%%%%%%%%%%%%%%%%%%%%%%%%%%%%%%%%%%%%%%%%%%%%%%%%%%%%%%%%%%%%%%%%%%%%%%%%%%%%%%%%%%%%%%%%%%%%%%%%%%%%%%%%%%%%%%%%%%%%%%%%%%%%%

\subsection{Input-to-State Stability of Deterministic Systems and Equilibrium Tracking}\label{sec:ISS_DS}
In this subsection, we recall~\cite[Corollary 3.17]{FB:26-CTDS} for
incremental input-to-state stability (ISS) and equilibrium
tracking~\cite[Theorem~2]{AD-VC-AG-GR-FB:23f} in the deterministic case. For the
sake of self-containedness, we recall these existing result.

Consider an ordinary differential equation (ODE):
\begin{align}\label{eq:ODE}
    \dot x(t) = F(x(t), u(t)).
\end{align}

\begin{proposition}[Incremental input-to-state stability]\label{prop:ISS}
    Given an input-dependent vector field $F:\real^{n} \times \cU \to \real^{n}$ and deterministic measurable inputs $u^{x},u^{y}:\real\to\cU$, where $\cU \subset \real^m$ is compact, consider a pair of systems:
%    \begin{subequations}
    \begin{align*}
        \dot x(t)&=F(x(t), u^{x}(t)),\\
        %--------------------------------------------
        \dot y(t)&=F(y(t), u^{y}(t)).
    \end{align*}        
%    \end{subequations}    
    Assume that there exists a matrix $P=P^{\top}\succ \0_{n \times n}$ such that
    \begin{enumerate}
    \renewcommand{\theenumi}{A\arabic{enumi}}
    
        \item \label{A1:ISS} Contraction in $x$: with respect to the state $x$, the map $F$ is strongly infinitesimally contracting with rate $c>0$ with respect to norm $\|\cdot\|$, uniformly in $u \in \cU$;
        
        \item \label{A2:ISS} Lipschitz in $u$: with respect to the input $u$, the map $F$ from normed space~$(\real^n,\|\cdot\|)$ to normed space~$(\cU, \|\cdot\|_\cU)$ is Lipschitz continuous with constant $\ell>0$, uniformly in $x \in \real^n$;
        
    \end{enumerate}
    Then, for any finite $t \ge 0$, the ODE~\eqref{eq:ODE} is incrementally ISS in the sense that for any trajectories~$x(t)$ and~$y(t)$ from initial conditions~$x(0) \in \real^n$ and~$y(0)\in \real^n$ subject to deterministic measurable inputs $u^{x}$ and $u^{y}$, we have
        \begin{equation}
            \|x(t) - y(t)\|
            %--------------------------------------------
            \leq \|x(0)-y(0)\| \e^{-c t}
            + \ell \int_{0}^{t}\e^{-c (t-\tau)}\|u^{x}(\tau)-u^{y}(\tau)\|_{\cU} \ d\tau.
        \end{equation}
\end{proposition}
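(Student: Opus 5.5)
The plan is to bound the upper right Dini derivative of $t \mapsto \|x(t)-y(t)\|$ and then close the estimate with a comparison (Grönwall-type) lemma. Throughout, write $e(t):=x(t)-y(t)$. Since the inputs are only measurable, $x$ and $y$ are Carathéodory solutions: absolutely continuous, and satisfying the ODE for almost every $t$. The function $t\mapsto\|e(t)\|$ is then absolutely continuous as well, being a composition of a Lipschitz norm with an absolutely continuous curve. Hence it suffices to establish a differential inequality at almost every $t$ and integrate it.

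The key step splits the vector field difference by adding and subtracting $F(y(t),u^x(t))$:
\begin{equation*}
\dot e(t) = \bigl(F(x(t),u^x(t)) - F(y(t),u^x(t))\bigr) + \bigl(F(y(t),u^x(t)) - F(y(t),u^y(t))\bigr).
\end{equation*}
Assumption~\ref{A1:ISS} is used through the standard characterization of strong infinitesimal contractivity via the one-sided Lipschitz constant (weak pairing / logarithmic norm). For the norm in question this gives, for a.e.\ $t$,
\begin{equation*}
D^+\|e(t)\| \le \mathrm{osL}(F(\cdot,u^x(t)))\,\|e(t)\| + \|F(y,u^x)-F(y,u^y)\| \le -c\|e(t)\| + \ell\|u^x(t)-u^y(t)\|_\cU .
\end{equation*}
The second inequality also uses Assumption~\ref{A2:ISS}. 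The first inequality follows from two properties of the one-sided derivative of the norm: it is subadditive in the direction, $D^+\|e\|$ along $a+b$ is at most (directional term along $a$) plus $\|b\|$; and the directional term along $F(x,u)-F(y,u)$ is at most $\mathrm{osL}\|x-y\|\le -c\|x-y\|$. In the weighted $\ell_2$ case with $P$, this can be done concretely. Whenever $e\neq0$,
\begin{equation*}
\tfrac{d}{dt}\tfrac12\|e\|_{2,P^{1/2}}^2 = e^\top P\dot e \le -c\|e\|^2_{2,P^{1/2}} + \|e\|_{2,P^{1/2}}\,\ell\|u^x-u^y\|_\cU,
\end{equation*}
by the definition of $\olp$ and Cauchy--Schwarz in the $P$-inner product. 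Dividing by $\|e\|_{2,P^{1/2}}$ then gives the claim. At times where $e=0$, the bound $D^+\|e\|\le \|\dot e\|\le \ell\|u^x-u^y\|_\cU$ holds directly.

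Finally, multiply the differential inequality by $\e^{ct}$. This gives $D^+(\e^{ct}\|e(t)\|)\le \ell\e^{ct}\|u^x(t)-u^y(t)\|_\cU$ almost everywhere. Integrating from $0$ to $t$ is legitimate because the function is absolutely continuous, and the right-hand side is integrable since $\cU$ is compact and hence the input difference is bounded. Multiplying back by $\e^{-ct}$ yields the stated bound.

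The main obstacle I expect is justifying the Dini-derivative inequality for a general (non-inner-product) norm at points where the norm is not differentiable, including $e=0$. I would handle this using the weak-pairing/log-norm equivalence from \cite[Chapter~3]{FB:26-CTDS}, together with subadditivity of the one-sided directional derivative of the norm, and treat the weighted $\ell_2$ case explicitly as above.
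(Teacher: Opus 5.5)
Your proof is correct. The paper does not prove this proposition itself; it recalls it from \cite[Corollary~3.17]{FB:26-CTDS}, and the standard argument there is the one you give: bound $D^+\|x(t)-y(t)\| \le -c\|x(t)-y(t)\| + \ell\|u^x(t)-u^y(t)\|_{\cU}$ using subadditivity of the one-sided norm derivative (weak pairing) together with the one-sided Lipschitz characterization of contractivity, then close with a Gr\"onwall/comparison step. Your handling of Carath\'eodory solutions via absolute continuity and of the instants where $x(t)=y(t)$ fills in the details cleanly, and the paper uses the same Dini-derivative-plus-comparison template, applied to the squared $P$-norm, in its proof of Theorem~\ref{thm:NISS}.
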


We now denote~$u$ by~$\theta$ to emphasize its role as a parameter and consider $\theta$-dependent equilibrium~$x^\star(\theta)$ that is a solution to~$F(x^\star(\theta), \theta) = \0_n$. Under uniform contractivity of~$F$ in $x$, for each~$\theta \in \cU$, $\theta$-dependent equilibrium~$x^\star(\theta)$ exists uniquely. Moreover, under Lipschitzness in $\theta$, we can estimate a tracking error of $x(t) - x^\star(\theta(t))$ as follows.

\begin{proposition}[Equilibrium Tracking]\label{prop:eqt}
    Given~$F:\real^{n} \times \cU \to \real^{n}$, where $\cU \subset \real^m$ is compact, we impose the same assumptions as Proposition~\ref{prop:ISS}. Let~$x^\star(\theta(t))$ be a time-varying equilibrium curve, i.e., $F(x^\star(\theta(t)), \theta(t)) \equiv \0_n$. Then, for any finite $t \ge 0$, for any initial condition~$x(0) \in \real^n$, and continuously differentiable $\theta:\real\to\cU$, the solution to the ODE~\eqref{eq:ODE} satisfies
        % \begin{multline*}
        %     \|x(t) - y(t)\|
        %     %--------------------------------------------
        %     \leq \|x(0)-y(0)\| \e^{-c t}
        %     \\
        %     + \ell \int_{0}^{t}\e^{-c (t-\tau)}\|u^{x}(\tau)-u^{y}(\tau)\|_{\cU} \ d\tau,
        % \end{multline*}
        \begin{equation}
            \|x(t) - x^\star(\theta(t))\|
            %--------------------------------------------
            \leq \|x(0)-x^\star(\theta(0))\| \e^{-c t}
            + \frac{\ell}{c} \int_{0}^{t}\e^{-c (t-\tau)}\|\dot{\theta}(\tau)\|_{\cU} \ d\tau,
        \end{equation}
    and
        \begin{equation}\label{eq:eqt_dtr}
        \limsup_{t \to \infty} \|x(t) - x^\star(\theta(t))\|
        %--------------------------------------------
        \le \frac{\ell}{c^2} \limsup_{t \to \infty} \|\dot{\theta}(\tau)\|_{\cU}.
\end{equation}
\end{proposition}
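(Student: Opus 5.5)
The plan is to first establish that the equilibrium curve $t \mapsto x^\star(\theta(t))$ is Lipschitz in $\theta$ with constant $\ell/c$, and then treat the tracking error as a perturbed contracting system.

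\emph{Step 1: Lipschitz bound on the equilibrium map.} For $\theta_1,\theta_2 \in \cU$, write $x_i = x^\star(\theta_i)$, so that $F(x_i,\theta_i) = \0_n$. I would compare $x_1$ with the trajectory of $\dot y = F(y,\theta_2)$ started at $y(0)=x_1$. Proposition~\ref{prop:ISS} with constant inputs $\theta_1,\theta_2$ gives
\[
\|x_1 - y(t)\| \le \|x_1 - x_1\|\e^{-ct} + \ell\|\theta_1-\theta_2\|_\cU \int_0^t \e^{-c(t-\tau)}d\tau .
\]
As $t\to\infty$, contraction makes $y(t)\to x_2$. Hence $\|x^\star(\theta_1)-x^\star(\theta_2)\| \le (\ell/c)\|\theta_1-\theta_2\|_\cU$. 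Consequently, for $C^1$ $\theta$, the curve $x^\star(\theta(t))$ is absolutely continuous with upper Dini derivative bounded by $(\ell/c)\|\dot\theta(t)\|_\cU$.

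\emph{Step 2: differential inequality for the error.} Set $e(t) = \|x(t)-x^\star(\theta(t))\|$. I would estimate its upper right Dini derivative $D^+e(t)$ by splitting the increment over $[t,t+h]$:
\[
\|x(t+h) - x^\star(\theta(t+h))\| \le \|x(t+h) - \phi_h\| + \|\phi_h - x^\star(\theta(t))\| + \|x^\star(\theta(t)) - x^\star(\theta(t+h))\|,
\]
where $\phi_h$ is the frozen-parameter flow of $F(\cdot,\theta(t))$ from $x^\star(\theta(t))$ over time $h$. Since $x^\star(\theta(t))$ is an equilibrium of this flow, $\phi_h = x^\star(\theta(t))$, so the middle term vanishes. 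A cleaner route uses the one-sided Lipschitz/contraction characterization: contraction with rate $c$ yields $D^+\|x(t)-z\| \le -c\|x(t)-z\|$ for any fixed equilibrium $z$ of $F(\cdot,\theta(t))$. Adding the drift of the moving target from Step 1 gives
\[
D^+ e(t) \le -c\, e(t) + \frac{\ell}{c}\|\dot\theta(t)\|_\cU .
\]

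\emph{Step 3: conclusion.} A Grönwall/comparison lemma for Dini derivatives yields the first inequality directly. For the limsup, bound $\|\dot\theta(\tau)\|_\cU$ for large $\tau$ by $\limsup\|\dot\theta\|_\cU+\varepsilon$. The integral $\int_0^t \e^{-c(t-\tau)}d\tau \le 1/c$ then gives $\ell/c^2$, since the contribution from the initial segment decays exponentially. Let $\varepsilon\to0$.

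The main obstacle is making Step 2 rigorous for a general (non-differentiable) norm and a merely absolutely continuous target. This requires the Dini-derivative version of contraction, namely $\mu$-based weak pairing or one-sided Lipschitz inequalities, combined with the triangle inequality for the target's motion. I would handle it via the standard curve-norm-derivative lemma from contraction theory.
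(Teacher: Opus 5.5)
Your proof is correct, but it follows a different route from the one the paper has in mind. The paper does not prove this proposition itself; it imports it from \citep{AD-VC-AG-GR-FB:23f}. Its proof of the stochastic analogue, Theorem~\ref{thm:trck_OU_DIDC}, shows the intended argument. There, $x^\star(\theta(t))$ is treated as a solution of the perturbed system $\dot z = F(z,\theta(t)) + v(t)$ with $v(t) = \frac{d}{dt}x^\star(\theta(t))$. This system is contracting with rate $c$ and $1$-Lipschitz in $v$. The incremental ISS bound of Proposition~\ref{prop:ISS} is then applied between $x$ (input $v\equiv 0$) and $x^\star(\theta(\cdot))$, and the bound $\|\frac{d}{dt}x^\star(\theta(t))\|\le\frac{\ell}{c}\|\dot\theta(t)\|_\cU$ is inserted. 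That reduction is shorter and uses Proposition~\ref{prop:ISS} as a black box. However, it presupposes that $x^\star\circ\theta$ is differentiable and asserts the derivative bound without proof. Your Step~1 supplies exactly this bound. Because it makes $x^\star\circ\theta$ locally Lipschitz, it also justifies the paper's route, with $v$ an a.e.\ defined, bounded, measurable input. Your Step~2 never differentiates $x^\star$, at the cost of invoking the Dini-derivative characterization of contractivity in a general norm. You can avoid even that lemma. Your frozen-flow splitting is vacuous as written, since the frozen flow started at $x^\star(\theta(t))$ is constant. Instead, freeze from $x(t)$: compare $x(t+h)$ with $\psi_h(x(t))$, where $\psi$ is the flow of $F(\cdot,\theta(t))$. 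Then $\|x(t+h)-\psi_h(x(t))\| = o(h)$, and Proposition~\ref{prop:ISS} gives $\|\psi_h(x(t))-x^\star(\theta(t))\|\le \e^{-ch}\|x(t)-x^\star(\theta(t))\|$, which yields your Dini inequality directly. Likewise, Step~1 needs no limit. Applying Proposition~\ref{prop:ISS} to the constant solutions $x^\star(\theta_1)$ and $x^\star(\theta_2)$ gives $(1-\e^{-ct})\|x^\star(\theta_1)-x^\star(\theta_2)\|\le\frac{\ell}{c}(1-\e^{-ct})\|\theta_1-\theta_2\|_\cU$ for any $t>0$.
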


%%%%%%%%%%%%%%%%%%%%%%%%%%%%%%%%%%%%%%%%%%%%%%%%%%%%%%%%%%%%%%%%%%%%%%%%%%%%%%%%%%%%%%%%%%%%%%%%%%%%%%%%%%%%%%%%%%%%%%%%%%%%%%%%%%%%
%%%%%%%%%%%%%%%%%%%%%%%%%%%%%%%%%%%%%%%%%%%%%%%%%%%%%%%%%%%%%%%%%%%%%%%%%%%%%%%%%%%%%%%%%%%%%%%%%%%%%%%%%%%%%%%%%%%%%%%%%%%%%%%%%%%%

\subsection{Stochastic Differential Equations}

In this paper, our goal is to generalize Propositions~\ref{prop:ISS} and~\ref{prop:eqt} for the ODE~\eqref{eq:ODE} to the stochastic differential equation (SDE) in the so-called It\^o sense:
\begin{align}\label{eq:SDE}
    dx_{t} =F(x_{t}, u(t)) dt + \Sigma (x_{t}, u(t)) d\bs_{t},
\end{align}
where $\Sigma: \real^{n} \times \cU \to \real^{n \times r}$ is a state- and input-dependent matrix, and $\bs_t$ is an $r$-dimensional Brownian motion. It is customary to introduce the following standard assumption for the Lipschitzness and linear growth.

\begin{assumption}[Lipschitzness and linear growth]\label{asm:lip}
    In the SDE~\eqref{eq:SDE}, the drift vector field~$F$ and the dispersion matrix~$\Sigma$ satisfy the global Lipschitz and linear growth conditions:
    \begin{subequations}
    \begin{align}
        &\|F(x, u) - F(y, u) \|_2  + \| \Sigma (x, u) - \Sigma (y, u) \|_{\mathsf{F}} \le L \| x - y\|_2,
        \label{eq:lip}\\
        %--------------------------------------------
        &\|F(x, u)\|_2^2 + \| \Sigma (x, u) \|_{\mathsf{F}}^2 \le L (1 + \|x\|_2^2)
        \label{eq:lg}
    \end{align}        
    \end{subequations}
    for each $x, y \in \real^n$ and every $u \in \cU$, where $L$ is a positive constant, and $\| X \|_{\mathsf{F}}^2 = \operatorname{trace}(X^\top X)$ is the Frobenius norm of the matrix $X$.
    \red
\end{assumption}

Assumption~\ref{asm:lip} of Lipschitz continuity and linear growth for the drift vector field $F$ and the dispersion matrix $\Sigma$ guarantees the unique existence of a strong solution to~\eqref{eq:SDE}, where we refer to~\cite[Section 5.2]{IK-SES:14} for the definition of a strong solution.

\begin{proposition}[Existence, uniqueness, and joint measurability of strong solutions]\label{prop:sol}
    For an SDE~\eqref{eq:SDE} with globally Lipschitz and linearly growing drift and dispersion (Assumption~\ref{asm:lip}), there exists the unique strong solution $\{x_t\}_{t \ge 0}$ with continuous sample paths and with finite second moment on finite horizons such that
    \begin{subequations}\label{eq:bd}
    \begin{align}
        &\E^{x_0} [\| x_t \|_2] \le (1 + \| x_0 \|_2) \e^{\frac{(1+L)t}{2}}, \quad \forall t \ge 0, 
        \label{eq1:bd}\\
        %--------------------------------------------
        &\E^{x_0} [\| x_t \|_2^2] \le (1 + \| x_0 \|_2^2) \e^{(1+L)t}, \quad \forall t \ge 0,
        \label{eq2:bd}
    \end{align}        
    \end{subequations}
    where $\E^{x_0} [ \cdot ]$ denotes the conditional expectation given $x_0$. Moreover, $\sup_{\tau \in [0, t]} \E [\| x_\tau \|^2] < + \infty$ for each finite $t > 0$ if the initial condition $x_0$ is square integrable, that is, $\E [ \| x_0\|^2 ] < \infty$.~\red
\end{proposition}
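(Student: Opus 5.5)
Since the input $u(\cdot)$ is deterministic and measurable, I will reduce the statement to the classical existence and uniqueness theorem for SDEs with time-dependent coefficients, e.g.~\cite[Theorem~5.2.9]{IK-SES:14}. Define $b(t,x) := F(x,u(t))$ and $\sigma(t,x) := \Sigma(x,u(t))$. These are jointly measurable in $(t,x)$, because $u$ is measurable and $F,\Sigma$ are continuous in $x$ for each $u$; if needed I will also assume measurability in $u$. By Assumption~\ref{asm:lip}, which holds uniformly in $u\in\cU$, the maps $b$ and $\sigma$ satisfy the global Lipschitz bound~\eqref{eq:lip} and the linear growth bound~\eqref{eq:lg} with the same constant $L$, uniformly in $t$. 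The cited theorem (Picard iteration together with a Gronwall argument and pathwise uniqueness) then gives existence and uniqueness of a strong solution with continuous sample paths and finite second moments on bounded intervals, provided $x_0$ is independent of $\bs$ and square integrable.

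For the moment bound~\eqref{eq2:bd}, I will apply It\^o's formula to $V(x)=1+\|x\|_2^2$. This gives
\[
dV(x_t) = \bigl(2x_t^\top F(x_t,u(t)) + \|\Sigma(x_t,u(t))\|_{\mathsf F}^2\bigr)dt + 2x_t^\top \Sigma(x_t,u(t))\, d\bs_t .
\]
Using $2x^\top F \le \|x\|_2^2+\|F\|_2^2$ and~\eqref{eq:lg}, the drift is bounded by $\|x\|_2^2 + L(1+\|x\|_2^2)\le (1+L)V(x)$.

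The main obstacle is that the stochastic integral is only a local martingale. I will handle it by localization: let $\tau_N$ be the exit time of $x_t$ from the ball of radius $N$. The stopped stochastic integral then has zero expectation, so
\[
\E^{x_0}[V(x_{t\wedge\tau_N})]\le V(x_0)+(1+L)\int_0^t \E^{x_0}[V(x_{s\wedge\tau_N})]\,ds .
\]
Gronwall's inequality yields $\E^{x_0}[V(x_{t\wedge\tau_N})]\le V(x_0)\e^{(1+L)t}$. Since sample paths are continuous, $\tau_N\to\infty$, and Fatou's lemma as $N\to\infty$ gives~\eqref{eq2:bd}.

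Next, \eqref{eq1:bd} follows from Jensen's inequality: $\E[\|x_t\|_2]\le(\E\|x_t\|_2^2)^{1/2}\le (1+\|x_0\|_2^2)^{1/2}\e^{(1+L)t/2}$, and $(1+\|x_0\|_2^2)^{1/2}\le 1+\|x_0\|_2$. Finally, I take expectations over $x_0$ in~\eqref{eq2:bd} and use monotonicity in $\tau$ to get $\sup_{\tau\in[0,t]}\E\|x_\tau\|^2\le(1+\E\|x_0\|^2)\e^{(1+L)t}<\infty$. Norm equivalence transfers this to any norm.
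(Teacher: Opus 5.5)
Your proposal is correct. The paper gives no argument for this proposition beyond the pointer to \cite[Section~5.2]{IK-SES:14}. Your first step is exactly what the paper relies on implicitly: freezing the input to obtain coefficients $b(t,x)=F(x,u(t))$, $\sigma(t,x)=\Sigma(x,u(t))$ that satisfy the global Lipschitz and linear growth hypotheses of \cite[Theorem~5.2.9]{IK-SES:14} uniformly in $t$.

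Where you add something is the moment estimate. The textbook theorem only gives $\E\|x_t\|_2^2\le C(1+\E\|x_0\|_2^2)\e^{Ct}$ on $[0,T]$, with an unspecified constant $C=C(L,T)$. Your localized It\^o--Gronwall computation for $V(x)=1+\|x\|_2^2$ instead delivers the explicit rate $1+L$, valid for all $t\ge 0$, that \eqref{eq2:bd} asserts. It also works conditionally on $x_0$ with no integrability assumption on $x_0$. The stopped stochastic integral has zero $\mathcal{F}_0$-conditional mean, and $V(x_{s\wedge\tau_N})\le\max\{V(x_0),1+N^2\}$ pointwise, so Gronwall applies to a bounded function.

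Three minor precisions:
\begin{enumerate}
\item The Borel measurability of $F$ and $\Sigma$ in $u$ that you invoke ``if needed'' is in fact needed. Continuity in $x$ alone does not make $t\mapsto\Sigma(x,u(t))$ measurable.
\item For a random initial condition that is not square integrable, existence follows from the square-integrable case. Truncate $x_0$ to $\{\|x_0\|_2\le k\}$ and use pathwise uniqueness.
\item The ``joint measurability'' in the proposition's title is automatic for your solution, because a continuous adapted process is progressively measurable. This is the property the paper later uses for Tonelli in the proof of Theorem~\ref{thm:NISS}.
\end{enumerate}
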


We next introduce two central tools in the analysis of SDEs: the infinitesimal generator~\cite[Theorem~7.3.3]{BO:13} and Dynkin's formula~\cite[Theorem~7.4.1]{BO:13}.

\begin{definition}[Infinitesimal generator]
    For $\varphi: \real^{n}\rightarrow\real$ of class $C^2$, the infinitesimal generator $\mathcal{L}$ of the SDE~\eqref{eq:SDE} acting on $\varphi$ is given by
    \begin{align}\label{eq:inf_gen}
        \mathcal{L} \varphi (x, u)
        %--------------------------------------------
        &= \frac{\partial \varphi  (x)}{\partial x} F(x, u) 
         +\frac{1}{2} \operatorname{trace}\bigl(\Sigma (x, u)^{\top} \operatorname{Hess} ( \varphi(x) )  \Sigma (x, u)\bigr).
    \end{align}
\end{definition}

\begin{proposition}[Dynkin’s formula]\label{prop:Dynkin}
Consider an SDE~\eqref{eq:SDE} with globally Lipschitz and linearly growing drift and dispersion (Assumption~\ref{asm:lip}) and with square-integrable initial condition $x_0$, that is, $\E [\| x_0 \|^2] < \infty$. Let~$\varphi: \real^{n}\rightarrow\real$ be of class $C^2$, $\mathcal{L}$ be the infinitesimal generator of the SDE~\eqref{eq:SDE} acting on~$\varphi$, and $t > 0$ be a finite time. Then, the unique strong solution~$\{x_t\}_{t \ge 0}$ to the SDE satisfies
\begin{align*}
    \E^{x_0}[\varphi(x_t)] 
    %--------------------------------------------
    = \varphi (x_0) + \E^{x_0} \left[\int_0^t (\mathcal{L} \varphi) (x_\tau, u(\tau)) d \tau\right],
\end{align*}
where $\E^{x_0}[\cdot]$ denotes the conditional expectation given $x_0$.~\red
\end{proposition}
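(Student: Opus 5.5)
Dynkin's formula is standard (Øksendal, Theorem~7.4.1), so the plan is to recover it from It\^o's formula plus a localization argument that removes the stochastic integral. First, apply It\^o's formula to $\varphi(x_t)$ along the strong solution of~\eqref{eq:SDE} given by Proposition~\ref{prop:sol}. This gives, almost surely,
\begin{align*}
\varphi(x_t) = \varphi(x_0) + \int_0^t (\mathcal{L}\varphi)(x_\tau,u(\tau))\,d\tau + \int_0^t \frac{\partial \varphi(x_\tau)}{\partial x}\Sigma(x_\tau,u(\tau))\,d\bs_\tau .
\end{align*}
This step only needs $\varphi\in C^2$ and that $u$ is deterministic and measurable. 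The input then acts as a time-dependent coefficient, and the generator~\eqref{eq:inf_gen} is exactly the drift term produced by It\^o's formula.

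Second, kill the martingale term by localization. Define the exit time $\tau_R=\inf\{s\ge0:\|x_s\|_2\ge R\}$. On $[0,t\wedge\tau_R]$ the integrand $\frac{\partial\varphi}{\partial x}\Sigma$ is bounded, because $\varphi\in C^2$ and, by the linear growth bound~\eqref{eq:lg}, $\|\Sigma\|_{\mathsf F}^2\le L(1+R^2)$. The stopped stochastic integral is therefore a true martingale with zero conditional mean given $x_0$. This yields the stopped identity
\begin{align*}
\E^{x_0}[\varphi(x_{t\wedge\tau_R})]=\varphi(x_0)+\E^{x_0}\Bigl[\int_0^{t\wedge\tau_R}(\mathcal{L}\varphi)(x_\tau,u(\tau))\,d\tau\Bigr].
\end{align*}

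Third, let $R\to\infty$. Continuity of the sample paths gives $\tau_R\to\infty$ almost surely, so the integrands converge pointwise. The main obstacle is justifying the exchange of limit and expectation, since $\varphi$ is only assumed $C^2$ with no growth condition. I would handle it in one of two ways.
\begin{itemize}
\item Follow Øksendal's setting and assume $\varphi$ has compact support, or more generally that $\varphi$ and $\mathcal{L}\varphi$ grow at most quadratically. Then $|\varphi(x_{t\wedge\tau_R})|\le C(1+\sup_{s\le t}\|x_s\|_2^2)$. The supremum on the right is integrable by the standard Burkholder--Davis--Gundy strengthening of the moment bounds~\eqref{eq2:bd}, which holds under square integrability of $x_0$. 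Dominated convergence then finishes both sides.
\item Otherwise, interpret the right-hand side as finite whenever the expectations exist.
\end{itemize}
In the paper's applications, $\varphi(x)=\|x-y\|^2_{2,P^{1/2}}$ is quadratic, so the quadratic-growth version suffices, and Proposition~\ref{prop:sol} supplies the needed integrability.
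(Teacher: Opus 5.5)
Your proof is correct once the growth hypothesis you flag is added, and you are right that it must be added. For an arbitrary $C^2$ function the statement is not well posed: for example, $\varphi(x)=x_1\e^{\|x\|_2^4}$ along the OU process~\eqref{eq:SDE_OU} has $\E[\varphi(x_t)]$ undefined for every $t>0$.

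The paper gives no argument of its own. It cites \O{}ksendal's Theorem~7.4.1, which rests on the same It\^o-formula step you use, but which is stated for time-homogeneous It\^o diffusions and for $\varphi\in C_0^2$. In that setting $\frac{\partial\varphi}{\partial x}\Sigma$ is bounded outright and no limit has to be exchanged.

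Your route gives what the paper actually needs, in two respects:
\begin{enumerate}
\item Applying It\^o's formula directly handles the time-varying, merely measurable input $u$. Appending time as a state to fit \O{}ksendal's autonomous framework would require the coefficients to be Lipschitz in $t$, which fails for such $u$.
\item Localizing with $\tau_R$ and then passing to the limit by dominated convergence, using $\E[\sup_{s\le t}\|x_s\|_2^2]<\infty$, covers the function the paper uses. That function is $\varphi(x,y)=\|x-y\|_{2,P^{1/2}}^2$ on the stacked $2n$-dimensional SDE. It is quadratic rather than compactly supported, and its generator has quadratic growth because $F$ and $\Sigma$ grow linearly.
\end{enumerate}
The citation buys brevity but, strictly, covers neither the stated generality nor this use.

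Commit to your first bullet and drop the second. Saying the identity holds ``whenever the expectations exist'' is not an argument, because finiteness of the two sides does not by itself provide the domination needed to let $R\to\infty$ in the stopped identity.

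One cosmetic point: your bound on the integrand over $[0,t\wedge\tau_R]$ presumes $\|x_0\|_2<R$. On $\{\|x_0\|_2\ge R\}$ one has $\tau_R=0$ and the stopped integral vanishes, so nothing breaks.
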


%%%%%%%%%%%%%%%%%%%%%%%%%%%%%%%%%%%%%%%%%%%%%%%%%%%%%%%%%%%%%%%%%%%%%%%%%%%%%%%%%%%%%%%%%%%%%%%%%%%%%%%%%%%%%%%%%%%%%%%%%%%%%%%%%%%%
%%%%%%%%%%%%%%%%%%%%%%%%%%%%%%%%%%%%%%%%%%%%%%%%%%%%%%%%%%%%%%%%%%%%%%%%%%%%%%%%%%%%%%%%%%%%%%%%%%%%%%%%%%%%%%%%%%%%%%%%%%%%%%%%%%%%

\subsection{Ornstein-Uhlenbeck Process}
An example of SDEs satisfying Assumption~\ref{asm:lip} of Lipschitz continuity and linear growth for the drift vector field $F$ and the dispersion matrix $\Sigma$ is an \emph{Ornstein-Uhlenbeck} (OU) process:
\begin{align}\label{eq:SDE_OU}
    d x_t = - c x_t dt + \frac{\sigma}{\sqrt{n}}  d\bs_{t},
\end{align}
where~$c, \sigma > 0$. The mean and expected squared norm of~$x_t$ are
\begin{align}
    \E [x_t] &= \E [x_0] \e^{-c t}, \nonumber\\
    \E [\|x_t\|_2^2] &= \e^{-2ct} \| x_0\|^2 + \frac{\sigma^2}{2c} (1 - \e^{-2ct}).
    \label{eq:SDE_OU_var}
\end{align}

%%%%%%%%%%%%%%%%%%%%%%%%%%%%%%%%%%%%%%%%%%%%%%%%%%%%%%%%%%%%%%%%%%%%%%%%%%%%%%%%%%%%%%%%%%%%%%%%%%%%%%%%%%%%%%%%%%%%%%%%%%%%%%%%%%%%
%%%%%%%%%%%%%%%%%%%%%%%%%%%%%%%%%%%%%%%%%%%%%%%%%%%%%%%%%%%%%%%%%%%%%%%%%%%%%%%%%%%%%%%%%%%%%%%%%%%%%%%%%%%%%%%%%%%%%%%%%%%%%%%%%%%%

\subsection{Jacobi Diffusion Process}
Even if Assumption~\ref{asm:lip} of Lipschitz continuity and linear growth for the drift vector field~$F$ and the dispersion matrix~$\Sigma$ does not hold, we can still work with weak solutions. Consider the following SDE:
\begin{align}\label{eq:SDE_JDP}
    d x_t = - c (x_t - \theta) dt + \sigma \operatorname{diag}(x_t \odot (a -x_t))^\frac{1}{2} d\bs_{t},
\end{align}
where $c > 0$ and $\theta, a > \0_n$. Each component of the SDE~\eqref{eq:SDE_JDP} is called a Jacobi diffusion (JD), and its stationary distribution is a $\beta$-distribution~\citep{SC-RK:21}.

The JD~\eqref{eq:SDE_JDP} does not satisfy the Lipschitz assumption~\eqref{eq:lip}, but does the linear growth assumption~\eqref{eq:lg}. Thus, the JD~\eqref{eq:SDE_JDP} admits a weak solution~\cite[Problem 3.15]{IK-SES:14}, which satisfies~\eqref{eq:bd}. Moreover, $\sup_{\tau \in [0, t]} \E [\| x_\tau \|^2] < + \infty$ for each finite $t > 0$ if the initial condition $x_0$ is square integrable, that is, $\E [ \| x_0\|^2 ] < \infty$. Also, by Feller's test for explosions~\cite[Theorem 5.29]{IK-SES:14}, the weak solution stay in the the interval $(\0_m, a) := (0, a_1) \times \cdots \times (0, a_m)$ almost surely if 
\begin{align}\label{eq:Feller}
    \frac{\sigma_\xi^2}{2c} a \le \theta \le \left( 1 - \frac{\sigma_\xi^2}{2c}\right) a,
    \quad \sigma_\xi^2 < c
\end{align}
Therefore, Dynkin’s formula can be applied on $(\0_m, a)$. From~\eqref{eq:Feller}, a JD process is helpful when modeling a bounded noise.

%%%%%%%%%%%%%%%%%%%%%%%%%%%%%%%%%%%%%%%%%%%%%%%%%%%%%%%%%%%%%%%%%%%%%%%%%%%%%%%%%%%%%%%%%%%%%%%%%%%%%%%%%%%%%%%%%%%%%%%%%%%%%%%%%%%%
%%%%%%%%%%%%%%%%%%%%%%%%%%%%%%%%%%%%%%%%%%%%%%%%%%%%%%%%%%%%%%%%%%%%%%%%%%%%%%%%%%%%%%%%%%%%%%%%%%%%%%%%%%%%%%%%%%%%%%%%%%%%%%%%%%%%

%%%%%%%%%%%%%%%%%%%%%%%%%%%%%%%%%%%%%%%%%%%%%%%%%%%%%%%%%%%%%%%%%%%%%%%%%%%%%%%%%%%%%%%%%%%%%%%%%%%%%%%%%%%%%%%%%%%%%%%%%%%%%%%%%%%%
%%%%%%%%%%%%%%%%%%%%%%%%%%%%%%%%%%%%%%%%%%%%%%%%%%%%%%%%%%%%%%%%%%%%%%%%%%%%%%%%%%%%%%%%%%%%%%%%%%%%%%%%%%%%%%%%%%%%%%%%%%%%%%%%%%%%

%%%%%%%%%%%%%%%%%%%%%%%%%%%%%%%%%%%%%%%%%%%%%%%%%%%%%%%%%%%%%%%%%%%%%%%%%%%%%%%%%%%%%%%%%%%%%%%%%%%%%%%%%%%%%%%%%%%%%%%%%%%%%%%%%%%%
%%%%%%%%%%%%%%%%%%%%%%%%%%%%%%%%%%%%%%%%%%%%%%%%%%%%%%%%%%%%%%%%%%%%%%%%%%%%%%%%%%%%%%%%%%%%%%%%%%%%%%%%%%%%%%%%%%%%%%%%%%%%%%%%%%%%

\section{Incremental Noise- and Input-to-State Stability}
\label{sec:incremental-NISS}
We present the first main result of this paper for contracting
SDEs. Specifically, we establish an incremental noise- and input-to-state
stability (NISS) property for SDEs as a generalization of
Proposition~\ref{prop:ISS} for ISS of ODEs.

\begin{theorem}[Incremental noise- and input-to-state stability]\label{thm:NISS}
    Given an input-dependent vector field $F:\real^{n} \times \cU \to \real^{n}$, a state- and input-dependent matrix $\Sigma : \real^{n} \times \cU \to \real^{n \times r}$, and deterministic measurable inputs $u^{x},u^{y}:\real\to\cU$, where $\cU \subset \real^m$ is compact, consider the realizations with independent noises of
%    \begin{subequations}
    \begin{align*}
        dx_{t}&=F(x_{t}, u^{x}(t)) dt + \Sigma (x_{t}, u^{x}(t))d\bs_{t}^x,\\
        %--------------------------------------------
        dy_{t}&=F(y_{t}, u^{y}(t)) dt + \Sigma (y_{t}, u^{y}(t))d\bs_{t}^y.
    \end{align*}        
%    \end{subequations}
    We impose Assumption~\ref{asm:lip} of Lipschitz continuity and linear growth for the drift vector field~$F$ and the dispersion matrix~$\Sigma$.
    We additionally assume that there exists a matrix $P=P^{\top}\succ \0_{n \times n}$ such that
    \begin{enumerate}
    \renewcommand{\theenumi}{A\arabic{enumi}}
    
        \item \label{A1:NISS} Contraction in $x$: there exists $c>0$ such that $\olp (F) \le - c$, uniformly in $u \in \cU$;
        
        \item \label{A2:NISS} Lipschitz in $u$: there exists $\ell>0$ such that $\lp (F) \le \ell$, uniformly in $x \in \real^n$;
        
        \item \label{A3:NISS} Bounded dispersion: the dispersion is uniformly bounded in the $P$-weighted Frobenius norm, namely
        \begin{align}\label{eq1:A3:NISS}
            \sigma_x^{2}
            %--------------------------------------------
            :=\sup_{(x,u) \in \real^n \times \cU} \|\Sigma (x,u)\|^{2}_{\mathsf{F},P^{1/2}} 
             =\sup_{(x,u) \in \real^n \times \cU} \operatorname{trace}\bigl(\Sigma (x,u)^{\top} P \Sigma (x,u)\bigr)<+\infty.
        \end{align}
    \end{enumerate}
    Then, for each $\alpha\in(0,1)$ and any finite $t \ge 0$,
    \begin{enumerate}
        \item \label{itm1:NISS} the SDE~\eqref{eq:SDE} is incrementally NISS in the sense that, for any two realizations $x_t$ and $y_t$ from random square integrable initial conditions $x_0$ and $y_0$ and subject to deterministic measurable inputs $u^{x}$ and $u^{y}$ and different realizations of the Brownian motion, we have
        \begin{align}\label{eq1:NISS}
            &\E \bigl[ \|x_{t} - y_{t}\|^{2}_{2,P^{1/2}} \bigr]
            \nonumber\\
            %--------------------------------------------
            &\quad \leq \E \bigl[ \|x_{0}-y_{0}\|^{2}_{2,P^{1/2}} \bigr]\e^{-2c\alpha t}
            + \frac{1}{\alpha} \frac{\sigma_x^{2}}{c}(1-\e^{-2c\alpha t})
            +\frac{1}{1-\alpha}\frac{\ell^{2}}{2c}\int_{0}^{t}\e^{-2c\alpha(t-\tau)}\|u^{x}(\tau)-u^{y}(\tau)\|_{\cU}^{2}\ d\tau,
        \end{align}
        and, without assuming the measurability of $u^{x}$ or $u^{y}$,
        \begin{align}\label{eq2:NISS}
            &\limsup_{t \to \infty} \; \E \bigl[ \|x_{t} - y_{t}\|^{2}_{2,P^{1/2}} \bigr]
             \leq \frac{1}{\alpha} \frac{\sigma_x^{2}}{c} + \frac{1}{\alpha (1-\alpha)} \frac{\ell^{2}}{4c^2} \limsup_{t \to \infty} \|u^{x}(t)-u^{y}(t)\|_{\cU}^{2};
        \end{align}

        \item \label{itm2:NISS}  for any solution $y(t)$ to the ODE~\eqref{eq:ODE} and any realization $x_t$ of the SDE~\eqref{eq:SDE} from random square integrable initial condition $x_0$ under deterministic measurable inputs $u^y$ and $u^x$, we have
        \begin{align}\label{eq3:NISS}
            &\E \bigl[ \|x_{t} - y(t)\|^{2}_{2,P^{1/2}} \bigr]
            \nonumber\\
            %--------------------------------------------
            &\quad \leq \E \bigl[ \|x_{0}-y(0)\|^{2}_{2,P^{1/2}} \bigr]\e^{-2c\alpha t}
            +\frac{1}{\alpha}\frac{\sigma_x^{2}}{2c}(1-\e^{-2c\alpha t})
            +\frac{1}{1-\alpha} \frac{\ell^{2}}{2c}\int_{0}^{t}\e^{-2c\alpha(t-\tau)}\|u^{x}(\tau)-u^y(\tau)\|_{\cU}^{2}\ d\tau,
        \end{align}
        and, without assuming the measurability of $u^{x}$ or $u^y$,
        \begin{align*}
            &\limsup_{t \to \infty} \; \E \bigl[ \|x_{t} - y(t)\|^{2}_{2,P^{1/2}} \bigr]
             \leq \frac{1}{\alpha} \frac{\sigma_x^{2}}{2c} + \frac{1}{\alpha (1-\alpha)} \frac{\ell^{2}}{4c^2} \limsup_{t \to \infty} \|u^{x}(t)-u^y(t)\|_{\cU}^{2}.
        \end{align*}
    \end{enumerate}
\end{theorem}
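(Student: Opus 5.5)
Our approach is to apply Dynkin's formula (Proposition~\ref{prop:Dynkin}) to the stacked SDE for $z_t := (x_t, y_t) \in \real^{2n}$. This stacked SDE is driven by the $2r$-dimensional Brownian motion $(\bs^x_t, \bs^y_t)$, has block-diagonal dispersion $\operatorname{diag}(\Sigma(x,u^x),\Sigma(y,u^y))$, and inherits Assumption~\ref{asm:lip}. We take the quadratic test function
\[
\varphi(x,y) = \|x-y\|_{2,P^{1/2}}^2 .
\]
Its Hessian with respect to $(x,y)$ is $2\begin{bmatrix} P & -P\\ -P & P\end{bmatrix}$. Because the noises are independent, the cross blocks do not contribute to the trace term, and the generator is
\[
\mathcal{L}\varphi = 2(x-y)^\top P\bigl(F(x,u^x)-F(y,u^y)\bigr) + \operatorname{trace}(\Sigma_x^\top P\Sigma_x) + \operatorname{trace}(\Sigma_y^\top P\Sigma_y).
\]
By \ref{A3:NISS}, the trace terms are bounded by $2\sigma_x^2$. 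In item~\ref{itm2:NISS}, $y$ is deterministic (zero dispersion), so only $\sigma_x^2$ appears. This accounts for the factor-$2$ difference between the two statements.

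For the drift term, we split
\[
F(x,u^x)-F(y,u^y) = \bigl[F(x,u^x)-F(y,u^x)\bigr] + \bigl[F(y,u^x)-F(y,u^y)\bigr].
\]
Assumption~\ref{A1:NISS} bounds the first bracket's contribution by $-2c\|x-y\|^2$. Cauchy--Schwarz in the $P$-inner product together with \ref{A2:NISS} bounds the second by $2\ell\|x-y\|\,\|u^x-u^y\|_\cU$. Young's inequality then gives
\[
2\ell ab \le 2c(1-\alpha)a^2 + \frac{\ell^2}{2c(1-\alpha)}b^2,
\]
so that
\[
\mathcal{L}\varphi \le -2c\alpha\,\varphi + 2\sigma_x^2 + \frac{\ell^2}{2c(1-\alpha)}\|u^x-u^y\|_\cU^2 .
\]
Next we pass to $m(t):=\E[\varphi(x_t,y_t)]$. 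Dynkin's formula, conditioned on $(x_0,y_0)$ and then averaged, is valid because the initial data are square integrable and $\sup_{[0,t]}\E\|z_\tau\|^2<\infty$ by Proposition~\ref{prop:sol}. Combined with Fubini, it gives $m(t)=m(s)+\int_s^t \E[\mathcal{L}\varphi]\,d\tau$. Hence $m$ is absolutely continuous and satisfies the differential inequality $\dot m \le -2c\alpha m + 2\sigma_x^2 + \frac{\ell^2}{2c(1-\alpha)}\|u^x-u^y\|^2_\cU$ almost everywhere. A comparison (Grönwall) argument applied to $\e^{2c\alpha t}m(t)$ yields \eqref{eq1:NISS}, since $\int_0^t \e^{-2c\alpha(t-\tau)}2\sigma_x^2\,d\tau = \frac{\sigma_x^2}{c\alpha}(1-\e^{-2c\alpha t})$. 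Replacing $2\sigma_x^2$ by $\sigma_x^2$ gives \eqref{eq3:NISS}.

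For the limsup bounds, we bound $\|u^x-u^y\|_\cU^2$ for $\tau\ge T$ by $\sup_{\tau\ge T}\|u^x(\tau)-u^y(\tau)\|_\cU^2$. This supremum is finite because $\cU$ is compact. The contribution from $[0,T]$ decays exponentially. Integrating the exponential gives the factor $\frac{1}{2c\alpha}$, which turns $\frac{\ell^2}{2c(1-\alpha)}$ into $\frac{\ell^2}{4c^2\alpha(1-\alpha)}$, and letting $T\to\infty$ finishes the argument. The phrase ``without measurability'' suggests the authors intend this step to use only boundedness of the inputs. To make it rigorous I would run the differential inequality directly with the supremum bound, so that no integral of $u$ is ever needed; this still requires $\mathcal{L}\varphi$ to be integrable in time, which I would handle by bounding it pointwise.

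The main obstacle is justifying the use of Dynkin's formula and its time derivative rigorously. Proposition~\ref{prop:Dynkin} is stated for a fixed $\varphi\in C^2$ and a given SDE, so we must check four things: that the stacked system satisfies Assumption~\ref{asm:lip}; that $\E[\mathcal{L}\varphi(z_\tau)]$ is integrable on $[0,t]$, which follows from the quadratic growth of $\varphi$ and the linear growth of $F$, $\Sigma$ together with the second-moment bound \eqref{eq2:bd}; that passing from the conditional expectation $\E^{x_0}$ to the full expectation is legitimate; and that the resulting integral inequality implies the exponential bound. That last point needs a Grönwall lemma for an integral inequality of the form $m(t)-m(s)\le\int_s^t(-2c\alpha m+g)\,d\tau$.
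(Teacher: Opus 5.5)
Your proposal is correct, and its skeleton matches the paper's proof. Both use the stacked SDE with block-diagonal dispersion and the test function $\|x-y\|_{2,P^{1/2}}^2$. Both get the trace bound $2\sigma_x^2$ (resp.\ $\sigma_x^2$ in item~\ref{itm2:NISS}), split the drift via \ref{A1:NISS}--\ref{A2:NISS}, use Young's inequality with the same constants, and finish with Dynkin's formula plus a comparison lemma.

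The execution differs, and it favors your version:
\begin{itemize}
\item The paper keeps the cross term $2\ell\|u^x-u^y\|_{\cU}\sqrt{\varphi}$ and works with conditional expectations $g(t)=\E^{x_s,y_s}[\varphi(x_t,y_t)]$. It needs Tonelli and Jensen ($\E\sqrt{\varphi}\le\sqrt{\E\varphi}$) to close the inequality, and only then applies Young and a Dini-derivative comparison. Because you apply Young pointwise, your bound is affine in $\varphi$. You can therefore take full expectations at once and argue via absolute continuity of $m(t)=\E[\varphi(x_t,y_t)]$, with no Jensen step.
\item The paper identifies $g(s)$ with $\|x_s-y_s\|_{2,P^{1/2}}^2$ while the conditioning time $s$ varies. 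This leads it to an intermediate almost-sure bound on $\|x_t-y_t\|_{2,P^{1/2}}^2$ itself, which fails for genuinely noisy dynamics (e.g.\ two independent OU processes started at the same point). Your bound is on $m(t)$ from the start, which is what the theorem claims.
\item For \eqref{eq2:NISS}, the paper bounds the input term by $\sup_{\tau\in[0,t]}\|u^x(\tau)-u^y(\tau)\|_{\cU}^2$. As $t\to\infty$ this gives the global supremum, not the $\limsup$. Your splitting at $T$, followed by $T\to\infty$, is what actually produces the stated $\limsup$.
\end{itemize}

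One caveat on your side remark: a pointwise bound cannot replace measurability. Without measurable inputs, neither the SDE nor $\tau\mapsto\mathcal{L}\varphi(x_\tau,y_\tau)$ is well defined. The ``without measurability'' clause can only sensibly mean that the $\limsup$ step uses nothing beyond boundedness of $u^x-u^y$, and your argument already respects that.
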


\begin{proof}
(Proof of item~\ref{itm1:NISS})
we consider the combined SDE:
    \begin{align}\label{pf1:NISS}
        \begin{bmatrix}
            dx_{t}\\
            dy_{t}
        \end{bmatrix}
        %--------------------------------------------
        &=
        \begin{bmatrix}
            F(x_{t}, u^{x}(t))\\
            F(y_{t}, u^{y}(t))
        \end{bmatrix} dt + 
        \begin{bmatrix}
            \Sigma (x_{t}, u^{x}(t)) & \0_{n \times r}\\
            \0_{n \times r} & \Sigma (y_{t}, u^{y}(t))
        \end{bmatrix} d\bs_{t},
    \end{align}
    where $\bs_{t}$ is a $2r$ dimensional Brownian motion, i.e., we consider independent noise. For the combined SDE~\eqref{pf1:NISS}, we consider a class $C^2$ function $\varphi (x,y) := \| x -y \|_{2,P^{1/2}}^2$ and compute the Hessian matrix:
    \begin{align*}
        \operatorname{Hess} \bigl(\| x -y \|_{2,P^{1/2}}^2\bigr) 
        %--------------------------------------------
        &= \operatorname{Hess} \left(\begin{bmatrix}
            x\\
            y
        \end{bmatrix}^{\top}
        \begin{bmatrix}
            P & -P\\
            -P & P
        \end{bmatrix}
        \begin{bmatrix}
            x\\
            y
        \end{bmatrix}\right)
        %--------------------------------------------
        = 2\begin{bmatrix}
            P & -P\\
            -P & P
        \end{bmatrix},
    \end{align*}
    and thus, from the bounded dispersion assumption (Assumption~\ref{A3:NISS})    
    \begin{align}\label{pf2:NISS}
        &\operatorname{trace}\Biggl(\begin{bmatrix}
            \Sigma (x,u^x) & \0_{n \times r}\\
            \0_{n \times r} & \Sigma (y,u^y)
        \end{bmatrix}^\top 
        \begin{bmatrix}
                P & -P\\
                -P & P
            \end{bmatrix}
        \begin{bmatrix}
            \Sigma (x,u^x) & \0_{n \times r}\\
            \0_{n \times r} & \Sigma (y,u^y)
        \end{bmatrix}\Biggr)
        \nonumber\\
        %--------------------------------------------
        &\quad =\operatorname{trace} (\Sigma (x,u^x)^{\top} P \Sigma (x,u^x)) 
          + \operatorname{trace}(\Sigma (y,u^y)^{\top} P \Sigma (y,u^y))
        \nonumber\\
        %--------------------------------------------
        &\quad \leq 2 \sup_{(x,u^x) \in \real^n \times \cU} \operatorname{trace}(\Sigma (x,u^x)^{\top} P \Sigma (x,u^x))
         =2\sigma_x^{2}.
    \end{align}
    Given the contractivity (with respect to $x$) and Lipschitzness (with respect to $u$) assumptions (Assumptions~\ref{A1:NISS} and~\ref{A2:NISS}) on~$F$ and from~\eqref{eq:inf_gen} and~\eqref{pf2:NISS}, the infinitesimal generator on the function $\varphi (x,y)= \| x -y \|_{2,P^{1/2}}^2$ satisfies
    \begin{align}\label{pf3:NISS}
        \mathcal{L} \| x_t -y_t \|_{2,P^{1/2}}^2 
        %--------------------------------------------
        &\le - 2c \| x_t -y_t \|_{2,P^{1/2}}^2  
         + 2 \ell \| x_t -y_t \|_{2,P^{1/2}} \| u^x(t) - u^y(t) \|_{\cU} 
         + 2 \sigma_x^2 \; \mbox{ a.s.}
    \end{align}

    Next, we apply Dynkin's formula in Proposition~\ref{prop:Dynkin}. For any finite $t \ge s \ge 0$, taking the conditional expectation given $x_s$ and $y_s$ leads to
    \begin{align}\label{pf4:NISS}
        &\E^{x_s, y_s} \bigl[ \| x_t -y_t \|_{2,P^{1/2}}^2 \bigr] 
        - \| x_s - y_s \|_{2,P^{1/2}}^2 
        \nonumber\\
       %--------------------------------------------
       &\quad =  \E^{x_s, y_s} \left[\int_s^t \mathcal{L} \| x_\tau - y_\tau \|_{2,P^{1/2}}^2 d \tau\right] 
       \nonumber\\
       %--------------------------------------------
       &\quad \le - 2c \E^{x_s, y_s} \left[\int_s^t  \| x_\tau -y_\tau \|_{2,P^{1/2}}^2 d \tau\right] 
       \nonumber\\
        &\quad \quad + 2 \ell \E^{x_s, y_s} \left[\int_s^t  \| x_\tau -y_\tau \|_{2,P^{1/2}} \| u^x(\tau) - u^y(\tau) \|_{\cU}  d \tau\right]
         + \int_s^t 2 \sigma_x^2 d \tau \; \mbox{ a.s.},
    \end{align}
    where the inequality follows from \eqref{pf3:NISS}. 
    
    From~\eqref{eq1:bd} and the triangular inequality, it follows that
    \begin{align*}
    \int_s^t \E^{x_s, y_s} [\| x_\tau -y_\tau \|_2] d \tau
     \le  \int_s^t (\E^{x_s} [\| x_\tau \|_2] + \E^{y_s} [\| y_\tau \|_2]) d \tau
     \le  (2 + \| x_0 \|_2 + \| y_0 \|_2)
    \int_s^t \e^{\frac{(1+L)\tau}{2}} d \tau \; \mbox{ a.s.}
    \end{align*}
    Thus, $\int_s^t \E^{x_s, y_s} [ \| x_\tau -y_\tau \|_{2,P^{1/2}} ] d \tau$ is finite almost surely. Similarly, from~\eqref{eq2:bd} and $\| x -y \|_2^2 \le 2 (\| x\|_2^2 + \|y \|_2^2)$, $\int_s^t \E^{x_s, y_s} [ \| x_\tau -y_\tau \|_{2,P^{1/2}}^2 ] d \tau$ is finite almost surely. Therefore, from Tonelli's theorem for measurable non-negative functions~\cite[Theorems 4.4-4.5]{HB:11}, we can exchange the order of taking the time-integration and expectation in~\eqref{pf4:NISS}. Namely, we have
    \begin{align}\label{pf5:NISS}
        \E^{x_s, y_s} \bigl[ \| x_t -y_t \|_{2,P^{1/2}}^2 \bigr] 
        - \| x_s - y_s \|_{2,P^{1/2}}^2 
         \le& - 2c \int_s^t \E^{x_s, y_s} \bigl[ \| x_\tau -y_\tau \|_{2,P^{1/2}}^2 \bigr] d \tau
       \nonumber\\
        & + 2 \ell \int_s^t \| u^x(\tau) - u^y(\tau) \|_{\cU} \E^{x_s, y_s} \bigl[\| x_\tau -y_\tau \|_{2,P^{1/2}} \bigr]  d \tau
        \nonumber\\
        & + \int_s^t 2 \sigma_x^2 d \tau \; \mbox{ a.s.}
    \end{align}

    Since $\sqrt{\cdot}$ is a concave function, Jensen's inequality~\cite[Proposition 4.9]{HB:11} yields
    \begin{align}\label{pf6:NISS}
    &\E^{x_s, y_s} [\| x_\tau -y_\tau \|_{2,P^{1/2}}] 
     \le \sqrt{\E^{x_s, y_s} \bigl[\| x_\tau -y_\tau \|_{2,P^{1/2}}^2\bigr] }.
    \end{align}
    Denoting $g(t) := \E^{x_s, y_s} \bigl[\| x_t -y_t \|_{2,P^{1/2}}^2 \bigr]$, \eqref{pf5:NISS} and \eqref{pf6:NISS} lead to
    \begin{align*}
        g(t) - g(s)
        %--------------------------------------------
        &\le \int_s^t \Bigl( -2 c g(\tau) + 2 \sigma_x^2
        + 2 \ell \| u^x(\tau) - u^y(\tau) \|_{\cU} \sqrt{g(\tau)} \Bigr) d\tau \; \mbox{ a.s.},
    \end{align*}
    and consequently,
    \begin{align*}
        \frac{g(t) - g(s)}{t-s} 
        %--------------------------------------------
        &\le  \frac{1}{t-s} \int_s^t \Bigl( -2 c g(\tau)  + 2 \sigma_x^2   
        + 2 \ell \| u^x(\tau) - u^y(\tau) \|_{\cU} \sqrt{g(\tau)} \Bigr) d\tau \; \mbox{ a.s.}
    \end{align*}
    for any $t > s$. Since $g(t)$ is continuous, its Dini derivative~\cite[Chapter 2.4.2]{FB:26-CTDS} is 
    \begin{align*}
        D^+ g(s) 
        %--------------------------------------------
        := \lim_{t \to s^+} \frac{g(t) - g(s)}{t-s} 
        \le  -2 c g(s) + 2 \sigma_x^2 
        + 2 \ell \| u^x(s) - u^y(s) \|_{\cU} \sqrt{g(s)}  \; \mbox{ a.s.}
    \end{align*}
    Using non-negativity of $g(s) = \| x_s - y_s \|_{2,P^{1/2}}^2$, compute by using Young's inequality: 
    \begin{align}\label{eq:alp}
        &2 \ell \| u^x(s) - u^y(s) \|_{\cU} \sqrt{g(s)} \le 2c (1-\alpha) g(s) + \frac{\ell^2}{2c (1-\alpha)} \| u^x(s) - u^y(s) \|_{\cU}^2 \; \mbox{ a.s.},
    \end{align}
    where $\alpha \in (0,1)$, and consequently,
    \begin{align*}
        D^+ g(s) 
        %--------------------------------------------
        & \le -2 c \alpha g(s) + 2 \sigma_x^2 + \frac{\ell^2}{2c (1-\alpha)} \| u^x(s) - u^y(s) \|_{\cU}^2 \; \mbox{ a.s.}
    \end{align*}
    Since this holds for arbitrary finite $s \ge 0$, the comparison principle~\cite[Lemma 3.4]{HKK:02} yields
    \begin{align*}
        \| x_t - y_t \|_{2,P^{1/2}}^2 &\le \| x_0 - y_0 \|_{2,P^{1/2}}^2 \e^{-2 c \alpha t} + \frac{\sigma_x^2}{c \alpha} (1 - \e^{-2 c \alpha t})
        \nonumber\\
        &\quad + \frac{\ell^2}{2c (1-\alpha)} \int_0^t \e^{-2 c \alpha (t-\tau)}  \| u^x(\tau) - u^y(\tau) \|_{\cU}^2 d \tau \; \mbox{ a.s.},
    \end{align*}
    where recall the compactness of $\cU$ and the measurability of both $u^x$ and $u^y$, guaranteeing the boundedness of the integral. Taking the expectation concludes~\eqref{eq1:NISS}. 

    To show \eqref{eq2:NISS}, we compute an upper bound on the right-hand side of~\eqref{eq1:NISS}, yielding
    \begin{align*}
            \E \bigl[ \|x_{t} - y_{t}\|^{2}_{2,P^{1/2}} \bigr]
            \leq& \E \bigl[ \|x_{0}-y_{0}\|^{2}_{2,P^{1/2}} \bigr]\e^{-2c\alpha t}+\frac{\sigma_x^{2}}{c\alpha}(1-\e^{-2c\alpha t})
            \nonumber\\
            & +\frac{\ell^{2}}{4c^2 \alpha(1-\alpha)} (1-\e^{-2c\alpha t}) \sup_{\tau \in [0,t]} \|u^{x}(\tau)-u^{y}(\tau)\|_{\cU}^{2}.
    \end{align*}
    This holds for any finite $t \ge 0$. The compactness of $\cU$ implies that $\limsup_{t \to \infty} \|u^{x}(\tau)-u^{y}(\tau)\|_{\cU}^{2}$ is always finite even without assuming the measurability of $u^x$ or $u^y$. Also, from the square integrability assumption of the random initial conditions $x_0$ and $y_0$, the right-hand side is finite for any $t \ge 0$. Therefore, we can take the limit superiors of both sides, leading to \eqref{eq2:NISS}.

%%%%%%%%%%%%%%%%%%%%%%%%%%%%%%%%%%%%%%%%%%%%%%%%%%%%%%%%%%%%%%%%%%
%%%%%%%%%%%%%%%%%%%%%%%%%%%%%%%%%%%%%%%%%%%%%%%%%%%%%%%%%%%%%%%%%%

(Proof of item~\ref{itm2:NISS})
    Instead of~\eqref{pf1:NISS}, consider 
    \begin{align*}
        \begin{bmatrix}
            dx_{t}\\
            dy_{t}
        \end{bmatrix}
        %--------------------------------------------
        =
        \begin{bmatrix}
            F(x_{t}, u^{x}(t))\\
            F(y_{t}, u^y(t))
        \end{bmatrix} dt 
        + 
        \begin{bmatrix}
            \Sigma (x_{t}, u^x) & \0_{n \times r}\\
            \0_{n \times r} & \0_{n \times r}
        \end{bmatrix} d\bs_{t}.
    \end{align*}
    Also, instead of~\eqref{pf2:NISS}, compute
    \begin{align*}
        &\operatorname{trace}\Biggl(\begin{bmatrix}
            \Sigma (x, u^x) & \0_{n \times r}\\
            \0_{n \times r} & \0_{n \times r}
        \end{bmatrix}^\top
        \begin{bmatrix}
                P & -P\\
                -P & P
            \end{bmatrix}
        \begin{bmatrix}
            \Sigma (x, u^x) & \0_{n \times r}\\
            \0_{n \times r} & \0_{n \times r}
        \end{bmatrix}\Biggr)
        \nonumber\\
        %--------------------------------------------
        &\quad =\operatorname{trace} (\Sigma (x, u^x)^{\top} P \Sigma (x, u^x))
         \leq \sup_{(x, u^x) \in \real^n \times \cU} \operatorname{trace}(\Sigma (x, u^x)^{\top} P \Sigma (x, u^x))
        %--------------------------------------------
        = \sigma_x^{2}.
    \end{align*}
    Thus, instead of~\eqref{pf3:NISS}, we have
    \begin{align*}
        \mathcal{L} \| x_t - z_t \|_{2,P^{1/2}}^2 
        %--------------------------------------------
        &\le - 2c \| x_t - z_t \|_{2,P^{1/2}}^2  
        + 2 \ell \| x_t - z_t \|_{2,P^{1/2}} \| u^x(t) - u^y(t) \|_{\cU} 
        + \sigma_x^2 \; \mbox{ a.s.}
    \end{align*}
    The rest of the proof is similar to that of item~\ref{itm1:NISS}.
\end{proof}

\begin{remark}[Comparisons]
  \begin{enumerate}
  \renewcommand{\theenumi}{\roman{enumi}}
  
  \item In the deterministic case when $\sigma^x = 0$,
    Theorem~\ref{thm:NISS} reduces to Proposition~\ref{prop:ISS} for ISS
    (note that Theorem~\ref{thm:NISS} uses the squared norm to evaluate the
    second moment).  The additional parameter $\alpha \in (0,1)$ arises
    from the application of Young’s inequality~\eqref{eq:alp} for
    estimating an upper bound under the presence of both noise and
    deterministic inputs. The term~$\frac{1}{\alpha}
    \frac{\sigma_x^{2}}{c}(1-\e^{-2c\alpha t})$ in~\eqref{eq1:NISS}
    represents the effect of noise and is similar to the expected squared
    norm of an OU process~\eqref{eq:SDE_OU_var}. In the denominator, we
    have~$c$ instead of~$2c$ because we consider two realizations~$x_t$
    and~$y_t$ of SDE~\eqref{eq:SDE}. In fact, in~\eqref{eq3:NISS}
    considering a realization~$x_t$ of SDE~\eqref{eq:SDE}, the term
    representing the effect of noise is~$\frac{1}{\alpha}
    \frac{\sigma_x^{2}}{2c}(1-\e^{-2c\alpha t})$.
  \item In the stochastic case without deterministic
    input $u$, or more precisely when $u^x(t) \equiv u^y(t)$,
    Theorem~\ref{thm:NISS} recovers \cite[Theorem~2]{QCP-NT-JJES:09} for
    noise-to-state stability in the constant metric case. 
  \end{enumerate}
  In summary, Theorem~\ref{thm:NISS} can be viewed as a generalization of
  Proposition~\ref{prop:ISS} and \cite[Theorem~2]{QCP-NT-JJES:09}.
  \red
\end{remark}

%%%%%%%%%%%%%%%%%%%%%%%%%%%%%%%%%%%%%%%%%%%%%%%%%%%%%%%%%%%%%%%%%%
%%%%%%%%%%%%%%%%%%%%%%%%%%%%%%%%%%%%%%%%%%%%%%%%%%%%%%%%%%%%%%%%%%

\section{Stochastic Equilibrium Tracking}
\label{sec:stoch-eq-tracking}
In this section, we generalize Proposition~\ref{prop:eqt} for equilibrium tracking to the stochastic setting. As discussed in the Introduction, equilibrium tracking has been applied in several deterministic applications. In practice, however, both systems and signals are subject to stochastic fluctuations and noise. To address this, we derive bounds for equilibrium tracking performance in the presence of stochastic disturbances. As stochastic disturbances, we consider two different noises driven by A) OU process; and B) JD process, where JD process is useful to represent bounded noise.

\subsection{Driven by Ornstein-Uhlenbeck Processes}
We consider equilibrium tracking for noisy contracting dynamics:
\begin{align}\label{eq1:SDE_OU_SI}
    dx_{t}=F(x_{t}, u_t) dt + \Sigma (x_{t}, u_t) d\bs_{t}^x,
\end{align}
where contraction rate is~$\olp (F) \le - c < 0$ uniformly in $u \in \cU$, Lipschitz constant is~$\lp (F) \le \ell$ uniformly in $x \in \real^n$, and dispersion is uniformly bounded on~$\sigma_x^{2}$ with respect to the $P$-weighted Frobenius norm, i.e.,~$\|\Sigma (x,u)\|^{2}_{\mathsf{F},P^{1/2}} \le \sigma_x^2$. 

Our objective is to estimate tracking errors~$x_t - x^\star (v_t)$ for parameter-dependent equilibrium~$x^\star(v_t)$, i.e.,~$F(x^\star(v_t), v_t) = \0_n$. There are several possible scenarios: 1) deterministic input~$u_t = \theta (t)$ and deterministic equilibrium curve~$x^\star (\theta (t))$, i.e., $u_t = v_t = \theta (t)$; 2) stochastic input~$u_t = \theta (t) + \xi_t$ and deterministic equilibrium curve~$x^\star (\theta (t))$, i.e., $v_t = \theta (t)$; 3)  stochastic input~$u_t = \theta (t) + \xi_t$ and stochastic equilibrium curve~$x^\star (\theta (t)+ \xi_t)$, i.e., $u_t = v_t = \theta (t) + \xi_t$. In this subsection, we assume that~$\xi_t$ is driven by an OU process. Namely,
\begin{align}\label{eq2:SDE_OU_SI}
    u_t = \theta (t) + \xi_t, 
    \quad
    d \xi_t = - c \xi_t dt + \frac{\sigma_\xi}{\sqrt{m}}  d\bs_{t}^\xi.
\end{align}

We summarize the three main results of this subsection, where roles of parameter~$\alpha$ and constant~$h_{\mathsf{OU}}$, named the It\^o drift correction constant associated with the OU generator, are explained in Remark~\ref{rem:OU} below:
\begin{enumerate}
\item (Theorem~\ref{thm:trck_OU_DIDC}) deterministic input $u_t =\theta(t)$,
  tracking a deterministic curve $\|x_t-x^\star (\theta(t))\|$:
          \begin{align}\label{eq2:trck_OU_DIDC}
             &\limsup_{t \to \infty} \E \bigl[ \|x_{t} - x^\star(\theta(t))\|^{2}_{2,P^{1/2}} \bigr]
            \leq \frac{1}{\alpha} \frac{\sigma_x^{2}}{2c}
             +\frac{1}{4\alpha(1-\alpha)} \frac{\ell^2}{c^4}  \limsup_{t \to \infty} \|\dot  \theta (t)\|_{\cU}^{2}.
        \end{align}
        
\item (Theorem~\ref{thm:trck_OU_SIDC}) stochastic input $u_t =\theta(t)+\xi_t$, tracking a deterministic curve
  $\|x_t-x^\star(\theta(t))\|$:
        \begin{align}\label{eq2:trck_OU_SIDC}
             &\limsup_{t \to \infty}\E [ \| x_t - x^\star (\theta(t)) \|_{2,P^{1/2}}^{2}]
            \leq  \frac{1}{\alpha}  \frac{\sigma_x^2}{c} 
            +\frac{1}{\alpha (1-\alpha)} \frac{\ell^2}{c^4} \limsup_{t \to \infty} \|\dot \theta (t)\|_2^{2} + \frac{1}{\alpha} \frac{\ell^2 }{c^2} \frac{\sigma_\xi^2}{c}.
        \end{align}
  
\item (Theorem~\ref{thm:trck_OU_SISC}) stochastic input $u_t =\theta(t)+\xi_t$, tracking a stochastic curve
  $\|x_t-x^\star(u_t)\|$:
      \begin{align}\label{eq2:trck_OU_SISC}
        &\limsup_{t \to \infty} \E [ \|  x_t -  x^\star (u_t) \|_{2,P}^{2} ]
            \nonumber\\
            %--------------------------------------------
            &\quad \leq 
            \frac{1}{\alpha} \frac{\sigma_x^{2}}{2c} 
            +\frac{1}{\alpha(1-\alpha)} \frac{\ell^2}{c^4} \limsup_{t \to \infty} \|\dot \theta (t)\|_2^2
             + \frac{1}{\alpha (1-\alpha)}  \left( (2-\alpha) \frac{\ell^2}{c^2} \frac{\sigma_\xi^2}{2c} + \frac{h_{\mathsf{OU}}^2}{2} \frac{\sigma_\xi^4}{4c^2} \right),
    \end{align}
    where
    \begin{align}\label{eq:Hess_OU_SISC}
        h_{\mathsf{OU}} : = \frac{1}{m}\sup_{u \in \real^m}\left\|\begin{bmatrix}
             \operatorname{trace} \bigl(\operatorname{Hess} (x_1^\star (u))\bigr) 
             \\ \vdots \\
             \operatorname{trace} \bigl(\operatorname{Hess} (x_n^\star (u))\bigr) 
        \end{bmatrix}\right\|_{2,P^{1/2}}.
    \end{align}
\end{enumerate}

\begin{remark}[Comparisons]\label{rem:OU}
\begin{enumerate}
  \renewcommand{\theenumi}{\roman{enumi}}

  \item  Theorem~\ref{thm:trck_OU_DIDC} can be regarded as a generalization of Proposition~\ref{prop:eqt}, extending deterministic equilibrium tracking to the stochastic setting. As noted previously for Theorem~\ref{thm:NISS} on NISS, in the stochastic case, we use the squared norm to evaluate the second moment. The additional parameter $\alpha \in (0,1)$ arises from applying Young’s inequality to estimate an upper bound in the presence of both noise and deterministic input. Compared with~\eqref{eq:eqt_dtr} for the deterministic equilibrium tracking, the tracking error bound~\eqref{eq:trck_OU_DIDC} contains additional term $\frac{1}{\alpha}\frac{\sigma_x^{2}}{2c}$ caused by noise. As aforementioned, this term corresponds to the expected squared norm of the OU process~\eqref{eq:SDE_OU_var}.

  \item Theorems~\ref{thm:trck_OU_SIDC} and~\ref{thm:trck_OU_SISC} can be regarded as two different generalizations of Theorem~\ref{thm:trck_OU_DIDC}. In both cases, the additional term $\frac{\ell^2}{c^2}\frac{\sigma_\xi^2}{c}$ arises from the stochastic input~$\xi_t$. The coefficient $\frac{\ell^2}{c^2}$ is standard in ISS estimates. In addition, by means of a second-moment analysis, we obtain the bounds for the case in which~$\|\cdot\|_{\mathcal U}$ is the $\ell_2$-norm.

  \item In Theorem~\ref{thm:trck_OU_SISC}, there is a further additional constant~$h_{\mathsf{OU}}$. This is called the It\^o drift correction constant associated with the OU generator because it corresponds to the It\^o drift correction associated with the OU generator that appears when computing the SDE satisfied by stochastic equilibrium~$x^\star (u_t)$: 
  \begin{align}\label{pf3:trck_OU_SISC}
    dx^\star (u_t) &=F(x^\star (u_t), u_t) dt 
    +  v(\theta(t), \xi_t, u_t) dt + \Lambda (u_t) d\bs_{t}^\xi,
\end{align}
where
\begin{subequations}\label{pf4:trck_OU_SISC}
\begin{align}
    v(\theta(t), \xi_t, u_t)
    %--------------------------------------------
    &= \frac{\partial x^\star_k}{\partial u} (u_t) (\dot \theta(t) - c \xi_t ) 
    + \frac{1}{2} \frac{\sigma_\xi^2}{m} \begin{bmatrix}
             \operatorname{trace} \bigl(\operatorname{Hess} (x_1^\star(u_t))\bigr) 
             \\ \vdots \\
             \operatorname{trace} \bigl(\operatorname{Hess} (x_n^\star(u_t))\bigr) 
        \end{bmatrix}
    \label{pf41:trck_OU_SISC}\\
    %--------------------------------------------
    %--------------------------------------------
    \Lambda (u_t) 
    %--------------------------------------------
    &= \frac{\sigma_\xi}{\sqrt{m}}\frac{\partial x^\star}{\partial u} (u_t).
    \label{pf42:trck_OU_SISC}
\end{align}
\end{subequations}
This SDE is obtained by the It\^o formula to~$x^\star (u_t)$. For more details, see Appendix~\ref{app:trck_OU_SISC}. 
  \red
\end{enumerate}
\end{remark}

%%%%%%%%%%%%%%%%%%%%%%%%%%%%%%%%%%%%%%%%%%%%%%%%%%%%%%%%%%%%%%%%%%
%%%%%%%%%%%%%%%%%%%%%%%%%%%%%%%%%%%%%%%%%%%%%%%%%%%%%%%%%%%%%%%%%%

\subsubsection{Deterministic Input, Tracking a Deterministic Curve}
We first focus on deterministic parameter-dependent equilibrium~$x^\star(\theta)$, i.e.,~$F(x^\star(\theta), \theta) = \0_n$ and then estimate the tracking error~$x_t - x^\star(\theta (t))$ for the solution to the following SDE:
\begin{align}\label{eq:SDE_OU_DIDC}
        dx_{t}
        =F(x_{t}, \theta (t)) dt 
        + \Sigma (x_{t}, \theta (t)) d\bs_{t}^x.
\end{align}

\begin{theorem}[Stochastic Equilibrium Tracking: Deterministic Input, Tracking a Deterministic Curve]
  \label{thm:trck_OU_DIDC}
    Given an input-dependent vector field $F:\real^{n} \times \cU \to \real^{n}$, and matrix $\Sigma : \real^{n} \times \real^{m} \to \real^{n \times r}$,  where $\cU \subset \real^m$ is compact, we impose the same assumptions as Theorem~\ref{thm:NISS} (i.e., i)~Assumption~\ref{asm:lip} of Lipschitz continuity and linear growth for the drift vector field $F$ and the dispersion matrix $\Sigma$; and ii)~the existence of a matrix $P=P^{\top}\succ \0_{n \times n}$ satisfying Assumptions~\ref{A1:NISS} to~\ref{A3:NISS} for contractivity of~$F$, Lipschitzness  of~$F$, and the boundedness of the dispersion of~$\Sigma$).
    
    Let~$x^\star(\theta(t))$ be a time-varying equilibrium curve, i.e., $F(x^\star(\theta(t)), \theta(t)) \equiv \0_n$.
    Then, for each $\alpha\in(0,1)$ and any finite $t \ge 0$, for any realization $x_t$ of the SDE~\eqref{eq:SDE_OU_DIDC} from random square integrable initial condition $x_0$ under continuously differentiable deterministic parameter $\theta:\real\to\cU$, we have
        \begin{align}\label{eq:trck_OU_DIDC}
            \E \bigl[ \|x_{t} - x^\star(\theta(t))\|^{2}_{2,P^{1/2}} \bigr]
            %--------------------------------------------
            &\leq \E \bigl[ \|x_{0}-x^\star(\theta(0))\|^{2}_{2,P^{1/2}} \bigr]\e^{-2c\alpha t}
            +\frac{1}{\alpha}\frac{\sigma_x^{2}}{2c}(1-\e^{-2c\alpha t}) \nonumber\\
            &\quad
             +\frac{1}{1-\alpha}\frac{\ell^2}{2c^3}\int_{0}^{t}\e^{-2c\alpha(t-\tau)}\|\dot  \theta (t)\|_{\cU}^{2}\ d\tau,
        \end{align}
        and~\eqref{eq2:trck_OU_DIDC}.
\end{theorem}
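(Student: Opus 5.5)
Here is the plan. Set $y(t) := x^\star(\theta(t))$ and view it as a deterministic trajectory. Because $\theta$ is $C^1$ and $x^\star$ is Lipschitz in $\theta$, $y$ is absolutely continuous and satisfies
\begin{align*}
\dot y(t) = F(y(t),\theta(t)) + \tfrac{d}{dt}x^\star(\theta(t)),
\end{align*}
since $F(y(t),\theta(t))=\0_n$. The Lipschitz bound on $x^\star$ comes from the standard contraction argument used behind Proposition~\ref{prop:eqt}. Set $\varphi := F(x^\star(\theta'),\theta)$; then $F(x^\star(\theta'),\theta')=\0_n$ and \ref{A1:NISS} give
\begin{align*}
c\|x^\star(\theta)-x^\star(\theta')\|_{2,P^{1/2}}^2 \le -\varphi^\top P(x^\star(\theta)-x^\star(\theta')).
\end{align*}
Combining this with \ref{A2:NISS} yields $\|x^\star(\theta)-x^\star(\theta')\|_{2,P^{1/2}}\le \frac{\ell}{c}\|\theta-\theta'\|_\cU$, hence $\|\dot y(t)\|_{2,P^{1/2}}\le \frac{\ell}{c}\|\dot\theta(t)\|_\cU$ almost everywhere.

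Next I would repeat the argument of Theorem~\ref{thm:NISS}, item~\ref{itm2:NISS}, with $g(t):=\E[\|x_t-y(t)\|^2_{2,P^{1/2}}]$. Apply Dynkin's formula (Proposition~\ref{prop:Dynkin}) to the time-dependent $C^{1,2}$ function $\|x-y(t)\|^2_{2,P^{1/2}}$; alternatively, augment the state with the deterministic component $y$, as in the proof of Theorem~\ref{thm:NISS}. The generator then satisfies
\begin{align*}
\mathcal{L}\|x_t-y(t)\|^2_{2,P^{1/2}} \le -2c\|x_t-y(t)\|^2_{2,P^{1/2}} + 2\frac{\ell}{c}\|\dot\theta(t)\|_\cU\,\|x_t-y(t)\|_{2,P^{1/2}} + \sigma_x^2 .
\end{align*}
Here the first term comes from \ref{A1:NISS} with identical inputs $\theta(t)$, so there is no $u$-Lipschitz term. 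The cross term comes from $-2(x-y)^\top P\dot y$ together with Cauchy--Schwarz in the $P$-inner product. The noise term comes from \ref{A3:NISS}, using only $\Sigma(x_t,\theta(t))$.

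The remaining steps copy the earlier proof. Tonelli lets us exchange the integral and expectation, justified by Proposition~\ref{prop:sol} together with continuity and boundedness of $y$ on finite horizons. Jensen gives $\E\|\cdot\|\le\sqrt{g}$, which leads to a Dini-derivative inequality. Young's inequality then bounds
\begin{align*}
2\frac{\ell}{c}\|\dot\theta\|_\cU\sqrt g \le 2c(1-\alpha)g+\frac{\ell^2}{2c^3(1-\alpha)}\|\dot\theta\|_\cU^2 .
\end{align*}
This produces
\begin{align*}
D^+g\le -2c\alpha g+\sigma_x^2+\frac{\ell^2}{2c^3(1-\alpha)}\|\dot\theta\|_\cU^2 .
\end{align*}
The comparison lemma yields~\eqref{eq:trck_OU_DIDC}, with the noise term $\frac{\sigma_x^2}{2c\alpha}(1-\e^{-2c\alpha t})$.

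For the limsup bound~\eqref{eq2:trck_OU_DIDC}, bound $\|\dot\theta(\tau)\|^2_\cU$ by its supremum over $[T,t]$ for large $T$. The contribution from $[0,T]$ decays like $\e^{-2c\alpha(t-T)}$, and
\begin{align*}
\int \e^{-2c\alpha(t-\tau)}\,d\tau\le \frac{1}{2c\alpha}.
\end{align*}
This gives the coefficient
\begin{align*}
\frac{\ell^2}{2c^3(1-\alpha)}\cdot\frac{1}{2c\alpha}=\frac{\ell^2}{4\alpha(1-\alpha)c^4}.
\end{align*}

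The main obstacle is the first step: rigorously establishing that $y$ is differentiable with $\|\dot y\|\le \frac{\ell}{c}\|\dot\theta\|_\cU$ using only one-sided Lipschitz and Lipschitz assumptions, and not smoothness of $x^\star$. I would handle this through the absolute continuity and Lipschitz estimate above; the bound on $\|\dot y\|$ is then only needed almost everywhere, which suffices inside the integrals.
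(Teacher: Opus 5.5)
Your proposal is correct and is essentially the paper's proof. The paper treats $x^\star(\theta(t))$ as the noise-free solution of the auxiliary system $dx_t=F(x_t,\theta(t))\,dt+v(t)\,dt+\Sigma(x_t,\theta(t))\,d\bs_t^x$ with $v=\frac{d}{dt}x^\star(\theta(t))$, applies item~\ref{itm2:NISS} of Theorem~\ref{thm:NISS} with input-Lipschitz constant $1$ (exactly your generator bound), and then uses $\|\frac{d}{dt}x^\star(\theta(t))\|_{2,P^{1/2}}\le\frac{\ell}{c}\|\dot\theta(t)\|_{\cU}$. You additionally justify that Lipschitz bound, which the paper only asserts, but there is a harmless sign slip: contraction gives $c\|x^\star(\theta)-x^\star(\theta')\|_{2,P^{1/2}}^2\le \varphi^\top P\,(x^\star(\theta)-x^\star(\theta'))$ with a plus sign, after which Cauchy--Schwarz and \ref{A2:NISS} give the same conclusion.
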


\begin{proof}
Consider an auxiliary dynamics of the SDE~\eqref{eq:SDE_OU_DIDC}:
    \begin{align}\label{pf1:trck_OU_DIDC}
         dx_{t} = F(x_{t}, \theta (t) ) dt +  v(t) dt  + \Sigma (x_{t}, \theta (t)) d\bs_{t}^x.
    \end{align}
    When $v (t) \equiv \0_n$, this is nothing but~\eqref{eq:SDE_OU_DIDC}. When $v(t) \equiv \dot x^\star (\theta (t))$ and noise free (i.e., $d\bs_{t}^x$ is identical to zero), we have~$x_{t}=x^\star (\theta (t))$. Namely, $x^\star (\theta (t))$ is a solution to \eqref{pf1:trck_OU_DIDC}.
    
    Applying item~\ref{itm2:NISS} of Theorem~\ref{thm:NISS} to the auxiliary SDE~\eqref{pf1:trck_OU_DIDC} as~$v(t)$ as the input, we have
    \begin{align*}
        \E \bigl[ \|x_{t} - x^\star(\theta(t))\|^{2}_{2,P^{1/2}} \bigr]
         \leq& \E \bigl[ \|x_{0}-x^\star(\theta(0))\|^{2}_{2,P^{1/2}} \bigr]\e^{-2c\alpha t}+\frac{\sigma_x^{2}}{2c\alpha}(1-\e^{-2c\alpha t})
        \nonumber\\
        & +\frac{1}{2c(1-\alpha)}\int_{0}^{t}\e^{-2c\alpha(t-\tau)}\|\dot x^\star (\theta (t))\|_{2,P^{1/2}}^{2}\ d\tau,
    \end{align*}
    From $\|\dot x^\star (\theta (\tau))\|_{2,P^{1/2}} \le \frac{\ell}{c} \|\dot \theta (\tau)\|_\cU$, we have~\eqref{eq:trck_OU_DIDC}.
    Finally, \eqref{eq2:trck_OU_DIDC} is obtained by taking the limit superior of~\eqref{eq:trck_OU_DIDC}.
\end{proof}

%%%%%%%%%%%%%%%%%%%%%%%%%%%%%%%%%%%%%%%%%%%%%%%%%%%%%%%%%%%%%%%%%%
%%%%%%%%%%%%%%%%%%%%%%%%%%%%%%%%%%%%%%%%%%%%%%%%%%%%%%%%%%%%%%%%%%

\subsubsection{Stochastic Input, Tracking a Deterministic Curve}

\begin{theorem}[Stochastic Equilibrium Tracking with OU process: Stochastic Input, Tracking a Deterministic Curve]\label{thm:trck_OU_SIDC}
    Given~$F:\real^{n} \times \cU \to \real^{n}$ and~$\Sigma : \real^{n} \times \real^{m} \to \real^{n \times r}$,  where $\cU \subset \real^m$ is compact, we impose the same assumptions as Theorem~\ref{thm:NISS}, where~$\| \cdot \|_\cU = \| \cdot \|_2$. In addition, we assume that
    \begin{enumerate}
    \setcounter{enumi}{3}
    \renewcommand{\theenumi}{A\arabic{enumi}}
        \item \label{A4:trck_OU_SIDC} 
        Normalization of $P$: $\|P\|_2 = 1$.
    \end{enumerate}
    Let~$x^\star(\theta(t))$ be a time-varying equilibrium curve.
    Then, for each $\alpha\in(0,1)$ and any finite $t \ge 0$, for any realization $(x_t,\xi_t)$ of the SDEs~\eqref{eq1:SDE_OU_SI} and~\eqref{eq2:SDE_OU_SI} from random square integrable initial condition $(x_0,\xi_0)$ under continuously differentiable deterministic parameter $\theta:\real\to\cU$, we have
     \begin{align}\label{eq:trck_OU_SIDC}
              \E [ \| x_t - x^\star(\theta(t)) \|_{2,P^{1/2}}^{2}]
            \le& 
            \E [ \| x_0 - x^\star(\theta(0)) \|_{2,P^{1/2}}^{2} ] \e^{-c \alpha t}
            + \frac{1}{\alpha} \frac{\sigma_x^2}{c} (1-\e^{-c \alpha t})
            \nonumber\\
            &
              +\frac{1}{1-\alpha} \frac{\ell^2}{c^3}\int_{0}^{t}\e^{-c\alpha(t-\tau)}\|\dot \theta (\tau)\|_2^{2}\ d\tau
            + \frac{\ell^2}{c^2}\E [ \|  \xi_0 \|_2^2 ]  \e^{-c \alpha t}
             + \frac{1}{\alpha} \frac{\ell^2 }{c^2} \frac{\sigma_\xi^2}{c} (1-\e^{-c \alpha t}),
        \end{align}
        and~\eqref{eq2:trck_OU_SIDC}.
\end{theorem}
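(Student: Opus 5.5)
Split the error through the intermediate deterministic-input trajectory. Let $z_t$ be the solution of the SDE driven by the noiseless input $\theta(t)$,
\begin{align*}
dz_t = F(z_t,\theta(t))\,dt + \Sigma(z_t,\theta(t))\,d\bs_t^x, \qquad z_0 = x_0 .
\end{align*}
It uses the same Brownian motion $\bs_t^x$ as $x_t$, so the two share their noise realization. Then use $\|a+b\|^2\le 2\|a\|^2+2\|b\|^2$ on
\begin{align*}
x_t - x^\star(\theta(t)) = (x_t - z_t) + (z_t - x^\star(\theta(t))).
\end{align*}
The rates $\e^{-c\alpha t}$ instead of $\e^{-2c\alpha t}$ in \eqref{eq:trck_OU_SIDC} suggest that a halving of the rate occurs somewhere. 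So a cleaner route may be to work directly with the generator of the joint process $(x_t,\xi_t)$, applied to a combined Lyapunov function. I would try both and keep the one whose constants match.

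\emph{Direct route.} Apply the proof technique of Theorem~\ref{thm:NISS} to the joint SDE for $(x_t,\xi_t)$, with $\varphi=\|x-x^\star(\theta(t))\|_{2,P^{1/2}}^2$ extended to a time-dependent function. The infinitesimal generator gives the following terms:
\begin{itemize}
\item contraction, $-2c\varphi$, from Assumption~\ref{A1:NISS} with frozen input $u_t$, after inserting $F(x^\star(\theta),\theta)=\0_n$;
\item $2\ell\sqrt{\varphi}\,\|\xi_t\|_2$, from the input mismatch $u_t-\theta(t)=\xi_t$ via Assumption~\ref{A2:NISS} with $\|\cdot\|_\cU=\|\cdot\|_2$;
\item $2\sqrt{\varphi}\,\|\dot x^\star(\theta(t))\|_{2,P^{1/2}} \le 2\sqrt{\varphi}\,\frac{\ell}{c}\|\dot\theta\|_2$, from the time derivative of $x^\star(\theta(t))$, using the bound from Proposition~\ref{prop:eqt};
\item the Itô term $\sigma_x^2$, from Assumption~\ref{A3:NISS}.
\end{itemize}
The $\xi$ process is independent of $x$'s Brownian motion and enters only through the drift, so no cross Itô term appears. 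To handle the two cross terms, split the available contraction $2c$ using Young's inequality:
\begin{itemize}
\item spend half on the $\dot\theta$ term, as $2\sqrt\varphi\,\frac{\ell}{c}\|\dot\theta\|\le c(1-\alpha)\varphi+\frac{\ell^2}{c^3(1-\alpha)}\|\dot\theta\|^2$;
\item spend the other half on the $\xi$ term, as $2\ell\sqrt\varphi\,\|\xi\|\le c\varphi+\frac{\ell^2}{c}\|\xi\|^2$.
\end{itemize}
This leaves decay rate $c\alpha$ and explains the $\e^{-c\alpha t}$ and the coefficients $\frac{1}{1-\alpha}\frac{\ell^2}{c^3}$.

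Next, for $\E\|\xi_t\|_2^2$, use the OU formula \eqref{eq:SDE_OU_var}: $\E\|\xi_t\|^2=\e^{-2ct}\E\|\xi_0\|^2+\frac{\sigma_\xi^2}{2c}(1-\e^{-2ct})$. Substituting this into the differential inequality and solving with the comparison principle (as in Theorem~\ref{thm:NISS}, via Dynkin, Tonelli and Dini derivatives) gives convolutions against $\e^{-c\alpha(t-\tau)}$. Since $2c>c\alpha$, these are bounded by $\frac{\ell^2}{c^2}\E\|\xi_0\|^2\e^{-c\alpha t}$ and $\frac{1}{\alpha}\frac{\ell^2}{c^2}\frac{\sigma_\xi^2}{c}(1-\e^{-c\alpha t})$, up to the constants stated.

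\emph{Role of Assumption~\ref{A4:trck_OU_SIDC}.} The normalization $\|P\|_2=1$ enters wherever $P$-norms and plain $\ell_2$ norms of $\xi$ are compared. The noise term $\frac{1}{\alpha}\frac{\sigma_x^2}{c}$ (rather than $\frac{\sigma_x^2}{2c}$) is then consistent with the halved rate $c\alpha$.

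\emph{Main obstacle.} The hardest part is the rigorous handling of a time-dependent Lyapunov function together with the coupled $(x,\xi)$ generator. In particular, the Young splits must be arranged so that exactly the stated constants appear, including the $\e^{-c\alpha t}$ factor multiplying $\E\|\xi_0\|^2$. If the constants do not close, I would fall back to a weighted combined function $\varphi+\kappa\|\xi\|_2^2$ with a tuned $\kappa$. The limsup statement \eqref{eq2:trck_OU_SIDC} then follows as in Theorem~\ref{thm:NISS}.
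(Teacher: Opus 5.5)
Your direct route is correct, but it is not the paper's argument. The paper never puts a Lyapunov function on $x$ alone. It first proves (Lemma~\ref{lem:lgnrm_cscd}) that the cascade $\dot\xi=-c\xi$, $\dot x=F(x,\theta+\xi)$ is contracting with rate $c/2$ in the block norm $\mathrm{diag}(\frac{\ell}{c}\I_m,\frac{c}{\ell}P)$. It then applies item~\ref{itm2:NISS} of Theorem~\ref{thm:NISS} as a black box to the augmented SDE in $(\xi,x)$ with an additive input $v$. The comparison is between $(\xi_t,x_t)$ and the deterministic trajectory $(\0_m,x^\star(\theta(t)))$, which is the solution generated by $v=\dot x^\star(\theta(t))$.

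In the paper, the OU noise of $\xi$ is absorbed into the dispersion bound $\frac{\ell}{c}\sigma_\xi^2+\frac{c}{\ell}\sigma_x^2$. The $x$-component is recovered at the end by multiplying by $\frac{\ell}{c}$. The rate halving comes from the Young step inside the cascade lemma; in your argument it comes from spending $c$ on the $\xi$-coupling.

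You instead treat $\xi_t$ as an exogenous random input, apply Young pathwise, and use the explicit OU second moment. Your constants close, and are in fact sharper than stated. The convolutions give $\frac{1}{2-\alpha}\frac{\ell^2}{c^2}\E[\|\xi_0\|_2^2]\e^{-c\alpha t}$ and $\frac{1}{\alpha}\frac{\ell^2}{c^2}\frac{\sigma_\xi^2}{2c}(1-\e^{-c\alpha t})$, the latter being half the stated noise-input term. Your route also never uses Assumption~\ref{A4:trck_OU_SIDC}, because Assumption~\ref{A2:NISS} already gives $\|F(x^\star,\theta+\xi)-F(x^\star,\theta)\|_{2,P^{1/2}}\le\ell\|\xi\|_2$.

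What the paper's route buys is modularity. It needs no moment computation for the input process. The same template yields the Jacobi-diffusion version (Theorem~\ref{thm:trck_JD_SIDC}) by changing only the dispersion bound, whereas you would need a separate second-moment estimate for the JD input.

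Three practical remarks follow.
\begin{itemize}
\item Commit to the direct route and drop the split through $z_t$. The factor $2$ from $\|a+b\|^2\le 2\|a\|^2+2\|b\|^2$ would multiply the initial-condition term. The Itô term of $x_t-z_t$ can then only be bounded by $4\sigma_x^2$, so neither matches the statement.
\item The obstacle you flag, the time-dependent Lyapunov function, disappears with the paper's trick. Regard $y(t)=x^\star(\theta(t))$ as the solution of $\dot y=F(y,\theta(t))+\dot x^\star(\theta(t))$ and use the time-independent $\varphi(x,y)=\|x-y\|_{2,P^{1/2}}^2$ on the joint process, exactly as in item~\ref{itm2:NISS} of Theorem~\ref{thm:NISS}.
\item Like the paper, you tacitly need Assumption~\ref{asm:lip} and Assumptions~\ref{A1:NISS}--\ref{A3:NISS} for all $u\in\real^m$, since $\theta(t)+\xi_t$ leaves the compact set $\cU$.
\end{itemize}
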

\begin{proof}
    The proof is in Appendix~\ref{app:trck_OU_SIDC}.
\end{proof}

%%%%%%%%%%%%%%%%%%%%%%%%%%%%%%%%%%%%%%%%%%%%%%%%%%%%%%%%%%%%%%%%%%
%%%%%%%%%%%%%%%%%%%%%%%%%%%%%%%%%%%%%%%%%%%%%%%%%%%%%%%%%%%%%%%%%%

\subsubsection{Stochastic Input, Tracking a Stochastic Curve}

\begin{theorem}[Stochastic Equilibrium Tracking  with OU process: Stochastic Input, Tracking a Stochastic Curve]\label{thm:trck_OU_SISC}
 Given~$F:\real^{n} \times \cU \to \real^{n}$, and~$\Sigma : \real^{n} \times \real^{m} \to \real^{n \times r}$, where $\cU \subset \real^m$ is compact, we impose the same assumptions as Theorem~\ref{thm:trck_OU_DIDC}. Let~$x^\star(u_t)$ be a stochastic equilibrium curve, where~$u_t$ is generated by~\eqref{eq2:SDE_OU_SI}. 
 In addition, we assume that
    \begin{enumerate}
    \setcounter{enumi}{4}
    \renewcommand{\theenumi}{A\arabic{enumi}}    
    \item \label{A5:trck_OU_SISC}  Twice continuous differentiability of $F$: $F:\real^n \times \cU \to \real^n$ is of class~$C^2$;

    \item \label{A6:trck_OU_SISC}  Boundedness and Lipshitzness of the Hessian of $x_i^\star (v)$: $\operatorname{Hess} (x_i^\star (v))$ is bounded and Lipschitz on $\real^m$ for all $i=1,\dots,m$.
    \end{enumerate}
    Then, for each $\alpha\in(0,1)$ and any finite $t \ge 0$, for any weak solution $(\xi_t, x_t)$ of SDE~\eqref{eq1:SDE_OU_SI} and~\eqref{eq2:SDE_OU_SI} from random square integrable initial conditions $(x_0, \xi_0)$ under continuously differentiable deterministic parameter $\theta:\real\to\cU$, we have
    \begin{align}\label{eq:trck_OU_SISC}
      \E \bigl[ \|x_{t} - x^\star (u_t) \|^{2}_{2,P^{1/2}} \bigr]
            &\leq \E \bigl[ \|x_{0}-x^\star (u_0) \|^{2}_{2,P^{1/2}} \bigr]\e^{-2c\alpha t}
            +\frac{1}{\alpha} \frac{\sigma_x^{2}}{2c} (1-\e^{-2c\alpha t})
            \nonumber\\
            &\quad +\frac{2}{1-\alpha} \frac{\ell^2}{c^3}\int_{0}^{t}\e^{-2c\alpha(t-\tau)}  \| \dot \theta(\tau) \|_2^2\ d\tau
             +\frac{1}{(1-\alpha)^2} \frac{\ell^2}{2c^2}  \E \bigl[ \| \xi_0 \|_2^2  \bigr] (e^{-2c\alpha t}-e^{-2ct})
            \nonumber\\
            &\quad + \frac{1}{\alpha}  \left( \frac{\ell^2}{c^2} \frac{\sigma_\xi^2}{2c} +\frac{1}{1-\alpha} \frac{h_{\mathsf{OU}}^2}{2} \frac{\sigma_\xi^4}{4c^2} \right) (1-\e^{-2c\alpha t})  
            \nonumber\\
            &\quad 
            +\frac{1}{1-\alpha} \frac{\ell^2}{c^2} \frac{\sigma_\xi^2}{2c} \left(\frac{1}{\alpha} - \frac{1}{\alpha (1-\alpha)} e^{-2c\alpha t}+ \frac{1}{1-\alpha} \e^{-2ct}\right),
    \end{align}
    and~\eqref{eq2:trck_OU_SISC}.
\end{theorem}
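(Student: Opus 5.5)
We will follow the template of the proof of Theorem~\ref{thm:trck_OU_DIDC}, treating $x^\star(u_t)$ as the solution of an auxiliary SDE that shares the drift $F(\cdot,u_t)$ with \eqref{eq1:SDE_OU_SI}. We then compare the two processes through the generator of the joint process.

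\textbf{Step 1: the SDE for the stochastic curve.} By the implicit function theorem applied to $F(x^\star(v),v)=\0_n$, Assumption~\ref{A5:trck_OU_SISC} gives that $x^\star$ is $C^2$. Differentiating the equilibrium identity gives $\frac{\partial x^\star}{\partial u} = -(\partial_x F)^{-1}\partial_u F$, and contractivity together with Lipschitzness in $u$ yield $\|\frac{\partial x^\star}{\partial u}(u)\delta\|_{2,P^{1/2}}\le \frac{\ell}{c}\|\delta\|_2$. Applying It\^o's formula to $x^\star(u_t)$ with $du_t=(\dot\theta(t)-c\xi_t)dt+\frac{\sigma_\xi}{\sqrt m}d\bs^\xi_t$ gives exactly \eqref{pf3:trck_OU_SISC}--\eqref{pf4:trck_OU_SISC}. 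Adding the identically zero term $F(x^\star(u_t),u_t)dt$ makes the drift of this process $F(\cdot,u_t)+v$.

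\textbf{Step 2: generator estimate.} Let $\varphi=\|x-z\|_{2,P^{1/2}}^2$ with $z_t=x^\star(u_t)$, for the joint process $(x_t,\xi_t)$ driven by independent $\bs^x,\bs^\xi$. Both drifts share the input $u_t$, so the $u$-Lipschitz term vanishes and contraction gives $-2c\varphi$. The dispersion trace contributes $\operatorname{trace}(\Sigma^\top P\Sigma)+\operatorname{trace}(\Lambda^\top P\Lambda)\le \sigma_x^2+\frac{\ell^2}{c^2}\sigma_\xi^2$, using $\|\Lambda\|^2_{\mathsf F,P^{1/2}}\le \frac{\sigma_\xi^2}{m}\cdot m\frac{\ell^2}{c^2}$. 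The cross term $-2(x-z)^\top P v$ is bounded via
\[
\|v\|_{2,P^{1/2}}\le \tfrac{\ell}{c}\|\dot\theta\|_2+\tfrac{\ell}{c}c\|\xi_t\|_2+\tfrac{\sigma_\xi^2}{2}h_{\mathsf{OU}}.
\]
Young's inequality with weight $2c(1-\alpha)$ then gives $\mathcal L\varphi\le -2c\alpha\varphi+\sigma_x^2+\frac{\ell^2\sigma_\xi^2}{c^2}+\frac{1}{2c(1-\alpha)}\|v\|^2$. We split $\|v\|^2\le 2\frac{\ell^2}{c^2}\|\dot\theta\|^2+\dots$, allocating constants so that the factor $2$ on the $\dot\theta$ term and the $\frac{1}{1-\alpha}\frac{h^2_{\mathsf{OU}}}{2}\frac{\sigma_\xi^4}{4c^2}$ term come out as in the statement.

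\textbf{Step 3: Dynkin and comparison.} As in the proof of Theorem~\ref{thm:NISS}, we use Dynkin's formula (Proposition~\ref{prop:Dynkin}), exchange expectation and integration via Tonelli using the moment bounds \eqref{eq:bd}, and take a Dini derivative of $g(t)=\E\varphi$. This yields a linear differential inequality driven by $\E\|\xi_t\|_2^2$. Formula \eqref{eq:SDE_OU_var} (with $n\to m$) gives $\E\|\xi_t\|_2^2=\e^{-2ct}\E\|\xi_0\|^2+\frac{\sigma_\xi^2}{2c}(1-\e^{-2ct})$. Convolving with $\e^{-2c\alpha(t-\tau)}$ produces the terms $\frac{1}{(1-\alpha)^2}(\e^{-2c\alpha t}-\e^{-2ct})$ and $\frac{1}{\alpha}-\frac{1}{\alpha(1-\alpha)}\e^{-2c\alpha t}+\frac{1}{1-\alpha}\e^{-2ct}$ appearing in \eqref{eq:trck_OU_SISC}. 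Taking $\limsup$ then gives \eqref{eq2:trck_OU_SISC}.

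\textbf{Main obstacle.} The hard step is justifying the It\^o formula and Dynkin's formula for the joint process. The map $x^\star(u_t)$ must have a globally Lipschitz, linearly growing drift and dispersion, which is where Assumption~\ref{A6:trck_OU_SISC} (bounded, Lipschitz Hessians) is used. One must also ensure $x^\star$ is defined for $u\in\real^m$ rather than only on the compact $\cU$, since the OU input leaves $\cU$. I would first try to extend $F$ and the bounds to $\real^m$, which the definition of $h_{\mathsf{OU}}$ over $u\in\real^m$ implicitly presumes. I would then verify Assumption~\ref{asm:lip} for the augmented system, falling back on localization by stopping times if needed.
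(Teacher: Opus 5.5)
Your proposal follows essentially the same route as the paper's proof, and it is sound: It\^o's formula for $x^\star(u_t)$ gives the auxiliary SDE, followed by the block-diagonal trace bound $\sigma_x^2+\frac{\ell^2}{c^2}\sigma_\xi^2$, Young's inequality with weight $2c(1-\alpha)$, Dynkin/comparison, and the OU second moment (your column-wise bound on $\operatorname{trace}(\Lambda^\top P\Lambda)$ even avoids the normalization $\|P\|_2=1$, which the paper uses although this theorem does not assume it). One bookkeeping caveat: the stated $\dot\theta$ and $h_{\mathsf{OU}}$ coefficients force the split $\|v\|_{2,P^{1/2}}^2\le 4\frac{\ell^2}{c^2}\|\dot\theta(t)\|_2^2+4\ell^2\|\xi_t\|_2^2+\frac{h_{\mathsf{OU}}^2}{2}\sigma_\xi^4$ (not $2\frac{\ell^2}{c^2}$ on the first term), and the convolution then gives $\frac{1}{(1-\alpha)^2}\frac{\ell^2}{c^2}$ rather than the stated $\frac{1}{(1-\alpha)^2}\frac{\ell^2}{2c^2}$ in front of $\E[\|\xi_0\|_2^2](\e^{-2c\alpha t}-\e^{-2ct})$; the paper's own integration has the same factor-$2$ slip, and the $\limsup$ bound~\eqref{eq2:trck_OU_SISC} is unaffected.
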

\begin{proof}
    The proof is in Appendix~\ref{app:trck_OU_SISC}.
\end{proof}

%%%%%%%%%%%%%%%%%%%%%%%%%%%%%%%%%%%%%%%%%%%%%%%%%%%%%%%%%%%%%%%%%%
%%%%%%%%%%%%%%%%%%%%%%%%%%%%%%%%%%%%%%%%%%%%%%%%%%%%%%%%%%%%%%%%%%

\subsection{Driven by Jacobi Diffusion Processes}

To deal with a case where stochastic input~$u_t$ is bounded, in this subsection we consider noise driven by a multivariate JD process:
\begin{align}\label{eq:SDE_JD_SI}
    d u_t &= - c (u_t - \theta (t) )dt  
    + \sigma_u \operatorname{diag}(u_t \odot (a -u_t))^\frac{1}{2} d\bs_{t}^u.
\end{align}

We summarize the two main results of this subsection for noise driven by a JD process, which are parallel to those obtained in the OU case. In each case, we consider the noisy contracting dynamics~\eqref{eq1:SDE_OU_SI} characterized by contraction rate $c$, Lipschitz constant $\ell$, and dispersion bound $\sigma_x^2$. The parameter~$\alpha$ emerges from the application of Young’s inequality to bound the combined effect of stochastic disturbances and deterministic inputs, as in the OU case. Also, the role of constant~$h_{\mathsf{JD}}$, named the It\^o drift correction constant associated with the JD generator, is simitar to that of $h_{\mathsf{OU}}$ for the OU process.

\begin{enumerate}
\item (Theorem~\ref{thm:trck_JD_SIDC}) stochastic input~$u_t$, tracking a deterministic curve
  $\|x_t-x^\star(\theta(t))\|$:
        \begin{align}\label{eq2:trck_JD_SIDC}
             & \limsup_{t \to \infty} \E [ \| x_t - x^\star(\theta(t)) \|_{2,P^{1/2}}^{2}]
            \leq 
             \frac{1}{\alpha} \frac{\sigma_x^2}{c}
            +\frac{1}{\alpha (1-\alpha)} \frac{\ell^2}{c^4} \limsup_{t \to \infty} \|\dot \theta (t)\|_2^{2} 
             + \frac{1}{\alpha} \frac{\ell^2 }{c^2} \frac{\|a\|_2^2}{4} \frac{\sigma_u^2}{c} .
        \end{align}
  
\item (Theorem~\ref{thm:trck_JD_SISC}) stochastic input~$u_t$, tracking a stochastic curve
  $\|x_t-x^\star(u_t)\|$:
          \begin{align}\label{eq2:trck_JD_SISC}
        &\limsup_{t \to \infty} \E \bigl[ \|x_{t} - x^\star (u_t) \|^{2}_{2,P^{1/2}} \bigr]
            \nonumber\\
            %--------------------------------------------
            &\quad 
            \le \frac{1}{\alpha} \frac{\sigma_x^{2}}{2c}
             +\frac{1}{2\alpha(1-\alpha)} \frac{\ell^2}{c^4} \limsup_{t \to \infty} \| \dot \theta(t) \|_2^2 
            + \frac{1}{\alpha (1-\alpha)} \left( (4-3\alpha) \frac{\ell^2}{c^2}\frac{ \|a\|_2^2}{4}  \frac{\sigma_u^2}{2c} + \frac{h_{\mathsf{JD}}^2}{2} \frac{\sigma_u^4}{4c^2}\right),
    \end{align}
    where
    \begin{align}\label{eq:Hess_JD_SISC}
        h_{\mathsf{JD}} :=  \sup_{\theta \in (\0_m, a)} \left\|\sum_{i=1}^m u_i (a_i - u_i) \frac{\partial^2 x^\star (u)}{\partial u_i^2}\right\|_{2,P^{1/2}}.
    \end{align}
\end{enumerate}
Although the JD process has a different structure form the OU process, the basic structure of the error bounds is similar. In the JD case, the SDE satisfied by stochastic equilibrium~$x^\star (u_t)$ is
  \begin{align*}
    dx^\star (u_t) &=F(x^\star (u_t), u_t) dt 
    +  v(\theta(t), u_t) dt + \Lambda (u_t) d\bs_{t}^\xi,
\end{align*}
where
\begin{subequations}\label{pf4:trck_JD_new}
\begin{align}
    v(\theta(t), u_t)
    %--------------------------------------------
    &= \frac{\partial x^\star}{\partial \theta} (u_t) (- c (u_t -\theta(t)) )
    + \frac{1}{2} \sigma_u^2 \sum_{i=1}^m u_{t,i} (a_i - u_{t,i}) \frac{\partial^2 x^\star (u_t)}{\partial u_{t,i}^2}
    \label{pf31:trck_JD_new}\\
    %--------------------------------------------
    %--------------------------------------------
    \Lambda (u_t) 
    %--------------------------------------------
    &= \sigma_u \frac{\partial x^\star}{\partial u} (u_t) \operatorname{diag}(u_t \odot (a - u_t))^\frac{1}{2}.
    \label{pf32:trck_JD_new}
\end{align}
\end{subequations}
Thus, $h_{\mathsf{JD}}$ is called the It\^o drift correction constant associated with the JD generator.

%%%%%%%%%%%%%%%%%%%%%%%%%%%%%%%%%%%%%%%%%%%%%%%%%%%%%%%%%%%%%%%%%%
%%%%%%%%%%%%%%%%%%%%%%%%%%%%%%%%%%%%%%%%%%%%%%%%%%%%%%%%%%%%%%%%%%

\subsubsection{Stochastic Input, Tracking a Deterministic Curve}

\begin{theorem}[Stochastic Equilibrium Tracking with respect to Jacobi Diffusion Equilibrium Curve: Stochastic Input, Tracking a Deterministic Curve]\label{thm:trck_JD_SIDC}
    Given~$F:\real^{n} \times \cU \to \real^{n}$, and~$\Sigma : \real^{n} \times \real^{m} \to \real^{n \times r}$, where $\cU \subset \real^m$ is compact, we impose the same assumptions as Theorem~\ref{thm:trck_OU_DIDC}. In addition, we assume that
 \begin{enumerate}
    \setcounter{enumi}{6}
    \renewcommand{\theenumi}{A\arabic{enumi}}
    \item \label{A7:trck_JD} Feller condition: for any~$t \ge 0$,
    \begin{align*}
    \frac{\sigma_u^2}{2c} a \le \theta (t) \le \left( 1 - \frac{\sigma_u^2}{2c}\right) a.
    \end{align*}
\end{enumerate}
    Let~$x^\star(\theta(t))$ be a time-varying equilibrium curve.
    Then, for each $\alpha\in(0,1)$ and any finite $t \ge 0$, for any realization $(x_t,u_t)$ of the SDE~\eqref{eq1:SDE_OU_SI} with~\eqref{eq:SDE_JD_SI} from random square integrable initial condition $(x_0,u_0)$ under continuously differentiable deterministic parameter $\theta:\real\to\cU$, we have
     \begin{align}\label{eq:trck_JD_SIDC}
             \E [ \| x_t - x^\star(\theta(t)) \|_{2,P^{1/2}}^{2}]
            \leq &
            \E [ \| x_0 - x^\star(\theta(0)) \|_{2,P^{1/2}}^{2} ] \e^{-c \alpha t}
            + \frac{1}{\alpha} \frac{\sigma_x^2}{c} (1-\e^{-c \alpha t})
            \nonumber\\
            &
              +\frac{1}{1-\alpha} \frac{\ell^2}{c^3}\int_{0}^{t}\e^{-c\alpha(t-\tau)}\|\dot \theta (\tau)\|_2^{2}\ d\tau
            \nonumber\\
            &
            + \frac{\ell^2}{c^2}\E [ \|  \xi_0 \|_2^2 ]  \e^{-c \alpha t}
             + \frac{1}{\alpha} \frac{\ell^2 }{c^2} \frac{\|a\|_2^2}{4} \frac{\sigma_u^2}{c} (1-\e^{-c \alpha t}),
        \end{align}
        and~\eqref{eq2:trck_JD_SIDC}.
\end{theorem}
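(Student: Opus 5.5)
The plan mirrors the OU case (Theorem~\ref{thm:trck_OU_SIDC}). We split the tracking error through the auxiliary deterministic-input trajectory and combine NISS with Theorem~\ref{thm:trck_OU_DIDC}. Set $\xi_t := u_t - \theta(t)$, so that $\E[\|\xi_0\|_2^2]$ has the meaning used in~\eqref{eq:trck_JD_SIDC}. Let $y_t$ be the realization of~\eqref{eq:SDE_OU_DIDC}, driven by the deterministic input $\theta(t)$, with $y_0 = x_0$ and a Brownian motion independent of the one driving $x_t$. Using $\|a+b\|^2 \le 2\|a\|^2 + 2\|b\|^2$ would introduce the factors $2$ responsible for the rate $c\alpha$ instead of $2c\alpha$. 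Instead, I would work directly with the joint process $(x_t,u_t)$ and the Lyapunov-type function $\varphi = \|x - x^\star(\theta(t))\|_{2,P^{1/2}}^2$, running the Dynkin/Dini-derivative argument of Theorem~\ref{thm:NISS}.

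\textbf{Step 1.} Compute the generator along~\eqref{eq1:SDE_OU_SI}. Write $e_t = x_t - x^\star(\theta(t))$, so that
\[
F(x_t,u_t) - \tfrac{d}{dt}x^\star(\theta(t)) = \bigl[F(x_t,u_t)-F(x^\star(\theta(t)),\theta(t))\bigr] - \tfrac{\partial x^\star}{\partial\theta}\dot\theta .
\]
Contraction (A1) and Lipschitzness (A2) with $\|\cdot\|_\cU=\|\cdot\|_2$, together with $\|\dot x^\star\|_{2,P^{1/2}}\le \frac{\ell}{c}\|\dot\theta\|_2$, give
\[
\mathcal L\varphi \le -2c\varphi + 2\sqrt{\varphi}\Bigl(\ell\|\xi_t\|_2 + \tfrac{\ell}{c}\|\dot\theta\|_2\Bigr) + \sigma_x^2 .
\]
Taking expectations, using Jensen as in~\eqref{pf6:NISS}, and applying Young's inequality twice yields
\[
D^+g \le -c\alpha g + \sigma_x^2 + \tfrac{\ell^2}{c(1-\alpha)}\tfrac{1}{c^2}\|\dot\theta\|_2^2 + \tfrac{\ell^2}{c}\E\|\xi_t\|_2^2 .
\]
Here one Young split with weight $c(1-\alpha)$ handles $\dot\theta$, and one with weight $c$ handles $\xi$, leaving decay $c\alpha$. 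Normalization (A4) is not needed here because $\|\cdot\|_\cU$ already enters via~$\ell$.

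\textbf{Step 2.} Bound the second moment of the JD noise. By Feller's test under (A7), $u_t\in(\0_m,a)$ almost surely, so Dynkin's formula applies on $(\0_m,a)$. Applying it to $\|u_t-\theta(t)\|_2^2$ gives
\[
\tfrac{d}{dt}\E\|\xi_t\|_2^2 \le -2c\,\E\|\xi_t\|_2^2 - 2\E[\xi_t^\top\dot\theta] + \sigma_u^2\,\E\sum_i u_{t,i}(a_i-u_{t,i}) .
\]
Since $u_i(a_i-u_i)\le a_i^2/4$, the diffusion term is at most $\sigma_u^2\|a\|_2^2/4$. This is where the factor $\|a\|_2^2/4$ replaces the OU intensity $\sigma_\xi^2$. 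The cross term with $\dot\theta$ is the delicate point. I would either absorb it into the $\dot\theta$ term of Step~1 via another Young split, or, more cleanly, define $\xi_t$ through the linear centered equation $d\xi_t = -c\xi_t\,dt - \dot\theta\,dt + (\cdot)\,d\bs$ and fold the $\dot\theta$ contribution into the deterministic-curve term. Either way the target is
\[
\E\|\xi_t\|_2^2 \le \E\|\xi_0\|_2^2 e^{-2ct} + \tfrac{\|a\|_2^2}{4}\tfrac{\sigma_u^2}{2c}(1-e^{-2ct}) \quad \text{(plus a $\dot\theta$ term).}
\]

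\textbf{Step 3.} Insert the Step~2 bound into the differential inequality of Step~1 and apply the comparison principle. Bounding the convolution $\int_0^t e^{-c\alpha(t-\tau)}e^{-2c\tau}d\tau \le 1/c$ and $\int_0^t e^{-c\alpha(t-\tau)}d\tau \le (1-e^{-c\alpha t})/(c\alpha)$ produces the terms $\frac{\ell^2}{c^2}\E\|\xi_0\|^2e^{-c\alpha t}$ and $\frac1\alpha\frac{\ell^2}{c^2}\frac{\|a\|^2}{4}\frac{\sigma_u^2}{c}(1-e^{-c\alpha t})$. Taking the $\limsup$ then gives~\eqref{eq2:trck_JD_SIDC}.

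The main obstacle is Step~2. The JD noise is mean-reverting to the moving target $\theta(t)$, not to zero. Handling the $\dot\theta$ cross term without spoiling the constants, and justifying Dynkin's formula for the weak solution with non-Lipschitz dispersion, both rely on Feller confinement to $(\0_m,a)$.
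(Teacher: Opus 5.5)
Your Step~1 is correct, and it is a leaner route than the paper's. The paper lifts to the state $(u,x)$, uses the cascade weight $P_{c/\ell}$ of Lemma~\ref{lem:lgnrm_cscd} (which needs $\|P\|_2=1$), and applies item~\ref{itm2:NISS} of Theorem~\ref{thm:NISS}. You instead work with $\|x-x^\star(\theta(t))\|_{2,P^{1/2}}^2$ directly.

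The genuine gap is the point you flag and then drop: the $\dot\theta$-forcing of $\xi_t=u_t-\theta(t)$. Your option~(b) is not a fix, since $d\xi_t=-c\xi_t\,dt-\dot\theta(t)\,dt+(\cdot)\,d\bs_t^u$ is just the equation of $u_t-\theta(t)$ again. Because the JD reverts to the moving target $\theta(t)$, $\xi_t$ carries a bias $\approx-\dot\theta/c$. This bias enters Step~1 through $\frac{\ell^2}{c}\E\|\xi_t\|_2^2$ at the same size as the direct $\dot x^\star$ term, yet in Step~3 your ``plus a $\dot\theta$ term'' simply disappears. If you carry it, e.g.\ with Young weight $c(1-\alpha)$ on the cross term, you get $\E\|\xi_t\|_2^2\le \e^{-c(1+\alpha)t}\E\|\xi_0\|_2^2+\frac{1}{c(1-\alpha)}\int_0^t\e^{-c(1+\alpha)(t-s)}\|\dot\theta(s)\|_2^2\,ds+\frac{\|a\|_2^2}{4}\frac{\sigma_u^2}{c(1+\alpha)}$. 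This recovers the $\sigma_x$-, $\xi_0$- and $\sigma_u$-terms of \eqref{eq:trck_JD_SIDC}, but the $\dot\theta$ coefficient becomes $\frac{2}{1-\alpha}\frac{\ell^2}{c^3}$. A smaller slip: $\int_0^t\e^{-c\alpha(t-\tau)}\e^{-2c\tau}d\tau\le 1/c$ does not produce the factor $\e^{-c\alpha t}$; the right bound is $\e^{-c\alpha t}/(c(2-\alpha))$.

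Better bookkeeping cannot close this, because the stated coefficient is false. Take $n=m=1$, $P=1$, $F(x,u)=-cx+\ell u$, $\Sigma\equiv0$, $u_0=\theta(0)$, $x_0=x^\star(\theta(0))$, and $\dot\theta\equiv\omega$ on $[0,t]$, with $\theta$ kept inside the Feller interval. By linearity, and since the JD martingale term has zero mean,
$\E[x_t-x^\star(\theta(t))]=-\frac{\ell\omega}{c^2}\bigl(2(1-\e^{-ct})-ct\e^{-ct}\bigr)$ for every $\sigma_u$, and $\E[e_t^2]\ge(\E e_t)^2$. For $\alpha=\frac12$ and $ct=10$, this square is $\approx3.998\,\ell^2\omega^2/c^4$. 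The right-hand side of \eqref{eq:trck_JD_SIDC} is only $4(1-\e^{-5})\ell^2\omega^2/c^4+O(\sigma_u^2)\approx3.973\,\ell^2\omega^2/c^4+O(\sigma_u^2)$.

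The paper's proof misses the same term. It treats $(\theta(t),x^\star(\theta(t)))$ as a solution of the noise-free auxiliary system with input only in the $x$-channel. But $u\equiv\theta(t)$ solves $\dot u=-c(u-\theta(t))$ only if $\dot\theta\equiv0$; this differs from the OU case, where $\xi\equiv0$ is a genuine solution. Including the missing input $(\dot\theta,\dot x^\star)$ gives $\frac{\ell}{c}\|\dot\theta\|_2^2+\frac{c}{\ell}\|\dot x^\star\|_{2,P^{1/2}}^2\le\frac{2\ell}{c}\|\dot\theta\|_2^2$, which is the same doubling.

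So your route, completed as above, proves the corrected bound. That bound has $\frac{2}{1-\alpha}\frac{\ell^2}{c^3}$ in the finite-time estimate and $\frac{2}{\alpha(1-\alpha)}\frac{\ell^2}{c^4}$ in the $\limsup$, and it does not need $\|P\|_2=1$. No argument can yield \eqref{eq:trck_JD_SIDC} as written, since it fails for $\alpha=\frac12$.
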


\begin{proof}
    The proof is in Appendix~\ref{app:trck_JD_SIDC}.
\end{proof}

%%%%%%%%%%%%%%%%%%%%%%%%%%%%%%%%%%%%%%%%%%%%%%%%%%%%%%%%%%%%%%%%%%
%%%%%%%%%%%%%%%%%%%%%%%%%%%%%%%%%%%%%%%%%%%%%%%%%%%%%%%%%%%%%%%%%%

\subsubsection{Stochastic Input, Tracking a Stochastic Curve}

\begin{theorem}[Stochastic Equilibrium Tracking with respect to Jacobi Diffusion Equilibrium Curve: Stochastic Input, Tracking a Stochastic Curve]\label{thm:trck_JD_SISC}
 Given~$F:\real^{n} \times \cU \to \real^{n}$, and~$\Sigma : \real^{n} \times \real^{m} \to \real^{n \times r}$, where $\cU \subset \real^m$ is compact, we impose the same assumptions as Theorem~\ref{thm:trck_JD_SIDC} and Assumption~\ref{A5:trck_OU_SISC} of Theorem~\ref{thm:trck_OU_SISC}.
  Let~$x^\star(u_t)$ be a stochastic equilibrium curve, where~$u_t$ is generated by~\eqref{eq:SDE_JD_SI}.
    Then, for each $\alpha\in(0,1)$ and any finite $t \ge 0$, for any weak solution $(x_t,u_t)$ of the SDE~\eqref{eq1:SDE_OU_SI} with~\eqref{eq:SDE_JD_SI} from random square integrable initial condition $(x_0,u_0)$ under continuously differentiable deterministic parameter $\theta:\real\to\cU$, we have
    \begin{align}\label{eq:trck_JD_SISC}
        \E \bigl[ \|x_{t} - x^\star (u_t) \|^{2}_{2,P^{1/2}} \bigr]
             \leq& \E \bigl[ \|x_{0}-x^\star (u_0) \|^{2}_{2,P^{1/2}} \bigr]\e^{-2c\alpha t} 
            +\frac{1}{\alpha} \frac{\sigma_x^{2}}{2c} (1-\e^{-2c\alpha t})
            \nonumber\\
            &+\frac{1}{1-\alpha} \frac{\ell^2}{c^2} \int_{0}^{t}\e^{-2c\alpha(t-\tau)} \int_0^\tau \e^{-c (\tau-r)}  \| \dot \theta(r) \|_2^2 dr d\tau
            \nonumber\\
            & +\frac{1}{1-\alpha} \frac{\ell^2}{c} \E \bigl[ \| u_0 - \theta(0) \|_2^2\bigr]\int_{0}^{t}\e^{-2c\alpha(t-\tau)} \e^{-c \tau} \ d\tau
            \nonumber\\
            & + \frac{1}{\alpha} \left(  \frac{\ell^2}{c^2} \frac{3 \|a\|_2^2}{4} \frac{\sigma_u^2}{2c} +\frac{1}{1-\alpha} \frac{h_{\mathsf{JD}}^2}{2} \frac{\sigma_u^4}{4c^2}\right) ( 1 - e^{-2c\alpha t})
            \nonumber\\
            & +\frac{1}{1-\alpha} \frac{\ell^2}{c} \frac{\| a\|_2^2}{4} \frac{\sigma_u^2}{c}\int_{0}^{t}\e^{-2c\alpha(t-\tau)} (1 - \e^{-c\tau}) \ d\tau.
    \end{align}
    and, if~$\alpha \ge 1/2$,~\eqref{eq2:trck_JD_SISC}.
\end{theorem}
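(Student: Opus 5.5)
We follow the strategy sketched for the OU case in Remark~\ref{rem:OU} and in the proof of Theorem~\ref{thm:trck_OU_DIDC}. The idea is to regard the stochastic equilibrium $x^\star(u_t)$ as a realization of an auxiliary SDE of the form~\eqref{pf1:trck_OU_DIDC} with an extra input $v$, and then invoke the incremental NISS estimate.

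\emph{Step 1: SDE for the stochastic equilibrium.} By Assumption~\ref{A5:trck_OU_SISC} and the implicit function theorem (invertibility of $\partial F/\partial x$ follows from contractivity), $x^\star$ is $C^2$. Feller's condition (Assumption~\ref{A7:trck_JD}) keeps $u_t$ in $(\0_m,a)$ almost surely, so It\^o's formula and Dynkin's formula apply there. Applying It\^o's formula to $x^\star(u_t)$ with~\eqref{eq:SDE_JD_SI} gives
\[
dx^\star(u_t)=F(x^\star(u_t),u_t)\,dt+v(\theta(t),u_t)\,dt+\Lambda(u_t)\,d\bs^u_t,
\]
with $v,\Lambda$ as in~\eqref{pf4:trck_JD_new}. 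The vanishing first term is inserted for free. Here $\|\partial x^\star/\partial u\|\le \ell/c$ in the relevant norms, and $u_i(a_i-u_i)\le a_i^2/4$. Hence $\|\Lambda\|_{\mathsf F,P^{1/2}}^2\le \frac{\ell^2}{c^2}\frac{\|a\|_2^2}{4}\sigma_u^2$, and the It\^o correction in $v$ is bounded by $\frac12\sigma_u^2 h_{\mathsf{JD}}$.

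\emph{Step 2: incremental estimate.} We couple $x_t$ (driven by $\bs^x$) with $x^\star(u_t)$ (driven by $\bs^u$, independent) and apply $\mathcal L$ to $\|x-x^\star\|_{2,P^{1/2}}^2$ exactly as in the proof of item~\ref{itm1:NISS} of Theorem~\ref{thm:NISS}. The contraction gives $-2c\|e\|^2$. The diffusion trace gives $\sigma_x^2+\frac{\ell^2}{c^2}\frac{\|a\|_2^2}{4}\sigma_u^2$, and the cross term is $2\|e\|\,\|v\|$. Young's inequality with parameter $\alpha$ yields
\[
D^+g\le -2c\alpha g+\sigma_x^2+\tfrac{\ell^2}{c^2}\tfrac{\|a\|_2^2}{4}\sigma_u^2+\tfrac{1}{2c(1-\alpha)}\E\|v\|^2 .
\]
We then split $\|v\|^2\le 2\frac{\ell^2}{c^2}c^2\|u_t-\theta(t)\|_2^2+2\cdot\frac14\sigma_u^4h_{\mathsf{JD}}^2$. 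The constant Hessian part produces the $h_{\mathsf{JD}}^2\sigma_u^4/(8c^2)$ term after integrating against $\e^{-2c\alpha(t-\tau)}$.

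\emph{Step 3: second moment of $u_t-\theta(t)$ (main obstacle).} We need a bound on $\E\|u_t-\theta(t)\|_2^2$. Applying Dynkin to $\|u-\theta(t)\|_2^2$ gives drift $-2c\|u-\theta\|^2-2(u-\theta)^\top\dot\theta$ and diffusion $\sigma_u^2\sum_i u_i(a_i-u_i)\le \sigma_u^2\|a\|_2^2/4$. Young's inequality on the cross term gives $\frac{d}{dt}\E\|u-\theta\|^2\le -c\E\|u-\theta\|^2+\frac1c\|\dot\theta\|^2+\sigma_u^2\|a\|^2/4$. The comparison principle then gives
\[
\E\|u_\tau-\theta(\tau)\|^2\le \E\|u_0-\theta(0)\|^2\e^{-c\tau}+\tfrac1c\int_0^\tau \e^{-c(\tau-r)}\|\dot\theta(r)\|^2dr+\tfrac{\|a\|^2}{4}\tfrac{\sigma_u^2}{c}(1-\e^{-c\tau}).
\]
Substituting this into the comparison bound of Step~2 produces exactly the double integral, the $\E\|u_0-\theta(0)\|^2$ term, and the $(1-\e^{-c\tau})$ integral in~\eqref{eq:trck_JD_SISC}. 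The delicate points are three. First, Dynkin's formula must be justified for a weak solution that is confined to $(\0_m,a)$, with the Hessian bounded there. Second, the constants must be tracked so that the diffusion term matches $\frac{3\|a\|^2}{4}$ after merging. Third, the justification for exchanging expectation and integration follows the proof of Theorem~\ref{thm:NISS}.

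\emph{Step 4: limit.} Taking $\limsup$ requires evaluating terms such as $\int_0^t \e^{-2c\alpha(t-\tau)}\e^{-c\tau}d\tau$ and the nested integral. Their decay and their limits $\frac{1}{2c\alpha}$ and $\frac{1}{2c\alpha c}\limsup\|\dot\theta\|^2$ are cleanly comparable when $2\alpha\ge1$, i.e.\ $2c\alpha\ge c$, which is where $\alpha\ge 1/2$ enters. Collecting terms then gives~\eqref{eq2:trck_JD_SISC}.
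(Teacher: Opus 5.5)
Your proposal is correct and follows essentially the same route as the paper: It\^o's formula gives the SDE for $x^\star(u_t)$ with $v,\Lambda$ as in~\eqref{pf4:trck_JD_new}, the NISS-type generator estimate with a pathwise Young inequality yields the analogue of~\eqref{pf5:trck_JD_new}, and a separate Dynkin/comparison bound on $\E[\|u_t-\theta(t)\|_2^2]$ is substituted back in, reproducing every term of~\eqref{eq:trck_JD_SISC}. Two small remarks: your diffusion trace bound $\frac{\ell^2}{c^2}\frac{\|a\|_2^2}{4}\sigma_u^2$ is sharper than the paper's $\frac{3\|a\|_2^2}{4}$ version, so no ``merging'' is needed and the stated inequality follows directly. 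Also, the $\limsup$ bound holds for every $\alpha\in(0,1)$, because $\int_0^t \e^{-2c\alpha(t-\tau)}\e^{-c\tau}\,d\tau\to 0$ for any $\alpha>0$; so $\alpha\ge 1/2$ does not enter where you say it does.
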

\begin{proof}
    The proof is in Appendix~\ref{app:trck_JD_SISC}.
\end{proof}

%%%%%%%%%%%%%%%%%%%%%%%%%%%%%%%%%%%%%%%%%%%%%%%%%%%%%%%%%%%%%%%%%%
%%%%%%%%%%%%%%%%%%%%%%%%%%%%%%%%%%%%%%%%%%%%%%%%%%%%%%%%%%%%%%%%%%

%%%%%%%%%%%%%%%%%%%%%%%%%%%%%%%%%%%%%%%%%%%%%%%%%%%%%%%%%%%%%%%%%%
%%%%%%%%%%%%%%%%%%%%%%%%%%%%%%%%%%%%%%%%%%%%%%%%%%%%%%%%%%%%%%%%%%

%%%%%%%%%%%%%%%%%%%%%%%%%%%%%%%%%%%%%%%%%%%%%%%%%%%%%%%%%%%%%%%%%%
%%%%%%%%%%%%%%%%%%%%%%%%%%%%%%%%%%%%%%%%%%%%%%%%%%%%%%%%%%%%%%%%%%

\section{Incremental input-to-state stability in Wasserstein space}
\label{sec:incremental-ISS-Wasserstein}
In this section, as a different ISS property of SDEs, we study ISS of probability densities in Wasserstein spaces. Similarly to the previous sections, we assume Lipschitz continuity and linear growth for the drift vector field and the dispersion matrix as well as the contractivity and Lipschitz properties of the drift vector field. However, we do not impose boundedness of dispersion

We consider a simpler SDE than~\eqref{eq:SDE}:
\begin{align}\label{eq:SDE_smp}
    dx_{t} = F(x_{t}, u(t)) dt + \hat \Sigma (t) d\bs_{t},
\end{align}
where $\hat \Sigma :\real \to \real^{n \times r}$ is time-dependent matrix.
The time evolution of the probability density $\upmu(t,x)$ of the process $x_t$ governed by the SDE~\eqref{eq:SDE_smp}, is given by the Fokker-Planck equation (or called the Kolmogorov forward equation):
\begin{align}\label{eq:FP}
    \frac{\partial \upmu (t,x)}{\partial t} 
    %--------------------------------------------
    &= - \sum_{i=1}^n \frac{\partial (\upmu (t,x) F_i(x,u(t)))}{\partial x_i}
     + \frac{1}{2} \operatorname{trace}\bigl(\hat \Sigma (t)^{\top} \operatorname{Hess}_x ( \upmu(t,x) ) \hat \Sigma (t)\bigr).
\end{align}

As a distance of two probability measures, we employ the Wasserstein distance.

\begin{definition}[Wasserstein metric]
    Given an arbitrary norm $\| \cdot \|$ and $p \in [1, \infty]$, the \emph{$p$-Wasserstein distance} between the probability measures $\upmu^x$ and $\upmu^y$ with finite $p$-moments on $\real^n$ is defined by
    \begin{align*}
        W_{p}(\upmu^x,\upmu^y)
        %--------------------------------------------
        &:=\inf_{\pi\in\Pi (\upmu^x,\upmu^y)} \bigl(\E_{\pi}[\|x-y\|^{p}]\bigr)^{\frac{1}{p}},
    \end{align*}
    where $\Pi (\upmu^x,\upmu^y)$ is the set of joint probability measures with marginals $\upmu^x$ and $\upmu^y$, that is $\upmu^x(x) = \int_{\real^n} \pi (x,y) dy$ and $\upmu^y(y) = \int_{\real^n} \pi (x,y) dx$.
    \red
\end{definition}

As the main result of this section, we study incremental input-to-state stability in a Wasserstein distance as follows.

\begin{theorem}[Incremental input-to-state stability in Wasserstein distance]\label{thm:iISS_Wss}
Given an input-dependent vector field $F:\real^{n} \times \cU \to \real^{n}$, time-dependent matrix $\hat \Sigma :\real \to \real^{n \times r}$, and deterministic measurable inputs $u^{x},u^{y}:\real\to\cU$, where $\cU \subset \real^m$ is compact, consider the realizations driven by the same Brownian motion:
    \begin{subequations}\label{eq:SDE_smp2}
    \begin{align}
        dx_{t}&=F(x_{t}, u^{x}(t)) dt + \hat \Sigma (t) d\bs_{t},\\
        %--------------------------------------------
        dy_{t}&=F(y_{t}, u^{y}(t)) dt + \hat \Sigma (t) d\bs_{t},
    \end{align}        
    \end{subequations}
    and the corresponding Fokker-Planck equations~\eqref{eq:FP}. 
    We impose Assumption~\ref{asm:lip} of Lipschitz continuity and linear growth for the drift vector field $F$ and the dispersion matrix~$\hat \Sigma$.
    We additionally assume that there exists a norm $\|\cdot\|$ such that
    \begin{enumerate}
    \renewcommand{\theenumi}{A\arabic{enumi}}
    
        \item \label{A1:ISS_Wdst} Contraction in $x$: with respect to the state $x$, the map $F$ is strongly infinitesimally contracting with rate $c>0$ with respect to norm $\|\cdot\|$, uniformly in $u \in \cU$;
        
        \item \label{A2:ISS_Wdst} Lipschitz in $u$: with respect to the input $u$, the map $F$ from normed space~$(\real^n,\|\cdot\|)$ to normed space~$(\cU, \|\cdot\|_\cU)$ is Lipschitz continuous with constant $\ell>0$, uniformly in $x \in \real^n$.
        
    \end{enumerate}
    Then, for each $p \in [1, \infty]$ and any finite $t \ge 0$, any pair of solutions to the Fokker-Planck equations $\upmu_t^x(x) = \upmu^x (t, x)$ and $\upmu_t^y(y) = \upmu^y (t, y)$ with the initial distributions $\upmu^x (0, x) = \upmu_0^x(x)$ and $\upmu^y (0, y) = \upmu_0^y(y)$ with finite $p$-moments satisfies
    \begin{align}\label{eq:ISS_Wdst}
        W_p(\upmu_t^x,\upmu_t^y)
        %--------------------------------------------
        \leq \e^{-ct}W_p(\upmu_0^x,\upmu_0^y) 
        + \ell \int_0^t \e^{-c(t-\uptau)}\|u^x(\uptau)-u^y(\uptau)\|_{\mathcal{U}}d\uptau.
    \end{align}
\end{theorem}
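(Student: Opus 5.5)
The plan is a synchronous coupling argument. The two SDEs in~\eqref{eq:SDE_smp2} are driven by the same Brownian motion $\bs_t$ and share the state-independent dispersion $\hat\Sigma(t)$, so the noise cancels in the difference. Fix $p\in[1,\infty]$ and $\varepsilon>0$, and choose an optimal (or $\varepsilon$-optimal) coupling $\pi_0\in\Pi(\upmu_0^x,\upmu_0^y)$ for $W_p(\upmu_0^x,\upmu_0^y)$. Such a coupling exists for $p<\infty$ by the standard tightness and lower semicontinuity argument, and an $\varepsilon$-optimal one suffices in any case. Draw $(x_0,y_0)\sim\pi_0$ independent of $\bs$. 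By Proposition~\ref{prop:sol}, Assumption~\ref{asm:lip} gives unique strong solutions $x_t,y_t$ on a common probability space. Their laws at time $t$ are $\upmu_t^x,\upmu_t^y$, the solutions of the Fokker--Planck equations~\eqref{eq:FP}; I take this identification of the FP solution with the law of the SDE solution as standard. Hence the law of $(x_t,y_t)$ belongs to $\Pi(\upmu_t^x,\upmu_t^y)$, and $W_p(\upmu_t^x,\upmu_t^y)\le \bigl(\E\|x_t-y_t\|^p\bigr)^{1/p}$ (for $p=\infty$, the essential supremum of $\|x_t-y_t\|$).

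Next comes the pathwise estimate. The difference $e_t:=x_t-y_t$ satisfies
$de_t=\bigl(F(x_t,u^x(t))-F(y_t,u^y(t))\bigr)dt$ with no martingale part. So for almost every $\omega$, $t\mapsto e_t(\omega)$ is absolutely continuous, and the dynamics are deterministic along each sample path. The pair $(x_t(\omega),y_t(\omega))$ is then a pair of absolutely continuous curves whose difference evolves like the deterministic setting of Proposition~\ref{prop:ISS}. Write $F(x,u^x)-F(y,u^y)=[F(x,u^x)-F(y,u^x)]+[F(y,u^x)-F(y,u^y)]$. Assumption~\ref{A1:ISS_Wdst}, through the weak pairing / one-sided Lipschitz characterization of contraction with respect to an arbitrary norm, gives $D^+\|e_t\|\le -c\|e_t\|+\ell\|u^x(t)-u^y(t)\|_{\cU}$, using Assumption~\ref{A2:ISS_Wdst} for the second bracket. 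The comparison lemma (Grönwall for Dini derivatives) then yields, almost surely,
\begin{align*}
\|x_t-y_t\|\le \e^{-ct}\|x_0-y_0\|+\ell\int_0^t\e^{-c(t-\uptau)}\|u^x(\uptau)-u^y(\uptau)\|_{\cU}d\uptau .
\end{align*}
Equivalently, one can cite the proof of Proposition~\ref{prop:ISS} applied pathwise, since that proof only uses the drift difference.

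Finally, take the $L^p(\Omega)$ norm of both sides. The second term on the right is deterministic, so Minkowski's inequality gives $(\E\|x_t-y_t\|^p)^{1/p}\le \e^{-ct}(\E\|x_0-y_0\|^p)^{1/p}+\ell\int_0^t\cdots$. For $p=\infty$, the same inequality follows directly from the essential supremum. Combining this with the coupling inequality and the choice of $\pi_0$ gives~\eqref{eq:ISS_Wdst} up to $\varepsilon$; letting $\varepsilon\to0$ concludes. Finite $p$-moments at time $t$ follow from the same pathwise bound, since $y_t$ inherits moments or from~\eqref{eq:bd} when $p\le2$; more simply, the right-hand side is finite.

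The main obstacle is the pathwise differential inequality for a general, possibly non-smooth norm. The function $\|\cdot\|$ is not $C^2$, so Dynkin's formula is unavailable. The essential point is that the common additive noise cancels, which turns $e_t$ into a finite-variation process. I would handle this with the upper Dini derivative of $\|e_t\|$ along almost every path and the equivalence, for continuous $F$, between strong infinitesimal contractivity and $\llbracket F(x,u)-F(y,u),x-y\rrbracket\le -c\|x-y\|^2$ for the weak pairing. If needed, I would first mollify $F$ in $x$ to obtain the differentiable case and then pass to the limit.
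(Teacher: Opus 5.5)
Your proposal is correct and follows essentially the same route as the paper: a synchronous coupling under which the common additive noise cancels in $x_t-y_t$, the pathwise deterministic incremental ISS bound obtained from contractivity in $x$ and Lipschitzness in $u$, and then a Minkowski-type bound to pass to $L^p$ before optimizing over initial couplings. The differences are only in presentation. You fix an $\varepsilon$-optimal coupling $\pi_0\in\Pi(\upmu_0^x,\upmu_0^y)$ up front, where the paper takes the infimum at the end. You also spell out the pathwise step (absolute continuity of $e_t$, the weak-pairing differential inequality holding a.e.), which the paper simply asserts.
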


\begin{proof}
    Let $x_t$ and $y_t$ denote the solution to the SDE~\eqref{eq:SDE_smp2} with $x_0$ drawn from $\upmu^x_0$ and $y_0$ drawn from $\upmu^y_0$, respectively and further assume $x_t$ and $y_t$ are driven by the same realization of Brownian motion. Since $x_t$ and $y_t$ are driven by the same Brownian motion, the process $x_t-y_t$ is subject to no process noise. From the contractivity (with respect to $z$) and Lipschitzness (wit respect to $u$) assumptions (Assumptions~\ref{A1:ISS_Wdst} and~\ref{A2:ISS_Wdst}) on $F$, we obtain
    \begin{align}\label{pf1:thm:ISS_Wdst}
        \|x_t - y_t\| 
        %--------------------------------------------
        &\leq \e^{-ct}\|x_0 - y_0\|
         + \ell \int_0^t \e^{-c(t-\uptau)} \|u^x (\uptau) - u^y (\uptau)\|_{\mathcal{U}}d\uptau \; \mbox{ a.s.}
    \end{align}
    Let $\pi_0$ denote a joint probability distribution with marginals $\upmu^x_0$ and $\upmu^y_0$, and let $\pi_t$ be the corresponding joint distribution for the joint solution $(x_t, y_t)$. 
    
    We take the expectation $\E_{\pi_0}[ \cdot ]$
    of~\eqref{pf1:thm:ISS_Wdst}. Select $X = \|x_t -y_t\|$,
    $Y=\e^{-ct}\|x_0 - y_0\|$, and $b = \ell \int_0^t \e^{-c(t-\uptau)}
    \|u^x (\uptau) - u^y (\uptau)\|_{\mathcal{U}}d\uptau$. Since
    $\E_{\pi_{t_0}}[\|Y\|^p]^{1/p}$ is finite for any $p \in [1, \infty]$
    (where $\E[\|\cdot\|^p]^{1/p}$ is read as
    $\mathrm{ess\,sup}\,\|\cdot\|$ when $p=\infty$), applying
    Lemma~\ref{lem:Minkowski} to~\eqref{pf1:thm:ISS_Wdst} yields
    \begin{align*}
        W_{p}(\upmu^x,\upmu^y)
        %--------------------------------------------
        &\le \E_{\pi_t} \bigl[ \|x_t - y_t\|^p\bigr ]^{1/p}
        \nonumber\\
        %--------------------------------------------
        &=\E_{\pi_0} \bigl[ \|x_t - y_t\|^p\bigr ]^{1/p}
        \leq \e^{-ct} \E_{\pi_0}\bigl[ \|x_0 - y_0\|^p\bigr ]^{1/p}
        + \ell \int_0^t \e^{-c(t-\uptau)} \|u^x (\uptau) - u^y (\uptau)\|_{\mathcal{U}}d\uptau.
    \end{align*}
    Taking the infimum with respect to $\pi_0\in\Pi (\upmu_0^x,\upmu_0^y)$ leads to~\eqref{eq:ISS_Wdst} for any $p \in [1, \infty]$.
\end{proof}

Let $(X,d)$ be a complete, separable metric space. For any finite~$p \ge
1$, the space $(\mathcal{P}_{p}(X),W_{p})$ is
complete~\citep{FB:08}. Moreover, when $p=+\infty$, also
$(\mathcal{P}_{\infty}^{b}(X), W_{\infty})$ is known to be
complete~\citep{CRG-RMS:84}, with $\mathcal{P}_{\infty}^{b}(X)$ being the
space of probability measures with bounded support.  Thus, if~$u^x(t)$ and
$u^y(t)$ are identical and constant, and~$\hat \Sigma$ is constant, then an
equilibrium distribution of the Fokker-Planck equation is globally
exponentially stable for any finite~$p \ge 1$.

\begin{corollary}[Global Exponential Stability of Equilibrium Distribution]
  Given~$F:\real^{n} \times \cU \to \real^{n}$ and constant~$\hat \Sigma
  \in \real^{n \times r}$, consider the pair of SDEs~\eqref{eq:SDE_smp2},
  where~$u^x(t) \equiv u^y(t) \equiv \bar u$ for some constant~$\bar u \in
  \cU$. If the same assumptions as Theorem~\ref{thm:iISS_Wss} hold, then
    \begin{enumerate}
        \item\label{item1:GES_Wss} for each~$p \in [1, \infty]$ and any finite $t \ge 0$, any pair of solutions to the Fokker-Planck equations $\upmu_t^x(x) = \upmu^x (t, x)$ and $\upmu_t^y(y) = \upmu^y (t, y)$ with the initial distributions $\upmu^x (0, x) = \upmu_0^x(x)$ and $\upmu^y (0, y) = \upmu_0^y(y)$ with finite $p$-moments satisfies
    \begin{align*}
        W_p(\upmu_t^x,\upmu_t^y)
        %--------------------------------------------
        \leq \e^{-ct}W_p(\upmu_0^x,\upmu_0^y);
    \end{align*}

        \item\label{item2:GES_Wss} for any~$p \in [1, \infty)$, an equilibrium distribution~$\mu^*$ for the Fokker-Planck equation~\eqref{eq:FP} is unique, and globally exponentially stable;

        \item\label{item3:GES_Wss} moreover, if~$F(x, \bar u) = -\nabla f(x)$ for a continuously differentiable~$c$-strongly convex function~$f:\real^n \to \real^n$, and~$\hat \Sigma = \sigma \I_n$, then the equilibrium distribution to which the Fokker-Planck equation converges is the Gibbs distribution with energy~$f$ and temperature~$\sigma^2/2$:
        \begin{align*}
            \mu^* (x) \propto \e^{-2 f(x)/\sigma^2}.
        \end{align*}
    \end{enumerate}
\end{corollary}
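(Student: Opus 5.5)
Item~\ref{item1:GES_Wss} follows directly from Theorem~\ref{thm:iISS_Wss}. Setting $u^x(t)\equiv u^y(t)\equiv \bar u$ makes the integral term in~\eqref{eq:ISS_Wdst} vanish. A constant $\hat\Sigma$ is a special case of a time-dependent dispersion, and it trivially satisfies the Lipschitz and linear growth conditions of Assumption~\ref{asm:lip} because it does not depend on $x$. Hence $W_p(\upmu_t^x,\upmu_t^y)\le \e^{-ct}W_p(\upmu_0^x,\upmu_0^y)$ for every $p\in[1,\infty]$.

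For item~\ref{item2:GES_Wss}, I fix $p\in[1,\infty)$ and $h>0$. Let $\Phi_h:\mathcal{P}_p(\real^n)\to\mathcal{P}_p(\real^n)$ be the time-$h$ solution map of the Fokker--Planck equation~\eqref{eq:FP}.

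\emph{Well-definedness.} I must show $\Phi_h$ preserves finite $p$-moments. I will take $x_0\sim\upmu_0$ and $y_0=\0_n$ deterministic, driven by the same Brownian motion. Bound~\eqref{pf1:thm:ISS_Wdst} then gives $\|x_h-y_h\|\le \e^{-ch}\|x_0\|$, so it suffices that the single process $y_h$ from a deterministic start has a finite $p$-th moment. The noise is additive and Gaussian, and the drift grows linearly. For $p\le 2$ this follows from~\eqref{eq2:bd}. For general $p$ I will use the standard moment estimates for SDEs with Lipschitz coefficients, or alternatively the decomposition $y_h = (\text{deterministic part}) + (\text{Gaussian convolution})$ together with contractivity.

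\emph{Contraction and fixed point.} By item~\ref{item1:GES_Wss}, $\Phi_h$ is a contraction with factor $\e^{-ch}<1$ on the complete metric space $(\mathcal{P}_p,W_p)$ (completeness by~\citep{FB:08}). The Banach fixed-point theorem gives a unique fixed point $\mu_h^*$.

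\emph{Independence of $h$.} Since the dynamics are autonomous, $\Phi_h\Phi_s=\Phi_s\Phi_h$. So $\Phi_s\mu_h^*$ is also a fixed point of $\Phi_h$, hence equals $\mu_h^*$. Taking $h=1$ and $s=1/k$ shows $\mu_1^*=\mu_{1/k}^*$, and therefore $\mu^*:=\mu_1^*$ is fixed by $\Phi_q$ for all rational $q>0$. To extend to all $t>0$, I need continuity of $t\mapsto\Phi_t\mu$ in $W_p$, which follows from continuity of sample paths plus the moment bounds. I expect this step to be the main technical obstacle, alongside the moment preservation above.

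\emph{Uniqueness and stability.} Any stationary distribution is a fixed point of $\Phi_1$, so it coincides with $\mu^*$. Global exponential stability then follows from item~\ref{item1:GES_Wss} with $\upmu_0^y=\mu^*$: $W_p(\upmu_t^x,\mu^*)\le\e^{-ct}W_p(\upmu_0^x,\mu^*)$.

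For item~\ref{item3:GES_Wss}, with $F=-\nabla f$ and $\hat\Sigma=\sigma\I_n$, I will verify directly that $\mu^*\propto\e^{-2f/\sigma^2}$ is stationary for~\eqref{eq:FP}. The stationary equation is $0=\nabla\cdot(\mu\nabla f)+\frac{\sigma^2}{2}\Delta\mu=\nabla\cdot\bigl(\mu\nabla f+\frac{\sigma^2}{2}\nabla\mu\bigr)$. For the Gibbs density, $\nabla\mu^*=-\frac{2}{\sigma^2}\mu^*\nabla f$, so the flux $\mu^*\nabla f+\frac{\sigma^2}{2}\nabla\mu^*$ vanishes identically. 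The density is normalizable and has all moments, because $c$-strong convexity gives $f(x)\ge f(x^\star)+\frac{c}{2}\|x-x^\star\|_2^2$, so it lies in $\mathcal{P}_p$. By the uniqueness in item~\ref{item2:GES_Wss}, it is the limit distribution. Here I treat $f$ as scalar-valued and assume enough smoothness for $\Delta\mu^*$; for merely $C^1$ $f$, stationarity holds in the weak sense.
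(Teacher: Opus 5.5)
Your proposal is correct and follows the paper's route: item~1 as the special case $u^x\equiv u^y\equiv\bar u$ of Theorem~\ref{thm:iISS_Wss}, item~2 via the Banach contraction theorem on $(\mathcal{P}_p,W_p)$, and item~3 by checking directly that the Gibbs density makes the right-hand side of~\eqref{eq:FP} vanish. You also supply details the paper leaves implicit, namely moment preservation, the time-$h$ map, and normalizability of $\e^{-2f/\sigma^2}$.

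One simplification: your commutation argument already gives $\Phi_s\mu_h^*=\mu_h^*$ for every real $s>0$, not only rational $s$. This is because $\Phi_s\mu_h^*$ is a fixed point of $\Phi_h$ and that fixed point is unique, so the continuity-in-$t$ step you flagged as the main obstacle is unnecessary.
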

\begin{proof}
    Item~\ref{item1:GES_Wss} is a special case of Theorem~\ref{thm:iISS_Wss}. Item~\ref{item2:GES_Wss} follows from Banach contraction theorem~\cite[Theorem 1.6]{FB:26-CTDS}. 

    We show item~\ref{item3:GES_Wss}, i.e.,
    \begin{align*}
        \sum_{i=1}^n \frac{\partial (\upmu^* (x) F_i(x,\bar u))}{\partial x_i}
        =
        \frac{\sigma^2}{2} \sum_{i=1}^n \frac{\partial^2 \mu^*(x)}{\partial x_i^2},
    \end{align*}
    where~$F_i(x, \bar u) = - \frac{\partial f(x)}{\partial x_i}$ and~$\mu^* (x) \propto \e^{-2 f(x)/\sigma^2}$. Since the constant scaling coefficient can be canceled, it suffices to show
    \begin{align*}
        - \frac{\partial}{\partial x_i} \left( \frac{\partial f(x)}{\partial x_i} \e^{-2 f(x)/\sigma^2}\right)
        =
        \frac{\sigma^2}{2} \frac{\partial^2 \e^{-2 f(x)/\sigma^2}}{\partial x_i^2}.
    \end{align*}
    This can readily be shown by taking one round of derivatives in the right hand side.
\end{proof}

\section{Conclusion}
\label{sec:conclusions}
In this paper, we have developed contraction theory for SDEs driven by
deterministic inputs and stochastic noise.  Given a weighted~$\ell_2$-norm
for the state space, we have shown that the standard ISS conditions for
deterministic control systems imply NISS for the corresponding SDEs,
providing a natural generalization of conventional noise-to-state (NSS) and
ISS analysis. We have further applied our NISS analysis to estimate error
bounds for stochastic equilibrium tracking under different scenarios: 1)
deterministic input with a deterministic equilibrium curve, 2) stochastic
input with a deterministic equilibrium curve, and 3) stochastic input with
a stochastic equilibrium curve, considering two types of stochastic
processes, OU process and JD process, to represent unbounded and bounded
noise, respectively. Finally, we have studied contractivity of SDEs with
respect to a Wasserstein metric and shown that the standard ISS conditions
for deterministic control systems also imply ISS with respect to an
arbitrary~$p$-Wasserstein metric.

\appendix
\section{Proof of Theorem~\ref{thm:trck_OU_SIDC}}\label{app:trck_OU_SIDC}
Before estimating the tracking error $x_t - x^\star (\theta(t))$, we estimate the contractivity rate and Lipschitz constant of the cascade interconnection:
\begin{subequations}\label{eq:ODE_cscd}
     \begin{align}
         \dot \xi (t) &= - c \xi (t), 
         \\
         %--------------------------------------------
         \dot x(t) &= F(x(t), u(t)), \quad u(t) = \theta(t)+\xi (t).
     \end{align}
\end{subequations}

\begin{lemma}\label{lem:lgnrm_cscd}
    Given a vector field $F: \real^{n}\times \real^{m} \to\real^{n}$, assume that there exists a matrix $P=P^{\top}\succ \0_{n \times n}$ such that
    \begin{enumerate}
    \renewcommand{\theenumi}{A\arabic{enumi}}
        \item Normalization of $P$: $\|P\|_2 = 1$;
        
        \item Contraction in $x$: there exists $c>0$ such that $\olp (F) \le - c$, uniformly in $u \in \cU$;
        
        \item Lipschitz in $u$: for~$\| \cdot \|_\cU = \|\cdot\|_2$, there exists $\ell>0$ such that $\lp (F) \le \ell$, uniformly in $x \in \real^n$.
    \end{enumerate}
    Then, with respect to the state $(\xi, x)$, the cascade interconnection~\eqref{eq:ODE_cscd} is strongly infinitesimally contracting with rate $\frac{c}{2}>0$ in the weighted norm $\|\cdot\|_{2,P_\frac{c}{\ell}^{1/2}}$, $P_\frac{c}{\ell} := \begin{bmatrix}
    \frac{\ell}{c} \I_m & \0_{m \times n} \\ \0_{n \times m} & \frac{c}{\ell} P
    \end{bmatrix}$ uniformly in $u \in \real^m$.
\end{lemma}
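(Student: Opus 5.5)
We verify the one-sided Lipschitz inequality directly for the cascade vector field $G(\xi,x) := (-c\xi,\, F(x,\theta+\xi))$, with $\theta$ fixed (uniformity in $\theta$ will be automatic), with respect to the weighted norm $\|\cdot\|_{2,P_{c/\ell}^{1/2}}$. Take two points $(\xi_1,x_1)$ and $(\xi_2,x_2)$ and write $\Delta\xi := \xi_1-\xi_2$ and $\Delta x := x_1-x_2$. The goal is
\begin{align*}
(G(\xi_1,x_1)-G(\xi_2,x_2))^\top P_{\frac{c}{\ell}} \begin{bmatrix}\Delta\xi\\ \Delta x\end{bmatrix} \le -\frac{c}{2}\Bigl(\frac{\ell}{c}\|\Delta\xi\|_2^2 + \frac{c}{\ell}\|\Delta x\|_{2,P^{1/2}}^2\Bigr).
\end{align*}

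The $\xi$-block contributes exactly $-\ell\|\Delta\xi\|_2^2$. For the $x$-block we split
\[
F(x_1,\theta+\xi_1)-F(x_2,\theta+\xi_2) = \bigl[F(x_1,\theta+\xi_1)-F(x_2,\theta+\xi_1)\bigr] + \bigl[F(x_2,\theta+\xi_1)-F(x_2,\theta+\xi_2)\bigr].
\]
The first bracket, paired with $P\Delta x$, is bounded by $-c\|\Delta x\|_{2,P^{1/2}}^2$ by the contraction assumption. For the second bracket we use Cauchy--Schwarz in the $P$-inner product and then the Lipschitz assumption with $\|\cdot\|_\cU=\|\cdot\|_2$. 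This gives
\[
\ell\|\Delta\xi\|_2\|\Delta x\|_{2,P^{1/2}},
\]
and after the weight $\frac{c}{\ell}$ the cross term is at most $c\|\Delta\xi\|_2\|\Delta x\|_{2,P^{1/2}}$.

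Collecting terms, the left-hand side is at most
\[
-\ell\|\Delta\xi\|_2^2 - \frac{c^2}{\ell}\|\Delta x\|_{2,P^{1/2}}^2 + c\|\Delta\xi\|_2\|\Delta x\|_{2,P^{1/2}}.
\]
Set $a=\sqrt{\ell}\,\|\Delta\xi\|_2$ and $b=\frac{c}{\sqrt{\ell}}\|\Delta x\|_{2,P^{1/2}}$. The bound then reads $-a^2-b^2+ab$. By Young's inequality $ab\le \frac12(a^2+b^2)$, so it is at most $-\frac12(a^2+b^2)$. Since $a^2+b^2 = c\bigl(\frac{\ell}{c}\|\Delta\xi\|_2^2+\frac{c}{\ell}\|\Delta x\|_{2,P^{1/2}}^2\bigr)$, we get $-\frac12(a^2+b^2) = -\frac{c}{2}\|(\Delta\xi,\Delta x)\|^2_{2,P_{c/\ell}^{1/2}}$, which is the claimed contraction rate $c/2$.

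The main point needing care is the cross term. The weights $\frac{\ell}{c}$ and $\frac{c}{\ell}$ are chosen so that $\ell\cdot c\cdot\frac{1}{\ell}$ balances the diagonal terms, which is what makes the symmetric Young split work. The normalization $\|P\|_2=1$ is not strictly needed in this direct argument, since the Lipschitz bound already measures $F$-differences in the $P$-norm. I would note it only as consistent with the statement, and use it if one prefers to bound $\|P^{1/2}\|_2\le 1$ when passing between the $\ell_2$ and $P$-weighted norms.
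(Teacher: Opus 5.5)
Your proof is correct and is essentially the paper's argument. The paper cites the general cascade estimate $\mathsf{osLip}_{2,P_\varepsilon^{1/2}} \le \max\{-c, \olp(F)\} + \frac{\varepsilon}{2}\,\mathsf{Lip}(P^{1/2}F)$ for the weights $P_\varepsilon = \mathrm{diag}(\varepsilon^{-1}\I_m, \varepsilon P)$ and then sets $\varepsilon = c/\ell$; you carry out that same computation by hand (split the drift difference, apply Cauchy--Schwarz and the input-Lipschitz bound to the cross term, then Young's inequality) with the weights fixed from the start. Your remark on the normalization is also right: $\lp(F)$ is already measured in $\|\cdot\|_{2,P^{1/2}}$, so the factor $\|P^{1/2}\|_2$ in the paper's bound is superfluous and $\|P\|_2 = 1$ is not actually needed for this lemma.
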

\begin{proof}
    Define $P_\varepsilon := \begin{bmatrix}
    \varepsilon^{-1} \I_m & \0_{m \times n} \\ \0_{n \times m} & \varepsilon P
    \end{bmatrix}$, $\varepsilon > 0$. Similarly to~\cite[E2.40]{FB:26-CTDS}, we have
    \begin{align*}
       \mathsf{osLip}_{2, P_\varepsilon^{1/2}} \begin{bmatrix} - c \operatorname{id} \\ F\end{bmatrix}
        \le \max\{-c, \mathsf{osLip}_{2,P^{1/2}}^x (F)\} + \frac{\varepsilon}{2} \mathsf{Lip}_2^u (P^{1/2}F )
        %--------------------------------------------
        \le - c + \frac{\varepsilon \ell}{2} \|P^{1/2}\|_2 
        %--------------------------------------------
        = - c + \frac{\varepsilon \ell}{2}.
    \end{align*}
    The statement holds by selecting $\varepsilon = \frac{c}{\ell}$.
\end{proof}

\begin{proof}[Proof of Theorem~\ref{thm:trck_OU_SIDC}]
Consider an auxiliary dynamics of the SDE~\eqref{eq1:SDE_OU_SI} and~\eqref{eq2:SDE_OU_SI}:
    \begin{subequations}\label{pf1:trck_OU_SIDC}
    \begin{align}
        u_t &= \theta (t) + \xi_t,
        \quad d \xi_t = - c \xi_t dt + \frac{\sigma_\xi}{\sqrt{m}}  d\bs_{t}^\xi, \\
         dx_{t}&=F(x_{t}, u_t) dt +  v(t) dt  + \Sigma (x_{t}, u_t) d\bs_{t}^x.
    \end{align}
    \end{subequations}
    When $v(t) \equiv \0_n$, this is nothing but~\eqref{eq1:SDE_OU_SI} and~\eqref{eq2:SDE_OU_SI}. When $v(t) \equiv \dot x^\star (\theta (t))$ and noise free (i.e., $d\bs_{t}^\xi$ and $d\bs_{t}^x$ are both identically equal to zero and~$\xi_0 = \0_m$), we have~$x_{t}=x^\star (\theta (t))$. Namely, $x^\star (\theta (t))$ is a solution to \eqref{pf1:trck_OU_SIDC}.
    
    We consider applying item~\ref{itm2:NISS} of Theorem~\ref{thm:NISS} to the auxiliary SDE~\eqref{pf1:trck_OU_SIDC}. From Lemma~\ref{lem:lgnrm_cscd}, with respect to the state $(\xi, x)$, the vector field~$\begin{bmatrix}- c \xi \\ F(x, \theta + \xi) + v\end{bmatrix}$ is strongly infinitesimally contracting with rate $\frac{c}{2}>0$ in the weighted norm $\|\cdot\|_{2,P_\frac{c}{\ell}^{1/2}}$, uniformly in $\theta \in \cU$ and~$v\in \real^n$, and with respect to the input~$v$, the Lipschitz constant is~$1$. 
    Also, from the bounded dispersion assumption (Assumption~\ref{A3:NISS}), we have
        \begin{align*}
        &\operatorname{trace}\Biggl(\begin{bmatrix}
             \frac{\sigma_\xi}{\sqrt{m}} & \0_{m \times n}\\
            \0_{m \times n} & \Sigma (x, u)
        \end{bmatrix}^\top
        \begin{bmatrix}
                \frac{\ell}{c} \I_m & \0_{m \times n} \\ 
                \0_{n \times m} & \frac{c}{\ell} P
            \end{bmatrix}
        \begin{bmatrix}
             \frac{\sigma_\xi}{\sqrt{m}} & \0_{m \times n}\\
            \0_{m \times n} & \Sigma (x, u)
        \end{bmatrix}\Biggr)
        \nonumber\\
        %--------------------------------------------
        &\quad =
        \frac{\ell }{cm} \sigma_\xi^2 \operatorname{trace} \left( \I_m \right)
        +
        \frac{c}{\ell} \operatorname{trace} (\Sigma (x, u)^{\top}\ P\ \Sigma (x, u))
        \leq \frac{\ell }{c} \sigma_\xi^2 + \frac{c }{\ell} \sigma_x^2 .
    \end{align*}
    
    Applying item~\ref{itm2:NISS} of Theorem~\ref{thm:NISS} to the auxiliary SDE~\eqref{pf1:trck_OU_SIDC}, we have for~\eqref{eq1:SDE_OU_SI} and~\eqref{eq2:SDE_OU_SI} and~$x^\star (\theta(t))$,
    \begin{align}\label{pf3:trck_OU_SIDC}
             \E \biggl[ \left\| \begin{bmatrix} \xi_t \\ x_t \end{bmatrix} \! - \! \begin{bmatrix} \0_m \\ x^\star (\theta(t)) \end{bmatrix} \right\|_{2,P_\frac{c}{\ell}^{1/2}}^{2} \biggr]
            \leq& \E \biggl[ \left\| \begin{bmatrix} \xi_0 \\ x_0 \end{bmatrix} \! - \! \begin{bmatrix} \0_m \\ x^\star(\theta(0)) \end{bmatrix} \right\|_{2,P_\frac{c}{\ell}^{1/2}}^{2} \biggr] \e^{-c \alpha t}
            \nonumber\\
            & +\frac{1}{\alpha} \left(\frac{\ell }{c} \frac{\sigma_\xi^2}{c} + \frac{c }{\ell}\frac{\sigma_x^2 }{c}\right) (1-\e^{-c \alpha t})
            \nonumber\\
            & +\frac{1}{c (1-\alpha)}\int_{0}^{t}\e^{-c \alpha(t-\tau)}\|\dot x^\star (\theta (\tau))\|_{2,(\frac{c}{\ell}P)^{1/2}}^{2}\ d\tau.
    \end{align}
    From $\|\dot x^\star (\theta (\tau))\|_{2,P^{1/2}} \le \frac{\ell}{c} \|\dot \theta (\tau)\|_2$, we have
    \begin{align}\label{pf4:trck_OU_SIDC}
        \|\dot x^\star (\theta (\tau))\|_{2,(\frac{c}{\ell}P)^{1/2}}^{2}
        %--------------------------------------------
        \le \frac{\ell}{c}\|\dot \theta (\tau)\|_2^2.
    \end{align}
    Also, from the definition of~$P_{\frac{c}{\ell}}$, we obtain
    \begin{align}\label{pf5:trck_OU_SIDC}
    &\frac{c}{\ell} \E [ \| x_t - x^\star (\theta(t)) \|_{2,P^{1/2}}^{2}]
   \le 
    \E \biggl[ \left\| \begin{bmatrix} \xi_t \\ x_t \end{bmatrix} \! - \! \begin{bmatrix} \0_m \\ x^\star (\theta(t)) \end{bmatrix} \right\|_{2,P_\frac{c}{\ell}^{1/2}}^{2} \biggr].
    \end{align}
    Combining~\eqref{pf3:trck_OU_SIDC}--\eqref{pf5:trck_OU_SIDC} yields~\eqref{eq:trck_OU_SIDC}.
    Finally, \eqref{eq2:trck_OU_SIDC} is obtained by taking the limit superior of~\eqref{eq:trck_OU_SIDC}.
\end{proof}

%%%%%%%%%%%%%%%%%%%%%%%%%%%%%%%%%%%%%%%%%%%%%%%%%%%%%%%%%%%%%%%%%%
%%%%%%%%%%%%%%%%%%%%%%%%%%%%%%%%%%%%%%%%%%%%%%%%%%%%%%%%%%%%%%%%%%

%%%%%%%%%%%%%%%%%%%%%%%%%%%%%%%%%%%%%%%%%%%%%%%%%%%%%%%%%%%%%%%%%%
%%%%%%%%%%%%%%%%%%%%%%%%%%%%%%%%%%%%%%%%%%%%%%%%%%%%%%%%%%%%%%%%%%

\section{Proof of Theorem~\ref{thm:trck_OU_SISC}}\label{app:trck_OU_SISC}
\begin{proof}
Consider an auxiliary dynamics of the SDE~\eqref{eq1:SDE_OU_SI}:
\begin{align}\label{pf1:trck_OU_SISC}
    dx_{t}&=F(x_{t}, u_t) dt + \Sigma (x_{t}, u_t) d\bs_{t}^x +  v(\theta (t), \xi_t, u_t) dt + \Lambda (u_t) d\bs_{t}^\xi.
\end{align}
Note that $v(t) dt$ in~\eqref{pf1:trck_OU_DIDC} is replaced by $v(\theta (t), \xi_t, u_t) dt + \Lambda (u_t) d\bs_{t}^v$. If $v(\theta (t), \xi_t, u_t)$ and $\Lambda (u_t)$ are identically equal to zero, this is nothing but~\eqref{eq1:SDE_OU_SI}. When $d\bs_{t}^x$ is identical to zero, \eqref{pf1:trck_OU_SISC} becomes
\begin{align}\label{pf2:trck_OU_SISC}
    dx_{t}=F(x_{t}, u_t) dt +  v(\theta(t), \xi_t, u_t) dt + \Lambda (u_t) d\bs_{t}^\xi.
\end{align}
We first show that~$x^\star (u_t)$ satisfies this, i.e., \eqref{pf3:trck_OU_SISC} with~\eqref{pf4:trck_OU_SISC}. 

By the It\^o formula, the $k$th component of $x^\star (u_t)$ satisfies
\begin{align*}
    d x^\star_k (u_t)
    %--------------------------------------------
    &= \sum_{i=1}^m \frac{\partial x^\star_k}{\partial u_i} (u_t) d u_{i,t} 
    + \frac{1}{2} \sum_{i=1}^m \sum_{j=1}^m \frac{\partial^2 x^\star_k}{\partial u_i\partial u_j} (u_t) d u_{i,t} d u_{j,t},
\end{align*}
where $d\bs_{i,t}^\xi d\bs_{j,t}^\xi = \delta_{i,j}dt$ and $dt dt  = d\bs_{i,t}^\xi dt = dt d\bs_{i,t}^\xi = \0_m$. 

From~\eqref{eq2:SDE_OU_SI}, we have
\begin{align*}
    \sum_{i=1}^m \frac{\partial x^\star_k}{\partial u_i} (u_t) d u_{i,t}
    &= \frac{\partial x^\star_k}{\partial u} (u_t) d u_t
    = \frac{\partial x^\star_k}{\partial u} (u_t) \left((\dot \theta(t) - c \xi_t) dt + \frac{\sigma_\xi}{\sqrt{m}}  d\bs_{t}^\xi \right),
\end{align*}
and 
\begin{align*}
    &\sum_{i=1}^m \sum_{j=1}^m \frac{\partial^2 x^\star_k}{\partial u_i\partial u_j} (u_t) d u_{i,t} d u_{j,t}
    %--------------------------------------------
    =\frac{\sigma_\xi^2}{m} \operatorname{trace}\bigl(\operatorname{Hess} (x^\star_k (u_t))\bigr) dt.
\end{align*}
Thus, $x^\star(u_t)$ is a solution to~\eqref{pf2:trck_OU_SISC} for~\eqref{pf4:trck_OU_SISC}.

We consider applying item~\ref{itm2:NISS} of Theorem~\ref{thm:NISS} to the auxiliary dynamics~\eqref{pf1:trck_OU_SISC}. We need to concern~$v(\theta(t), \xi_t, u_t)$ and~$\Lambda (u_t)$. From boundedness and Lipschitzness of $\frac{\partial^2 x_i^\star}{\partial \theta^2}$, $i=1,\dots,m$ (Assumption~\ref{A6:trck_OU_SISC}), they satisfy the Lipschitz continuity and linear growth assumptions. 

Next, from the bounded dispersion assumption (Assumption~\ref{A3:NISS} in Theorem~\ref{thm:NISS}) and~\eqref{pf42:trck_OU_SISC}, we have
        \begin{align*}
        &\operatorname{trace}\left(\begin{bmatrix}
              \Sigma (x, u) & \0_{n \times m}\\
              \0_{n \times r} & \Lambda (u)
        \end{bmatrix}^\top\right.
        \begin{bmatrix}
                P & - P \\
                - P & P
            \end{bmatrix}
        \left.\begin{bmatrix}
             \Sigma (x, u) & \0_{n \times m}\\
              \0_{n \times r} & \Lambda (u)
        \end{bmatrix}\right)
        \nonumber\\
        %--------------------------------------------
        &\quad =
         \operatorname{trace} (\Sigma (x, u)^{\top} P \Sigma (x, u))
         + \operatorname{trace}  (\Lambda (u)^{\top} P \Lambda (u))
        \nonumber\\
        %--------------------------------------------
        &\quad \leq \sigma_x^2
        + \frac{\sigma_\xi^2}{ m} \operatorname{trace}  \left(\frac{\partial x^\star}{\partial v} (v)^{\top} P  \frac{\partial x^\star}{\partial v} (v)\right)
         \leq \sigma_x^2 + \frac{\ell^2}{c^2} \sigma_\xi^2 \|P\|_2 
        %--------------------------------------------
        = \sigma_x^2 + \frac{\ell^2}{c^2} \sigma_\xi^2 .
    \end{align*}
Thus, repeating a similar calculation as~\eqref{pf3:NISS} for the auxiliary SDE~\eqref{pf1:trck_OU_SISC}, we have for~\eqref{eq1:SDE_OU_SI} and~\eqref{pf3:trck_OU_SISC},
\begin{align*}
    \mathcal{L}\| x_t  -  x^\star (u_t) \|_{2,P^{1/2}}^{2} 
    \le& - 2c \| x_t -x^\star (u_t) \|_{2,P^{1/2}}^2 
    + \sigma_x^2 + \frac{\ell^2}{c^2} \sigma_\xi^2
     + 2 \| x_t - x^\star (u_t) \|_{2, P^{1/2}} \| v(\theta(t), \xi_t, u_t) \|_{2, P^{1/2}}
    \nonumber\\
    %--------------------------------------------
    \le& - 2c\alpha \| x_t - x^\star (u_t) \|_{2,P^{1/2}}^2 
    + \sigma_x^2 + \frac{\ell^2}{c^2} \sigma_\xi^2
    + \frac{1}{2c (1-\alpha)}  \| v(\theta(t), \xi_t, u_t) \|_{2, P^{1/2}} \; \mbox{ a.s.}
\end{align*}
for each $\alpha \in (0,1)$. Similarly to the proof of Theorem~\ref{thm:NISS}, we have
\begin{align}\label{pf5:trck_OU_SISC}
            \E \bigl[ \|x_{t} - x^\star (u_t) \|^{2}_{2,P^{1/2}} \bigr]
            &\leq \E \bigl[ \|x_{0}-x^\star (u_0) \|^{2}_{2,P^{1/2}} \bigr]\e^{-2c\alpha t}
            +\frac{1}{\alpha}\left(\frac{\sigma_x^{2}}{2c} + \frac{\ell^2}{c^2} \frac{\sigma_\xi^2}{2c}\right)(1-\e^{-2c\alpha t})
            \nonumber\\
            &\quad+\frac{1}{1-\alpha} \frac{1}{2c}\int_{0}^{t}\e^{-2c\alpha(t-\tau)} \E \bigl [\| v(\theta(\tau), \xi_\tau, u_\tau) \|_{2, P^{1/2}}^2 \bigr]\ d\tau.
        \end{align}
for each $\alpha \in (0,1)$.

It remains to estimate an upper bound on $\E \bigl [\| v(\theta(t), \xi_t, u_t) \|_{2, P^{1/2}}^2 \bigr]$. From~\eqref{eq:Hess_OU_SISC} and~\eqref{pf41:trck_OU_SISC} with Young's inequality, we have
\begin{align*}
    \| v(\theta(t), \xi_t, u_t) \|_{2,P^{1/2}}^{2}
    &\le 2 \left\|\frac{\partial x^\star_k}{\partial u} (u_t) (\dot \theta(t) - c \xi_t )  \right\|_{2,P^{1/2}}^{2} +\frac{\sigma_\xi^4}{2m^2} \left\|\begin{bmatrix}
             \operatorname{trace} \bigl(\operatorname{Hess} (x_1^\star (u_t))\bigr) 
             \\ \vdots \\
             \operatorname{trace} \bigl(\operatorname{Hess} (x_n^\star (u_t))\bigr) 
        \end{bmatrix}\right\|_{2,P^{1/2}}^{2}
    \nonumber\\
    %--------------------------------------------
     &\le 2 \frac{\ell^2}{c^2} \| \dot \theta(t) - c \xi_t \|_2^2 + \frac{h_{\mathsf{OU}}^2}{2} \sigma_\xi^4.
\end{align*}
Compute with Young's inequality,
\begin{align*}
    \| \dot \theta(t) - c \xi_t \|_2^2
    %--------------------------------------------
    \le  2 \| \dot \theta(t) \|_2^2  + 2 c^2 \| \xi_t \|_2^2.
\end{align*}
For the OU process, we have
\begin{align*}
    \E \bigl[ \| \xi_t\|_2^2  \bigr]  
    %--------------------------------------------
    = \e^{-2ct} \E \bigl[ \| \xi_0 \|_2^2  \bigr] 
    +\frac{\sigma_\xi^2}{2c} (1-\e^{-2ct}).
\end{align*}
In summary, we obtain
\begin{align}\label{pf8:trck_OU_SISC}
    \E [ \| v(\theta(t), \xi_t, u_t) \|_{2,P^{1/2}}^{2} ]
    &\le 2 \frac{\ell^2}{c^2} \E[\| \dot \theta(t) - c \xi_t \|_2^2]  + \frac{h_{\mathsf{OU}}^2}{2} \sigma_\xi^4.
    \nonumber\\
    %--------------------------------------------
     &= 4 \ell^2 \e^{-2ct} \E \bigl[ \| \xi_0 \|_2^2  \bigr]
    + 2 \ell^2 \frac{\sigma_\xi^2}{c} (1-\e^{-2ct}) 
    + 4 \frac{\ell^2}{c^2} \| \dot \theta(t) \|_2^2 + \frac{h_{\mathsf{OU}}^2}{2} \sigma_\xi^4.
\end{align}
Combining~\eqref{pf5:trck_OU_SISC} and~\eqref{pf8:trck_OU_SISC} yields
\begin{align*}
      \E \bigl[ \|x_{t} - x^\star (u_t) \|^{2}_{2,P^{1/2}} \bigr]
            &\leq \E \bigl[ \|x_{0}-x^\star (u_0) \|^{2}_{2,P^{1/2}} \bigr]\e^{-2c\alpha t}
             +\frac{1}{\alpha}\left(\frac{\sigma_x^{2}}{2c} + \frac{\ell^2}{c^2} \frac{\sigma_\xi^2}{2c}\right)(1-\e^{-2c\alpha t})
            \nonumber\\
            &\quad +\frac{2}{1-\alpha} \frac{\ell^2}{c}  \E \bigl[ \| \xi_0 \|_2^2  \bigr] \int_{0}^{t}\e^{-2c\alpha(t-\tau)}  \e^{-2c\tau}\ d\tau
            \nonumber\\
            &\quad +\frac{1}{1-\alpha} \frac{\ell^2}{c} \frac{\sigma_\xi^2}{c} \int_{0}^{t}\e^{-2c\alpha(t-\tau)} (1-\e^{-2c\tau})\ d\tau
            \nonumber\\
            &\quad +\frac{2}{1-\alpha} \frac{\ell^2}{c^3}\int_{0}^{t}\e^{-2c\alpha(t-\tau)}  \| \dot \theta(\tau) \|_2^2\ d\tau
             +\frac{1}{1-\alpha} \frac{h_{\mathsf{OU}}^2}{4} \frac{\sigma_\xi^4}{c}  \int_{0}^{t}\e^{-2c\alpha(t-\tau)} \ d\tau
\end{align*}   
Computing the time integrations, we have~\eqref{eq:trck_OU_SISC}. 
Finally,~\eqref{eq2:trck_OU_SISC} is obtained by taking the limit superior of~\eqref{eq:trck_OU_SISC}.
\end{proof}

%%%%%%%%%%%%%%%%%%%%%%%%%%%%%%%%%%%%%%%%%%%%%%%%%%%%%%%%%%%%%%%%%%
%%%%%%%%%%%%%%%%%%%%%%%%%%%%%%%%%%%%%%%%%%%%%%%%%%%%%%%%%%%%%%%%%%

%%%%%%%%%%%%%%%%%%%%%%%%%%%%%%%%%%%%%%%%%%%%%%%%%%%%%%%%%%%%%%%%%%
%%%%%%%%%%%%%%%%%%%%%%%%%%%%%%%%%%%%%%%%%%%%%%%%%%%%%%%%%%%%%%%%%%

\section{Proof of Theorem~\ref{thm:trck_JD_SIDC}}\label{app:trck_JD_SIDC}

\begin{proof}
Consider an auxiliary dynamics of the SDE~\eqref{eq1:SDE_OU_SI} with~\eqref{eq:SDE_JD_SI}:
    \begin{subequations}\label{pf1:trck_JD_SIDC}
    \begin{align}
        d u_t &= - c (u_t - \theta (t) )dt + \sigma_u \operatorname{diag}(u_t \odot (a -u_t))^\frac{1}{2} d\bs_{t}^u, \\
         dx_{t}&=F(x_{t}, u_t) dt +  v(t) dt  + \Sigma (x_{t}, u_t) d\bs_{t}^x.
    \end{align}
    \end{subequations}
    When $v(t) \equiv \0_n$, this is nothing but~\eqref{eq1:SDE_OU_SI} with~\eqref{eq:SDE_JD_SI}. When $v(t) \equiv \dot x^\star (\theta (t))$ and noise free (i.e., $d\bs_{t}^u$ and $d\bs_{t}^x$ are both identical to zero and~$u_t \equiv \theta (t)$), we have~$x_{t}=x^\star (\theta (t))$. Namely, $x^\star (\theta (t))$ is a solution to \eqref{pf1:trck_JD_SIDC}.

    Similarly to the proof of Theorem~\ref{thm:trck_JD_SIDC}, we consider applying item~\ref{itm2:NISS} of Theorem~\ref{thm:NISS} to the auxiliary SDE~\eqref{pf1:trck_JD_SIDC}. From the Feller condition (Assumption~\ref{A7:trck_JD}), $u$-dynamics has a weak solution staying in $(\0_m, a)$. Therefore, from the Lipschitz continuity and linear growth assumptions for the drift vector field $F$ and the dispersion matrix $\Sigma$ (Assumption~\ref{asm:lip}), the auxiliary dynamics~\eqref{pf1:trck_JD_SIDC} satisfies the linear growth condition on $(\0_m, a)$, and thus a weak solution exists for any $u_0 \in (\0_m, a)$. In particular, $u_t \in (\0_m, a)$ for any $t > 0$ almost surely. Moreover, we can apply the Dynkin's formula (Proposition~\ref{prop:Dynkin}).
    
    Since~$u_t \in (0, a)$, we have
    \begin{align*}
        \operatorname{trace} (\operatorname{diag}(u_t \odot (a -u_t))) \le \frac{\|a\|_2^2}{4}.
    \end{align*}
    Also, from the bounded dispersion assumption (Assumption~\ref{A3:NISS}), we have~\eqref{pf2:trck_JD_SIDC}.
        \begin{align}\label{pf2:trck_JD_SIDC}
        &\operatorname{trace}\Biggl(\begin{bmatrix}
             \sigma_u \operatorname{diag}(u_t \odot (a -u_t))^\frac{1}{2} & \0_{m \times n}\\
            \0_{m \times n} & \Sigma (x, u)
        \end{bmatrix}^\top
        \begin{bmatrix}
                \frac{\ell}{c} \I_m & \0_{m \times n} \\ 
                \0_{n \times m} & \frac{c}{\ell} P
            \end{bmatrix}
        \begin{bmatrix}
             \sigma_u \operatorname{diag}(u_t \odot (a -u_t))^\frac{1}{2} & \0_{m \times n}\\
            \0_{m \times n} & \Sigma (x, u)
        \end{bmatrix}\Biggr)
        \nonumber\\
        %--------------------------------------------
        &=
       \frac{\ell}{c}\sigma_u^2 \operatorname{trace} (\operatorname{diag}(u_t \odot (a -u_t)) ) +
        \frac{c}{\ell} \operatorname{trace} (\Sigma (x, u)^{\top}\ P\ \Sigma (x, u))
        \leq \frac{\ell }{c} \frac{\|a\|_2^2}{4} \sigma_u^2 
         + \frac{c }{\ell} \sigma_x^2 .
    \end{align}
    
    Applying item~\ref{itm2:NISS} of Theorem~\ref{thm:NISS} to the auxiliary SDE~\eqref{pf1:trck_JD_SIDC}, we have for~\eqref{eq1:SDE_OU_SI} with~\eqref{eq:SDE_JD_SI} and~$x^\star (\theta(t))$,
    \begin{align}\label{pf3:trck_JD_SIDC}
             \E \biggl[ \left\| \begin{bmatrix} u_t \\ x_t \end{bmatrix} \! - \! \begin{bmatrix} \theta(t) \\ x^\star (\theta(t)) \end{bmatrix} \right\|_{2,P_\frac{c}{\ell}^{1/2}}^{2} \biggr]
            &\leq \E \biggl[ \left\| \begin{bmatrix} u_0 \\ x_0 \end{bmatrix} \! - \! \begin{bmatrix} \theta(0) \\ x^\star(\theta(0)) \end{bmatrix} \right\|_{2,P_\frac{c}{\ell}^{1/2}}^{2} \biggr] \e^{-c \alpha t}
            \nonumber\\
            &\quad +\frac{1}{\alpha} \left(\frac{\ell }{c} \frac{\|a\|_2^2}{4} \sigma_u^2 
         + \frac{c }{\ell} \sigma_x^2\right) (1-\e^{-c \alpha t})
            \nonumber\\
            &\quad +\frac{1}{c (1-\alpha)}\int_{0}^{t}\e^{-c \alpha(t-\tau)}\|\dot x^\star (\theta (\tau))\|_{2,(\frac{c}{\ell}P)^{1/2}}^{2}\ d\tau.
    \end{align}
    Combining~\eqref{pf4:trck_OU_SIDC}, \eqref{pf5:trck_OU_SIDC}, and~\eqref{pf3:trck_JD_SIDC} yields~\eqref{eq:trck_JD_SIDC}.
    Finally, \eqref{eq2:trck_JD_SIDC} is obtained by taking the limit superior of~\eqref{eq:trck_JD_SIDC}.
\end{proof}

%%%%%%%%%%%%%%%%%%%%%%%%%%%%%%%%%%%%%%%%%%%%%%%%%%%%%%%%%%%%%%%%%%
%%%%%%%%%%%%%%%%%%%%%%%%%%%%%%%%%%%%%%%%%%%%%%%%%%%%%%%%%%%%%%%%%%

%%%%%%%%%%%%%%%%%%%%%%%%%%%%%%%%%%%%%%%%%%%%%%%%%%%%%%%%%%%%%%%%%%
%%%%%%%%%%%%%%%%%%%%%%%%%%%%%%%%%%%%%%%%%%%%%%%%%%%%%%%%%%%%%%%%%%

\section{Proof of Theorem~\ref{thm:trck_JD_SISC}}\label{app:trck_JD_SISC}

\begin{proof}
For an auxiliary dynamics~\eqref{pf1:trck_OU_SISC} of the SDE~\eqref{eq1:SDE_OU_SI}, in the Jacobi diffusion case, we have~\eqref{pf4:trck_JD_new}. We consider applying item~\ref{itm2:NISS} of Theorem~\ref{thm:NISS} to the auxiliary dynamics~\eqref{pf1:trck_OU_SISC}. From the Feller condition (Assumption~\ref{A7:trck_JD}), $u$-dynamics has a weak solution staying in $(\0_m, a)$. Also, for continuity of $\frac{\partial^2 x_i^\star}{\partial u^2}$, $i=1,\dots,m$ (Assumption~\ref{A5:trck_OU_SISC} ), $v$ and $\Lambda$ in~\eqref{pf4:trck_JD_new} are bounded (when $u \in (\0_m, a)$). Therefore, from the Lipschitz continuity and linear growth assumptions for the drift vector field $F$ and the dispersion matrix $\Sigma$ (Assumption~\ref{asm:lip}), the auxiliary dynamics~\eqref{pf1:trck_OU_SISC} satisfies the linear growth condition on $(\0_m, a)$, and thus a weak solution exists for any $u_0 \in (\0_m, a)$. In particular, $u_t \in (\0_m, a)$ for any $t > 0$ almost surely. Moreover, we can apply the Dynkin's formula (Proposition~\ref{prop:Dynkin}).

Next, from the bounded dispersion assumption (Assumption~\ref{A3:NISS}) and~\eqref{pf32:trck_JD_new}, we have
        \begin{align*}
        &\operatorname{trace}\left(\begin{bmatrix}
              \Sigma (x, u) & \0_{n \times m}\\
              \0_{n \times r} & \Lambda (u)
        \end{bmatrix}^\top\right.
        \begin{bmatrix}
                P & - P \\
                - P & P
            \end{bmatrix}
        \left.\begin{bmatrix}
             \Sigma (x, u) & \0_{n \times m}\\
              \0_{n \times r} & \Lambda (u)
        \end{bmatrix}\right)
        \nonumber\\
        %--------------------------------------------
        &\quad = \operatorname{trace} (\Sigma (x, u)^{\top} P \Sigma (x, u))
        + \sigma_u ^2 \operatorname{trace}  \left(\operatorname{diag}(u_t \odot (a - u_t))^\frac{1}{2} \frac{\partial x^\star}{\partial u} (u)^{\top}\right. 
        \left. P \frac{\partial x^\star}{\partial u} (u) \operatorname{diag}(u_t \odot (a - u_t))^\frac{1}{2}\right)
        \nonumber\\
        %--------------------------------------------
        &\quad \leq \sigma_x^2 + \frac{3 \|a\|_2^2}{4} \frac{\ell^2}{c^2} \sigma_u^2 \| P \|_2
        %--------------------------------------------
        = \sigma_x^2 + \frac{3 \|a\|_2^2}{4} \frac{\ell^2}{c^2} \sigma_u^2.
    \end{align*}
Repeating a procedure for deriving~\eqref{pf5:trck_OU_SISC} in Theorem~\ref{thm:trck_OU_SISC}, we have
\begin{align}\label{pf5:trck_JD_new}
            \E \bigl[ \|x_{t} - x^\star (u_t) \|^{2}_{2,P^{1/2}} \bigr]
            &\leq \E \bigl[ \|x_{0}-x^\star (u_0) \|^{2}_{2,P^{1/2}} \bigr]\e^{-2c\alpha t}
             +\frac{1}{\alpha}\left(\frac{\sigma_x^{2}}{2c} + 
            \frac{3 \|a\|_2^2}{4} \frac{\ell^2}{c^2} \frac{\sigma_u^2}{2c}\right)(1-\e^{-2c\alpha t})
            \nonumber\\
            &\quad +\frac{1}{1-\alpha} \frac{1}{2c}\int_{0}^{t}\e^{-2c\alpha(t-\tau)} \E \bigl [\| v(\theta(\tau), u_\tau) \|_{2, P^{1/2}}^2 \bigr]\ d\tau
        \end{align}
for each $\alpha \in (0,1)$.

It remains to compute $\E \bigl [\| v(\theta(t), u_t) \|_{2,P^{1/2}}^{2} \bigr]$. From~\eqref{eq:Hess_JD_SISC} and~\eqref{pf31:trck_JD_new}, we have
\begin{align*}
    \| v(\theta(t), u_t ) \|_{2,P^{1/2}}^{2}
    \le& 2 \left\| \frac{\partial x^\star}{\partial \theta} (u_t) (- c (u_t -\theta(t)) ) \right\|_{2,P^{1/2}}^{2}
    + \frac{1}{2} \sigma_u^4 \left\|\sum_{i=1}^m u_{t,i} (a_i - u_{t,i}) \frac{\partial^2 x^\star (u_t)}{\partial u_{t,i}^2}\right\|_{2,P^{1/2}}^{2}
    \nonumber\\
    %--------------------------------------------
    \le& 2 \ell^2 \| u_t - \theta(t) \|_2^2 + \frac{h_{\mathsf{JD}}^2}{2} \sigma_u^4.
\end{align*}
We compute an upper bound on~$\E[\| u_t - \theta(t) \|_2^2]$. From the It\^o formula, we have
\begin{align*}
    \mathcal{L} \| u_t - \theta(t) \|_2^2
    &= 2 (u_t - \theta(t))^\top (d u_t - \dot \theta(t) dt) 
    + \sigma_u^2 \operatorname{trace} (\operatorname{diag}(u_t \odot (a - u_t))) dt
    \nonumber\\
    %--------------------------------------------
    &= - 2 c \| u_t - \theta(t) \|_2^2 dt 
    -  2 (u_t - \theta(t))^\top \dot \theta(t) dt
    \nonumber\\
    &\quad + \sigma_u^2 \operatorname{trace} (\operatorname{diag}(u_t \odot (a - u_t))) dt
    + 2 \sigma_u (u_t - \theta(t))^\top \operatorname{diag}(u_t \odot (a -u_t))^\frac{1}{2} d\bs_{t}^u
    \nonumber\\
    %--------------------------------------------
    &\le - c \| u_t - \theta(t) \|_2^2 dt + \frac{1}{c} \| \dot \theta(t) \|_2^2 dt + \frac{\| a\|_2^2}{4} \sigma_u^2 dt
    + 2 \sigma_u (u_t - \theta(t))^\top \operatorname{diag}(u_t \odot (a -u_t))^\frac{1}{2} d\bs_{t}^u
\end{align*}
where in the inequality, the Young's inequality and $\operatorname{trace} (\operatorname{diag}(u_t \odot (a - u_t))) \le \frac{\|a\|_2^2}{4}$ are used. Similarly to the proof of Theorem~\ref{thm:NISS}, this leads to
\begin{align*}
    &\E \bigl[ \| u_t - \theta(t) \|_2^2 \bigr]
    \le \e^{-c t} \E \bigl[ \| u_0 - \theta(0) \|_2^2\bigr]
    + \frac{1}{c} \int_0^t \e^{-c (t-\tau)}  \| \dot \theta(\tau) \|_2^2 d\tau 
    + \frac{\| a\|_2^2}{4} \frac{\sigma_u^2}{c} (1 - \e^{-ct}).
\end{align*}
In summary, we have
\begin{align}\label{pf7:trck_JD_new}
    \E \bigl[ \| v(\theta(t), u_t ) \|_{2,P^{1/2}}^{2}\bigr]
     &\le 2\ell^2 \e^{-c t} \E \bigl[ \| u_0 - \theta(0) \|_2^2\bigr] +   \ell^2 \frac{\| a\|_2^2}{2}\frac{\sigma_u^2}{c}  (1 - \e^{-ct}) 
     \nonumber\\
    &\quad +  \frac{2\ell^2}{c} \int_0^t \e^{-c (t-\tau)}  \| \dot \theta(\tau) \|_2^2 d\tau
     + \frac{h_{\mathsf{JD}}^2}{2} \sigma_u^4.
\end{align}
Combining \eqref{pf5:trck_JD_new} and \eqref{pf7:trck_JD_new}, we have~\eqref{eq2:trck_JD_SISC}.
Finally,~\eqref{eq2:trck_JD_SISC} is obtained by taking the limit superior of~\eqref{eq:trck_JD_SISC}.
\end{proof}

%%%%%%%%%%%%%%%%%%%%%%%%%%%%%%%%%%%%%%%%%%%%%%%%%%%%%%%%%%%%%%%%%%
%%%%%%%%%%%%%%%%%%%%%%%%%%%%%%%%%%%%%%%%%%%%%%%%%%%%%%%%%%%%%%%%%%

%%%%%%%%%%%%%%%%%%%%%%%%%%%%%%%%%%%%%%%%%%%%%%%%%%%%%%%%%%%%%%%%%%
%%%%%%%%%%%%%%%%%%%%%%%%%%%%%%%%%%%%%%%%%%%%%%%%%%%%%%%%%%%%%%%%%%

%%%%%%%%%%%%%%%%%%%%%%%%%%%%%%%%%%%%%%%%%%%%%%%%%%%%%%%%%%%%%%%%%%
%%%%%%%%%%%%%%%%%%%%%%%%%%%%%%%%%%%%%%%%%%%%%%%%%%%%%%%%%%%%%%%%%%

\section{Minkowski's Inequality}

In this section $\|\cdot\|$ denotes an arbitrary norm on $\real^n$.

\begin{lemma}[Minkowski's inequality~{\cite[Problems 5.10]{PB:17}}]
  Let $Y$ and $Z$ be random variables with values in $\real^n$ on a
  probability space $(\Omega, {\cal F}, {\mathbb P})$, and let $\E$ denote
  the expectation operator on $\real^n$.  If $\E[\|Y\|^p]^{1/p}$ and
  $\E[\|Z\|^p]^{1/p}$ are finite for some $p \in [1, \infty]$ (where
  $\E[\|\cdot\|^p]^{1/p}$ is read as $\mathrm{ess\,sup}\,\|\cdot\|$ when
  $p=\infty$), then for such $p$,
  \begin{align}\label{eq:Minkowski}
    \E[\|Y + Z\|^p]^{1/p} \le \E[\|Y\|^p]^{1/p} + \E[\|Z\|^p]^{1/p}.
  \end{align}
\end{lemma}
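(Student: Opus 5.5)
The statement to prove is Minkowski's inequality for $\real^n$-valued random variables under an arbitrary norm. The plan is to reduce it to the scalar Minkowski inequality in $L^p(\Omega,{\cal F},{\mathbb P})$.

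The first step is pointwise. For every $\omega\in\Omega$, the triangle inequality for $\|\cdot\|$ gives $\|Y(\omega)+Z(\omega)\|\le \|Y(\omega)\|+\|Z(\omega)\|$. Set $f:=\|Y\|$ and $g:=\|Z\|$. These are nonnegative real random variables; measurability holds because any norm on $\real^n$ is continuous. Since $t\mapsto t^p$ is nondecreasing on $[0,\infty)$, monotonicity of expectation then yields
\begin{align*}
\E[\|Y+Z\|^p]^{1/p}\le \E[(f+g)^p]^{1/p}.
\end{align*}
For $p=\infty$ the same pointwise bound gives $\mathrm{ess\,sup}\,\|Y+Z\|\le \mathrm{ess\,sup}(f+g)$. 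So the whole problem reduces to showing $\|f+g\|_{L^p}\le\|f\|_{L^p}+\|g\|_{L^p}$ for nonnegative scalars.

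The cases $p=1$ and $p=\infty$ are immediate. For $p=1$, use linearity of expectation. For $p=\infty$, note that $f\le \mathrm{ess\,sup} f$ a.s. and $g\le \mathrm{ess\,sup} g$ a.s., so $f+g\le \mathrm{ess\,sup} f+\mathrm{ess\,sup} g$ a.s.

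For $1<p<\infty$, I will first check that $\E[(f+g)^p]<\infty$. Convexity of $t^p$ gives $(f+g)^p\le 2^{p-1}(f^p+g^p)$, and the right side has finite expectation by hypothesis. Next I will write
\begin{align*}
\E[(f+g)^p]=\E[f(f+g)^{p-1}]+\E[g(f+g)^{p-1}].
\end{align*}
Applying H\"older's inequality with exponents $p$ and $q=p/(p-1)$ to each term bounds the right side by $\bigl(\E[f^p]^{1/p}+\E[g^p]^{1/p}\bigr)\E[(f+g)^p]^{(p-1)/p}$. If $\E[(f+g)^p]=0$ there is nothing to prove. Otherwise, I divide both sides by the finite positive quantity $\E[(f+g)^p]^{(p-1)/p}$ and obtain the claim.

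The main point needing care is this division step, specifically the a priori finiteness of $\E[(f+g)^p]$, which the convexity bound settles. Alternatively, one can simply cite the standard scalar Minkowski inequality, as the reference \cite[Problems 5.10]{PB:17} does, after the pointwise reduction.
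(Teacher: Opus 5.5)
Your proof is correct. The paper gives no argument of its own for this lemma and simply defers to the scalar Minkowski inequality in \cite[Problems 5.10]{PB:17}. Your route is the standard proof of that cited scalar result: the pointwise triangle inequality $\|Y+Z\|\le\|Y\|+\|Z\|$ reduces the problem to nonnegative real random variables, and then the H\"older argument applies, with $\E[(\|Y\|+\|Z\|)^p]<\infty$ secured by convexity before you divide. Your pointwise reduction is exactly what is needed to pass from the scalar statement in the reference to the $\real^n$-valued, arbitrary-norm version stated here.
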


\begin{lemma}[Minkowski-type bound]\label{lem:Minkowski}
Let $X$ and $Y$ be random variables with values in $\real^n$ on a probability
space $(\Omega,\mathcal{F},\mathbb{P})$. Let $b\in\real$ and let
$p\in[1,\infty]$. Assume $\E[\|Y\|^p]<\infty$ and
\begin{equation}\label{eq1:Minkowski-red}
\|X\| \le \|Y\| + |b| \quad \text{a.s.}
\end{equation}
Then $\E[\|X\|^p]<\infty$ and
\begin{equation*}
\E[\|X\|^p]^{1/p} \le \E[\|Y\|^p]^{1/p} + |b|.
\end{equation*}
\end{lemma}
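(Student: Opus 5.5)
The plan is to derive the bound pointwise from the almost-sure inequality $\|X\|\le\|Y\|+|b|$, splitting into the cases $p=\infty$ and $p\in[1,\infty)$. The essential point is that $|b|$ is a deterministic constant. Viewed as a random variable, it has $p$-th moment exactly $|b|^p$, so its $L^p$-norm equals $|b|$ for every $p$. The claim is therefore Minkowski's inequality~\eqref{eq:Minkowski} applied to the \emph{real-valued} random variables $\|Y\|$ and $|b|$, combined with monotonicity of the $L^p$-norm.

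\textbf{Case $p=\infty$.} Here $\E[\|\cdot\|^p]^{1/p}$ is read as the essential supremum. The almost-sure inequality gives $\|X\|\le \mathrm{ess\,sup}\,\|Y\|+|b|$ almost surely. Hence $\mathrm{ess\,sup}\,\|X\|\le \mathrm{ess\,sup}\,\|Y\|+|b|<\infty$, which proves both finiteness and the bound at once.

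\textbf{Case $p\in[1,\infty)$.} First, finiteness. Since $t\mapsto t^p$ is nondecreasing on $[0,\infty)$, we have $\|X\|^p\le(\|Y\|+|b|)^p$ almost surely. By convexity,
\begin{align*}
(\|Y\|+|b|)^p\le 2^{p-1}\bigl(\|Y\|^p+|b|^p\bigr),
\end{align*}
and the right-hand side is integrable because $\E[\|Y\|^p]<\infty$. So $\E[\|X\|^p]<\infty$.

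Second, the bound. Taking expectations and $p$-th roots gives
\begin{align*}
\E[\|X\|^p]^{1/p}\le \E\bigl[(\|Y\|+|b|)^p\bigr]^{1/p}.
\end{align*}
I then apply~\eqref{eq:Minkowski} with $n=1$ to the scalar random variables $\|Y\|$ and the constant $|b|$; both have finite $p$-th moments. This yields
\begin{align*}
\E\bigl[(\|Y\|+|b|)^p\bigr]^{1/p}\le \E[\|Y\|^p]^{1/p}+\bigl(|b|^p\bigr)^{1/p}=\E[\|Y\|^p]^{1/p}+|b|.
\end{align*}

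I expect no step to be a serious obstacle. The only care needed is bookkeeping. One must check that Minkowski's inequality is being invoked on nonnegative scalars, where the absolute value coincides with the quantity itself. One must also handle $p=\infty$ separately through the essential-supremum convention rather than through moments. Finally, measurability of $\|X\|$ is automatic, since norms are continuous and $X$ is a random variable.
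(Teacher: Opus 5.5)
Your proposal is correct and matches the paper's proof: $p=\infty$ is handled through essential suprema, and for $p\in[1,\infty)$ you use monotonicity of $t\mapsto t^p$ and then Minkowski's inequality on the scalar $\|Y\|$ and the constant $|b|$. The one difference is minor: your convexity bound $(\|Y\|+|b|)^p\le 2^{p-1}(\|Y\|^p+|b|^p)$ makes the finiteness of $\E[\|X\|^p]$ explicit, which the paper only asserts in passing (Minkowski's inequality would also give it).
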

\begin{proof}
The case $p=\infty$ is immediate from~\eqref{eq1:Minkowski-red}, since
taking essential suprema preserves the inequality.
Now let $p\in[1,\infty)$. From~\eqref{eq1:Minkowski-red} and monotonicity
of $t\mapsto t^p$ on $\real_{\geq0}$,
\begin{equation*}
\|X\|^p \le (\|Y\|+|b|)^p \quad \text{a.s.}
\end{equation*}
Taking expectations (which also shows $\E[\|X\|^p]<\infty$) and $p$th roots:
\begin{equation*}
\E[\|X\|^p]^{1/p} \le \E[(\|Y\|+|b|)^p]^{1/p}.
\end{equation*}
Applying Minkowski's inequality to the random variables $\|Y\|$ and the
constant $|b|$ gives
\begin{equation*}
\E[(\|Y\|+|b|)^p]^{1/p}
\le
\E[\|Y\|^p]^{1/p} + |b|.
\end{equation*}
This completes the proof.
\end{proof}

\newpage

%\bibliographystyle{plainurl+isbn}

% argument is your BibTeX string definitions and bibliography database(s)
%\bibliography{alias,Main,FB}
\bibliographystyle{unsrtnat}
\bibliography{alias,Main,FB}

@Book{FB:26-CTDS,
  author =	 {F. Bullo},
  title =	 {Contraction Theory for Dynamical Systems},
  year =	 2026,
  edition =	 {{1.3}},
  publisher =	 {Kindle Direct Publishing},
  ISBN =	 {979-8836646806},
  ISBN-hardocver = {979-8333395283},
  url =		 {https://fbullo.github.io/ctds},
  oldurl =	 {http://motion.me.ucsb.edu/book-ctds},
  funding =	 {FA9550-22-1-0059}
}

@article{AD-VC-AG-GR-FB:23f,
  author =	 {A. Davydov and V. Centorrino and A. Gokhale and G. Russo
                  and F. Bullo},
  title =	 {Time-Varying Convex Optimization: A Contraction and
                  Equilibrium Tracking Approach},
  oldtitle =	 {Contracting Dynamics for Time-Varying Convex
                  Optimization},
  journal =	 tac,
  year =	 2025,
  volume =	 70,
  number =	 11,
  pages =	 {7446-7460},
  arxivdoi =	 {10.48550/arXiv.2305.15595},
  doi =		 {10.1109/TAC.2025.3576043},
  funding =	 {FA9550-22-1-0059}
}

@Article{AG-AD-FB:23o,
  author =	 {A. Gokhale and A. Davydov and F. Bullo},
  title =	 {Contractivity of Distributed Optimization and {Nash}
                  Seeking Dynamics},
  journal =	 lcss,
  year =	 2023,
  volume =	 7,
  pages =	 {3896-3901},
  doi =		 {10.1109/LCSYS.2023.3341987},
  funding =	 {FA9550-22-1-0059},
  olddoi =	 {10.48550/arXiv.2309.05873},
}

@Article{ZM-FB-AGA:23r,
  author =	 {Z. Marvi and F. Bullo and A. G. Alleyne},
  title =	 {Control Barrier Proximal Dynamics: {A} Contraction
                  Theoretic Approach for Safety Verification},
  journal =	 lcss,
  year =	 2024,
  Volume =	 8,
  Pages =	 {880-885},
  doi =		 {10.1109/LCSYS.2024.3402188},
  funding =	 {FA9550-22-1-0059},
  olddoi =	 {10.48550/arXiv.2309.05873}
}

@Article{ZM-FB-AGA:25p,
  author =	 {Z. Marvi and F. Bullo and A. G. Alleyne},
  title =	 {Air Cooled Battery Pack Thermal Management via Control
                  Barrier Proximal Dynamics},
  journal =	 tcst,
  year =	 2025,
  month =	 jun,
  note =	 {Conditionally accepted},
  funding =	 {FA9550-22-1-0059}
}

@Article{ZM-FB-AGA:25ab,
  author =	 {Z. Marvi and F. Bullo and A. G. Alleyne},
  title =	 {Discrete Control Barrier Proximal Dynamics: {Quantized}
                  Multi-Actuator and Sampled-data Systems},
  journal =	 automatica,
  year =	 2025,
  note =	 {Submitted},
  funding =	 {FA9550-22-1-0059}
}

@Article{AC-BG-JC:17,
  title =	 {Saddle-point dynamics: {Conditions} for asymptotic
                  stability of saddle points},
  author =	 {A. Cherukuri and B. Gharesifard and J. Cortes},
  journal =	 sicon,
  year =	 2017,
  volume =	 55,
  number =	 1,
  pages =	 {486-511},
  doi =		 {10.1137/15M1026924},
  fbnote =	 {was AC-BG-JC:15}
}

@ARTICLE{GQ-NL:19,
  author =	 {G. Qu and N. Li},
  title =	 {On the Exponential Stability of Primal-Dual Gradient
                  Dynamics},
  year =	 2019,
  journal =	 csl,
  volume =	 3,
  number =	 1,
  pages =	 {43-48},
  doi =		 {10.1109/LCSYS.2018.2851375},
}

@book{HKK:02,
  AUTHOR =	 "H. K. Khalil",
  TITLE =	 "Nonlinear Systems",
  PUBLISHER =	 ph,
  YEAR =	 2002,
  EDITION =	 3,
  ISBN =	 0130673897,
}

@Article{JC-SKN:19,
  author =	 "J. Cort{\'e}s and S. K. Niederl{\"a}nder",
  title =	 "Distributed Coordination for Nonsmooth Convex
                  Optimization via Saddle-Point Dynamics",
  journal =	 jnls,
  year =	 2019,
  volume =	 29,
  number =	 4,
  pages =	 "1247--1272",
  doi =		 "10.1007/s00332-018-9516-4",
}

@INPROCEEDINGS{JW-NE:11,
  author =	 {J. Wang and N. Elia},
  booktitle =	 cdcecc,
  title =	 {A control perspective for centralized and distributed
                  convex optimization},
  year =	 2011,
  pages =	 {3800-3805},
  address =	 {Orlando, USA},
  doi =		 {10.1109/CDC.2011.6161503},
}

@article{WL-JJES:98,
  Author =	 {W. Lohmiller and J.-J. E. Slotine},
  Journal =	 automatica,
  Number =	 6,
  Pages =	 {683--696},
  Title =	 {On contraction analysis for non-linear systems},
  Volume =	 34,
  Year =	 1998,
  doi =		 {10.1016/S0005-1098(98)00019-3},
}

@ARTICLE{MF-SP-VMP-AR:18,
  author =	 {M. Fazlyab and S. Paternain and V. M. Preciado and
                  A. Ribeiro},
  journal =	 tac,
  title =	 {Prediction-Correction Interior-Point Method for
                  Time-Varying Convex Optimization},
  year =	 2018,
  volume =	 63,
  number =	 7,
  pages =	 {1973-1986},
  DOI =		 {10.1109/TAC.2017.2760256}
}

@Book{CV:09,
  author =	 {C. Villani},
  title =	 {Optimal Transport: Old and New},
  publisher =	 sv,
  year =	 2009,
  doi =		 {10.1007/978-3-540-71050-9}
}

@article{RIK:60,
  title =	 {Monotone operators and convex functionals},
  author =	 {R. I. Kachurovskii},
  journal =	 {Uspekhi Matematicheskikh Nauk},
  volume =	 15,
  number =	 4,
  pages =	 {213--215},
  year =	 1960,
  publisher =	 {Russian Academy of Sciences, Steklov Mathematical
                  Institute of Russian~…}
}

@ARTICLE{AS-EDA-SP-GL-GBG:20,
  author =	 {A. Simonetto and E. {Dall'Anese} and S. Paternain and
                  G. Leus and G. B. Giannakis},
  journal =	 {Proceedings of the IEEE},
  title =	 {Time-Varying Convex Optimization: Time-Structured
                  Algorithms and Applications},
  year =	 2020,
  volume =	 108,
  number =	 11,
  pages =	 {2032-2048},
  doi =		 {10.1109/JPROC.2020.3003156}
}

@article{NMB-JJES:20,
  author =	 {N. M. Boffi and J.-J. E. Slotine},
  title =	 {A Continuous-Time Analysis of Distributed Stochastic
                  Gradient},
  journal =	 {Neural Computation},
  volume =	 32,
  number =	 1,
  pages =	 {36-96},
  year =	 2020,
  doi =		 {10.1162/neco_a_01248},
}

@article{HT-SJC-JJES:21,
  author =	 {H. Tsukamoto and S.-J. Chung and J.-J. E Slotine},
  fullauthor =	 {Hiroyasu Tsukamoto and Soon-Jo Chung and Jean-Jacques
                  E. Slotine},
  doi =		 {10.1016/j.arcontrol.2021.10.001},
  year =	 2021,
  volume =	 52,
  pages =	 {135--169},
  title =	 {Contraction theory for nonlinear stability analysis and
                  learning-based control: {A} tutorial overview},
  journal =	 arc,
}

@article{QCP-NT-JJES:09,
  title =	 {A contraction theory approach to stochastic incremental
                  stability},
  author =	 {Q. C. Pham and N. Tabareau and J.-J. E.  Slotine},
  journal =	 tac,
  volume =	 54,
  number =	 4,
  pages =	 {816--820},
  year =	 2009,
  doi =		 {10.1109/tac.2008.2009619},
}

@ARTICLE{ZA:22,
  author =	 {Z. Aminzare},
  journal =	 csl,
  title =	 {Stochastic Logarithmic {Lipschitz} Constants: {A} Tool to
                  Analyze Contractivity of Stochastic Differential
                  Equations},
  year =	 2022,
  volume =	 6,
  pages =	 {2311-2316},
  doi =		 {10.1109/LCSYS.2022.3148945},
}

@ARTICLE{NKD-SZK-MRJ:19,
  author =	 {N. K. Dhingra and S. Z. Khong and M. R. Jovanovi{\'c}},
  journal =	 tac,
  title =	 {The Proximal Augmented {L}agrangian Method for Nonsmooth
                  Composite Optimization},
  year =	 2019,
  volume =	 64,
  number =	 7,
  pages =	 {2861-2868},
  doi =		 {10.1109/TAC.2018.2867589}
}

@article{SSA-SR:10,
  author =	 {S. S. Ahmad and S. Raha},
  fullauthor =	 {Ahmad, SK. SAFIQUE and Raha, Soumyendu},
  title =	 {On estimation of transient stochastic stability of linear
                  systems},
  volume =	 10,
  DOI =		 {10.1142/s0219493710003017},
  number =	 03,
  journal =	 {Stochastics and Dynamics},
  year =	 2010,
  pages =	 {385-405}
}

@article{FB-IG-AG:12,
  title =	 {Convergence to equilibrium in {Wasserstein} distance for
                  {Fokker}-{Planck} equations},
  volume =	 263,
  DOI =		 {10.1016/j.jfa.2012.07.007},
  number =	 8,
  journal =	 {Journal of Functional Analysis},
  author =	 {F. Bolley and I. Gentil and A. Guillin},
  year =	 2012,
  pages =	 {2430-2457}
}

@book{BO:13,
  title =	 {Stochastic Differential Equations: an Introduction with
                  Applications},
  author =	 {B. {\O}ksendal},
  year =	 2013,
  publisher =	 springer,
  ISBN =	 9783662036204,
  edition =	 6,
}

@article{JB-JJES:19,
  author =	 {J. Bouvrie and J.-J. Slotine},
  fullauthor =	 {Bouvrie, Jake and Slotine, Jean-Jacques},
  title =	 {Wasserstein contraction of stochastic nonlinear systems},
  journal =	 {arXiv preprint arXiv:1902.08567},
  year =	 2019
}

@ARTICLE{EDA-AS-SB-LM:20,
  author =	 {E. Dall'Anese and A. Simonetto and S. Becker and
                  L. Madden},
  journal =	 {IEEE Signal Processing Magazine},
  title =	 {Optimization and Learning With Information Streams:
                  {T}ime-varying algorithms and applications},
  year =	 2020,
  volume =	 37,
  number =	 3,
  pages =	 {71-83},
  doi =		 {10.1109/MSP.2020.2968813}
}

@inbook{FB:08,
  author =	 {Bolley, F.},
  title =	 {Separability and completeness for the {Wasserstein}
                  distance},
  ISBN =	 9783540779131,
  DOI =		 {10.1007/978-3-540-77913-1_17},
  booktitle =	 {Séminaire de Probabilités XLI},
  series =	 {Lecture Notes in Mathematics},
  year =	 2008,
  pages =	 {371-377}
}

@article{CRG-RMS:84,
  title =	 {A class of {Wasserstein} metrics for probability
                  distributions},
  volume =	 31,
  number =	 2,
  journal =	 {Michigan Mathematical Journal},
  author =	 {Givens, C. R. and Shortt, R. M.},
  year =	 1984,
  doi =		 {10.1307/mmj/1029003026}
}

@article{PHAN:21,
  author =	 {P. H. A. Ngoc},
  title =	 {Contraction of stochastic differential equations},
  volume =	 95,
  DOI =		 {10.1016/j.cnsns.2020.105613},
  journal =	 {Communications in Nonlinear Science and Numerical
                  Simulation},
  year =	 2021,
  pages =	 105613
}

@book{IK-SES:14,
  title =	 {Brownian Motion and Stochastic Calculus},
  author =	 {Karatzas, I. and Shreve, S.},
  year =	 2014,
  publisher =	 springer,
  ISBN =	 {978-1-4612-0949-2}
}

@inbook{SC-RK:21,
  title =	 {Modeling the Intraday Electricity Demand in {Germany}},
  ISBN =	 {9783030627324},
  DOI =		 {10.1007/978-3-030-62732-4_1},
  booktitle =	 {Mathematical Modeling, Simulation and Optimization for
                  Power Engineering and Management},
  publisher =	 {Springer International Publishing},
  author =	 {S. Coskun and R. Korn},
  year =	 {2021},
  pages =	 {3-23},
}

@book{HB:11,
  title =	 {Functional Analysis, Sobolev Spaces and Partial
                  Differential Equations},
  ISBN =	 9780387709147,
  DOI =		 {10.1007/978-0-387-70914-7},
  publisher =	 springer,
  author =	 {Brezis, H.},
  year =	 2011
}

@book{PB:17,
  title =	 {Probability and Measure},
  author =	 {P. Billingsley},
  edition =	 3,
  year =	 1995,
  ISBN =	 {0-471-00710-2},
  publisher =	 {John Wiley \& Sons}
}

@article{NT-JJS-QCP:10,
  author =	 {N. Tabareau and J.J. Slotine and Q.-C. Pham},
  fullauthor =	 {Tabareau, Nicolas and Slotine, Jean-Jacques and Pham,
                  Quang-Cuong},
  title =	 {How Synchronization Protects from Noise},
  volume =	 6,
  DOI =		 {10.1371/journal.pcbi.1000637},
  number =	 1,
  journal =	 {PLoS Computational Biology},
  year =	 2010,
  pages =	 {e1000637}
}

@article{ZA-VS:22,
  author =	 {Z. Aminzare and V. Srivastava},
  fullauthor =	 {Aminzare, Zahra and Srivastava, Vaibhav},
  title =	 {Stochastic synchronization in nonlinear network systems
                  driven by intrinsic and coupling noise},
  volume =	 116,
  DOI =		 {10.1007/s00422-022-00928-7},
  number =	 2,
  journal =	 {Biological Cybernetics},
  year =	 2022,
  pages =	 {147-162}
}

@article{AA-JC:25,
  author =	 {A. Allibhoy and J. Cortés},
  fullauthor =	 {Allibhoy, Ahmed and Cortés, Jorge},
  title =	 {Anytime solvers for variational inequalities: The
                  (recursive) safe monotone flows},
  volume =	 177,
  DOI =		 {10.1016/j.automatica.2025.112287},
  journal =	 automatica,
  year =	 2025,
  pages =	 112287
}

@article{PHAN-LTH:26,
  author =	 {P. H. A. Ngoc and L. T. Hieu},
  title =	 {Contraction analysis of stochastic functional
                  differential equations},
  DOI =		 {10.1080/00207721.2026.2625788},
  journal =	 {International Journal of Systems Science},
  year =	 2026,
  pages =	 {1-10}
}

@article{LC-FH-EM-LJR:25,
  title =	 {Monotone Multispecies Flows},
  author =	 {L. Conger and F. Hoffmann and E. Mazumdar and
                  L. J. Ratliff},
  fullauthor =	 {Lauren Conger and Franca Hoffmann and Eric Mazumdar and
                  Lillian J. Ratliff},
  journal =	 {arXiv preprint},
  year =	 2025,
  doi =		 {10.48550/arXiv.2506.22947},
}

@STRING{tcst = "IEEE Transactions on Control Systems Technology"}

@STRING{tac  = "IEEE Transactions on Automatic Control"}

@string{csl = "IEEE Control Systems Letters"}

@string{lcss = "IEEE Control Systems Letters"}

@String{arc = "Annual Reviews in Control"}

@STRING{sicon = "SIAM Journal on Control and Optimization"}

@STRING{automatica = "Automatica"}

@string{science = "Science"}

@STRING{cdcecc = "{IEEE} Conf.\ on Decision and Control and European Control Conference"}

@STRING{sv = "Springer"}

@STRING{springer = "Springer"}

@STRING{PH = "Prentice Hall"}

@STRING{wiley = "John Wiley \& Sons"}

@STRING{jnls    = "Journal of Nonlinear Science"}

\end{document}